\documentclass[11pt]{article}
\usepackage{eurosym}
\usepackage{natbib} 
\usepackage{amsmath}
\usepackage{amssymb}
\usepackage{amsthm}
\usepackage{mathtools}
\usepackage{dutchcal}
\usepackage[scr=rsfs]{mathalpha}
\usepackage{bm}
\usepackage{graphicx}
\usepackage[singlespacing]{setspace}
\usepackage[bottom,multiple]{footmisc}
\usepackage[justification=centering]{caption}
\usepackage{subcaption}
\usepackage{setspace}
\usepackage{color}
\usepackage{ragged2e}
\usepackage{bigstrut}
\usepackage{lscape}
\usepackage{arydshln}
\usepackage{booktabs}                                         
\usepackage{multirow}                                          
\usepackage[dvipsnames]{xcolor}
\usepackage{comment}
\usepackage{relsize}
\usepackage{soul}

\definecolor{cranberry}{HTML}{DC143C}
\DeclareMathOperator*{\argmin}{arg\,min}

\RequirePackage[breaklinks, colorlinks,citecolor=Mahogany,urlcolor=gray,linkcolor=cranberry]{hyperref}

\newtheorem{theorem}{Theorem}

\newtheorem{proposition}[theorem]{Proposition}

\newcommand{\vm}{\vspace{.1in}}

\def\toprule{\hline\hline}
\def\midrule{\hline}
\def\bottomrule{\hline\hline}
\def\tablenotes{\\ \noindent \justifying \small}

\def\toprule{\hline\hline\addlinespace[0.2cm]}
\def\midrule{\addlinespace[0.1cm]\hline\addlinespace[0.1cm]}
\def\bottomrule{\addlinespace[0.2cm]\hline\hline}
\def\tablenotes{\\ \vm \noindent \justifying \footnotesize Notes: }

\renewcommand{\today}{\ifcase \month \or January \or February \or March \or %
April \or May \or June \or July \or August \or September \or October \or November \or %
December \fi \number \year}

\usepackage{titling}

\begin{document}

\title{Which Green Technology to Subsidize? \\ Evidence from Electric Vehicles in South Korea\thanks{We thank Soren Anderson, Luming Chen, Charles Murry, and Jingyuan Wang for their valuable comments. We are also grateful to seminar participants at the University of Michigan IO Lunch, Korea University, Yonsei University, MSU/UM EEE day, North America Summer Meeting of the Econometric Society, and IIOC. Sukyung Joo, Beomsu Kim, Yongrok Kim, and Chaeeun Shin provided outstanding research assistance. We gratefully acknowledge the 2024 Data Voucher Program from the Ministry of Science and ICT of South Korea and the BK21 FOUR program, funded by the Ministry of Education and the National Research Foundation of South Korea. All authors equally contributed to this paper.}} 
\author{Youngjin Hong\thanks{%
Department of Economics, University of Michigan, E-mail: \texttt{yjinhong@umich.edu}.} \and
In Kyung Kim\thanks{%
Department of Economics, Sogang University, E-mail: \texttt{inkim@sogang.ac.kr}.}
\and
Frank Verboven\thanks{%
Department of Economics, KU Leuven and CEPR, E-mail: \texttt{frank.verboven@kuleuven.be}.}}
\maketitle
\thispagestyle{empty}

\begin{abstract}
We develop a framework to compare the relative effectiveness of subsidizing alternative emission-reducing technologies. We show that an intermediate technology may reduce emissions more effectively than the cleanest technology if it induces sufficiently greater substitution away from the prevailing high-emission technology. We apply the framework to the South Korean passenger vehicle market using a demand model that incorporates mileage heterogeneity, an important determinant of fuel-type choice. First, reallocating existing subsidies from battery electric vehicles (BEVs), the cleanest technology, to hybrid electric vehicles (HEVs), an intermediate technology, would reduce total greenhouse gas emissions by an additional 47\%. Second, for a BEV-focused subsidy policy to outperform an HEV-focused policy, the carbon intensity of electricity generation would need to fall by approximately 45\%. Our findings suggest that HEV subsidies remain more effective than BEV subsidies until consumers become sufficiently willing to switch to BEVs or electricity generation becomes sufficiently decarbonized.

\bigskip
\bigskip

\noindent \textbf{Keywords}: electric vehicles; life-cycle greenhouse gas emissions; vehicle purchase subsidies; demand estimation; consumer mileage heterogeneity
\\
\\
\noindent \textbf{JEL Codes}: D12; H23; L62; Q58
\end{abstract}

\onehalfspacing

\clearpage

\section{Introduction}
\setcounter{page}{1}

The transition from high-emission to low-emission technologies rarely occurs in a single step. The cleanest technologies typically coexist with both dirty and intermediate technologies. To advance progress toward net-zero greenhouse gas (GHG) emissions, governments often concentrate policy support on the technological frontier with the lowest emissions. Although ranking technologies by their stand-alone emissions provides a simple and salient criterion for allocating policy support, a technology's environmental effectiveness also depends on the extent to which it displaces dirtier technologies. A subsidy for an intermediate technology can therefore generate larger emission reductions than a subsidy for a frontier technology if it induces substantially greater substitution away from prevailing dirty technologies.

To evaluate the relative effectiveness of subsidies for alternative emission-reducing technologies, this paper focuses on the automobile market. This market offers considerable scope for emission savings because of multiple emission technologies and substantial substitution possibilities among differentiated products. In particular, battery electric vehicles (BEVs), powered entirely by rechargeable batteries, are widely viewed as the long-run future of the automotive industry, replacing conventional internal combustion engine vehicles (ICEVs). Accordingly, many governments have promoted BEV adoption through generous purchase subsidies, tax incentives, and investments in charging infrastructure.\footnote{For an overview of policies in EU countries and the US, see \citet{ACEA2025} and \citet{AFDC2026_US}, respectively.} By contrast, hybrid electric vehicles (HEVs), which combine an internal combustion engine with an electric motor, have received less policy attention.
 
Despite substantial government incentives, the expansion of charging networks, and technological improvements such as faster charging and longer driving ranges, BEV adoption has fallen short of policy targets. Explanations include high purchase prices even net of subsidies, continued consumer concerns about charging convenience and battery safety, and regulatory uncertainty.\footnote{For example, in March 2025, the EU amended its CO$_2$ regulation to provide automakers with additional compliance flexibility by allowing them to average emission performance over the period 2025-2027 rather than requiring compliance in 2025 alone. In the United States, qualified BEVs were eligible for a federal tax credit of up to \$7\,500 for vehicles registered before October 2025, but eligibility expired under the \emph{One Big Beautiful Bill Act}.} At the same time, despite receiving substantially less policy support than BEVs, HEVs have continued to gain market share and outsell BEVs across most major economies. For example, in the United States, BEV market share growth has recently slowed, with the share reaching 7.8\% in 2024, whereas the HEV market share has grown more rapidly, reaching 10.6\%. The gap between HEV and BEV market shares is even more pronounced in several European countries, including Germany, France, Italy, and the United Kingdom, as well as in Asian countries such as Japan, South Korea, and India. China is a notable exception, where HEVs capture only a small market share while BEVs account for nearly 30\% of new vehicle sales.\footnote{Figure \ref{fig: BEV and HEV share} in Appendix \ref{sec: Additional figures and tables} shows the evolution of BEV and HEV sales shares across these economies.}

The growing importance of HEVs raises the question of whether policies promoting HEV adoption may reduce GHG emissions more effectively than policies focused exclusively on BEVs. Relative to BEVs, HEVs are typically less expensive and may better align with current consumer preferences, making them potentially more effective at inducing substitution away from higher-emission ICEVs. Moreover, under current production technologies and electricity generation mixes across different energy sources, the lifecycle GHG emission advantage of BEVs over HEVs may be smaller than often presumed. While BEVs produce no tailpipe emissions, electricity generation remains carbon intensive in many countries, and battery production is itself highly carbon intensive. These considerations suggest that subsidizing HEVs rather than BEVs may, under certain conditions, lead to larger overall reductions in GHG emissions.

We develop and empirically implement a framework to compare the relative effectiveness of subsidies for alternative emission-reducing technologies (in our application, HEVs versus BEVs). We begin with a theoretical model that highlights a possible tradeoff: while one technology may generate lower stand-alone emissions per vehicle, the alternative technology may yield greater diverted emissions by inducing more substitution away from the prevailing high-emission technology. 

Next, we implement the model to examine the South Korean new passenger vehicle market, where BEVs have received substantial government subsidies, whereas HEVs have received little policy support. We estimate a random coefficients (RC) logit demand system (\citealp{Berry94, Berry95}) using product-province-year-level data from 2012 to 2023 on vehicle sales, prices, and characteristics. We combine these data with individual-level data on annual mileage and second-choice survey responses by adding micro-moments \citep[e.g.,][]{conlon2025incorporating,grieco2024evolution} to account for rich sources of observed and unobserved consumer heterogeneity. The estimates show that consumers perceive vehicles with the same fuel type as closest substitutes. More interestingly, across fuel types, HEVs are perceived as closer substitutes for ICEVs than BEVs are. Mileage heterogeneity plays an important role in the estimated substitutability across fuel types: consumers who drive more place greater value on vehicles with low operating costs such as BEVs and HEVs. 

We subsequently compare the emission effects of subsidizing HEVs instead of BEVs, accounting for price responses in an equilibrium model of Bertrand price-setting firms. We construct comprehensive product-year-level lifecycle GHG emission measures that vary across individuals because of differences in vehicle usage (mileage). We incorporate two types of emissions: fuel-cycle emissions, relating to (heterogeneous) usage, and vehicle-cycle emissions, relating to production and recycling. Under South Korea's current electricity generation mix, lifecycle emissions from BEVs are on average 38\% lower than those from ICEVs and 15\% lower than those from HEVs.

Despite the lower per-vehicle emissions of BEVs relative to HEVs, we find that reallocating the existing subsidy budget from BEVs to a uniform HEV subsidy would reduce total emissions by an additional 47\%. This advantage arises because HEVs induce greater substitution away from conventional ICEVs than BEVs do. In contrast, BEV subsidies primarily generate within-BEV substitution, thereby attenuating their overall emission benefits.

We then consider alternative BEV emission levels under electricity generation mixes matching those observed in other countries. While South Korea's electricity generation mix is currently comparable to those of the United States and Germany, it relies much less on fossil fuels than the mixes in China and India. We find that, for BEV subsidies to become more effective than HEV subsidies, South Korea's electricity generation mix would need to become at least 45\% cleaner through decarbonization of electricity supply, approximately reaching the level observed in Portugal. Only a few countries, such as Norway, Brazil, or France, currently have even cleaner electricity generation mixes. Conversely, our findings imply that BEV subsidies would generate only limited emission reductions, or could even increase emissions, if electricity generation were as carbon-intensive as in countries such as India and Kazakhstan.

In sum, our analysis establishes that subsidies should not necessarily target the technology with the lowest per-vehicle emissions. Subsidies for intermediate emission-reducing technologies may be more effective if they induce greater substitution away from the prevailing high-emission technology. Furthermore, this tradeoff may shift in favor of BEVs only when a country's electricity generation mix becomes sufficiently less carbon-intensive. The fact that many countries continue to prioritize BEVs suggests that these policies may reflect industrial policy objectives in addition to environmental concerns. We return to this in the concluding section.


\paragraph{Related literature}

Our paper relates to several strands in the literature. First, a large literature studies the effects of government incentives to promote the demand for electric vehicles. Early work examines policies targeting HEVs.\footnote{See in particular \citet{chandra2010green,beresteanu2011gasoline,gallagher2011giving,sexton2014conspicuous,heutel2015consumer,gulati2017tax}. Most recently, \citet{langford2023quantifying} conduct a comprehensive examination of the economic benefits of HEVs in a flexible demand framework for California.} More recent work studies policies targeting BEVs.\footnote{Part of this literature focuses on the impact of these policies on BEV adoption \citep[e.g.,][]{deshazo2017designing, clinton2019providing, muehlegger2022subsidizing, barwick2024attribute}, as well as the role of heterogeneity in adoption \citep[e.g.,][]{davis2025political, archsmith2022future, gillingham2023has, NBERw29842}. Other studies focus more on the implications of these policies for emissions \citep[e.g.,][]{holland2016there, xing2021does, muehlegger2023correcting, fournel:hal-04971697, allcott2024effects}. Finally, several studies examine the complementary role of charging networks \citep[e.g.,][]{li2017market, zhou2018technology, li2019compatibility, springel2021network, fournel:hal-04971697}.} 
This literature focuses on the impact of promoting a single technology presumed to be the cleanest available one, initially HEVs and subsequently BEVs.

Our paper instead highlights the importance of comparing the relative effectiveness of subsidies for the cleanest technology (BEVs) and an intermediate technology (HEVs). We show theoretically that this comparison depends not only on the technologies' stand-alone emission savings, but also on their capacity to induce substitution away from the prevailing high-emission technology (conventional ICEVs). We then demonstrate empirically for the South Korean passenger vehicle market that a subsidy for the intermediate technology (HEVs) reduces emissions more effectively than a subsidy for the cleanest technology (BEVs), because the intermediate technology diverts more sales from the prevailing high-emission technology (ICEVs).

Second, a related literature evaluates the environmental effects of policies promoting BEVs, with particular attention to the measurement of emissions.\footnote{See, among others, \citet{holland2016there, xing2021does, muehlegger2023correcting, guo2023welfare, meng2026pollution, fournel:hal-04971697}.} Emissions measurement has become increasingly comprehensive, as in the recent work of \citet{allcott2024effects} and \citet{heid2025equilibrium}. Notably, \citet{holland2016there} account for the carbon intensity of electricity generation. But across U.S. states it plays only a limited role in explaining differences in local environmental benefits, compared with differences in local pollution damages. We show that carbon intensity becomes critical when considering the broader range of electricity generation mixes across countries and over the course of the energy transition. In particular, BEV subsidies become more effective than HEV subsidies only when the electricity generation mix is sufficiently clean.

Third, there is a broader literature on technical change and the transition from dirty to clean energy inputs \citep[e.g.,][]{acemoglu2012environment, acemoglu2016transition, papageorgiou2017substitution}. This work focuses on dynamic innovation incentives and shows that the gains from environmental policy depend on the input elasticity of substitution and the productivity gap between clean and dirty technologies. 
Our work is complementary to this literature: we highlight the importance of demand-side substitution in consumer markets characterized by imperfect competition. The automobile market provides a particularly suitable setting because it has been a major focus of environmental policy, with well-established empirical methodologies to recover detailed substitution patterns across multiple technologies.

\vspace{1em}
\noindent
The remainder of this paper is organized as follows. Section \ref{sec: theory} presents the theoretical model highlighting the mechanisms underlying the comparison of subsidies for two emission-reducing technologies. Section \ref{sec: background} covers relevant industry background, while Section \ref{sec: data} describes the data used in the empirical analysis. Section \ref{sec: model and empiric} introduces the demand model and discusses the estimation results, and Section \ref{sec: environmental evaluation} conducts the counterfactual analysis comparing emissions under BEV and HEV subsidies. Finally, Section \ref{sec: conclusion} concludes.

\section{Theoretical framework}\label{sec: theory}

This section develops a theoretical framework to compare the relative effectiveness of subsidizing alternative emission-reducing technologies, in our case HEVs and BEVs. The framework serves as a guide to identify the key channels that operate in our more elaborate empirical model. We start from a simple setup with three products, homogeneous vehicle usage, and full subsidy pass-through, and subsequently generalize it to multiple products, heterogeneous vehicle usage, and equilibrium pass-through under imperfect competition.

\subsection{Basic setup}
Consider three substitute products, namely an ICEV, an HEV, and a BEV, indexed by $j \in \{I,H,B\}$, in a market of size $M$. Each product generates a negative externality due to emissions from production and usage. Let $e_j$ denote emissions per unit of product $j$.

Assume that the ICEV represents the dirtiest technology, so that $e_j < e_I$ for $j \in \{ H,B \}$. The relative cleanliness of the HEV and the BEV depends on the phase of the energy transition. Since the BEV entails higher emissions during production than the HEV, it is possible that $e_H < e_B$ in the early phase of the energy transition, when electricity generation still relies heavily on fossil fuels. By contrast, in the later phase, when the energy mix relies more heavily on renewables, we have $e_B < e_H$.

To reduce emissions, the government considers a subsidy policy $\bm{\tau}=(\tau_I,\tau_H,\tau_B)$, where $\tau_j$ denotes the per-unit subsidy for product $j$. Consequently, the net consumer price of product $j$ is $p_j=p^G_j-\tau_j$ where $p^G_j$ is the gross price set by firms. Let $s_j(\mathbf{p})$ denote the market share of product $j$ as a function of the consumer price vector $\mathbf{p}$, with $\frac{\partial s_j}{\partial p_j}<0$ and $\frac{\partial s_k}{\partial p_j}>0$ for $k \neq j$. Total emissions and total government spending are given, respectively, by
\begin{equation*}
   E(\bm{\tau}) = M\sum_{k\in\{I,H,B\}}e_k s_k(\mathbf{p}(\bm{\tau}))  \quad \text{and} \quad G(\bm{\tau}) = M\sum_{k\in\{I,H,B\}} \tau_k s_k(\mathbf{p}(\bm{\tau})),
\end{equation*}
where $\mathbf{p}(\bm{\tau})$ denotes the vector of consumer prices under $\bm{\tau}$. For simplicity, assume that the subsidy policy does not affect firms' equilibrium gross prices $p^G_j$, so that there is complete subsidy pass-through and no price response of non-subsidized products to the subsidy policy (as under perfect competition with constant marginal costs).

First, consider the stand-alone impact of a subsidy $\tau_j$ for either the HEV or the BEV, $j \in \{ H,B \} $, on total emissions. To highlight the role of substitution, recall the commonly used diversion ratio from $j$ to $k$, $D_{j\to k} \equiv  \frac{\partial s_k/\partial p_j}{-\partial s_j / \partial p_j}$; this is the fraction of sales diverted from $j$ to $k$ after a price increase of product $j$. Now, define the diverted emissions of product $j$ as:
\begin{equation*}
    e_j^D \equiv  \sum_{k\neq j}D_{j\to k} e_k.
\end{equation*}
This is the weighted average of emissions from alternative products $k\neq j$, where the diversion ratios serve as weights. Intuitively, $e^D_j$ measures the expected emissions that would be generated by the marginal consumer of product $j$ absent the subsidy increase.

Applying these definitions, the impact of a subsidy for product $j \in \{ H,B \}$ on total emissions is given by:
\begin{eqnarray}
    \frac{\partial E}{\partial \tau_j} & = & - M \sum_{k\in\{I,H,B\}} e_k\frac{\partial s_k}{\partial p_j}  \nonumber \\ 
    & = & M \frac{\partial s_j}{\partial p_j} \left( e^D_j - e_j \right).  \label{eqn: partial E partial tau}
\end{eqnarray} 
This implies that a subsidy for product $j \in \{ H,B \}$ reduces total emissions if its net diverted emissions are positive, $e^D_j-e_j>0$, i.e., if product $j$'s own emissions are lower than its diverted emissions.

We now compare the \textit{relative} effectiveness of subsidies for the HEV and the BEV. To this end, assume that the ICEV is neither taxed nor subsidized, and consider a budget-neutral subsidy reallocation from the BEV to the HEV. Formally, let $\bm{\tau}=(0,\tau_H,\tau_B)$ and let $\tau_B=\tau_B(\tau_H)$ be implicitly defined by the government budget constraint $G\left(0,\tau_H,\tau_B \right)=\bar{G}$. 

In general, a budget-neutral subsidy reallocation from the BEV to the HEV reduces emissions if the marginal abatement return of an HEV subsidy, $\frac{-\partial E/\partial \tau_H}{\partial G/\partial\tau_H}$, exceeds that of a BEV subsidy; see Appendix \ref{app: prop 1 derive}. Under a baseline policy with no subsidies, this comparison amounts to the condition in the following proposition.


\begin{proposition}\label{p: per-unit subsidy}
Consider the baseline policy with no subsidies, $\bm{\tau} = (0,0,0)$. A subsidy for product $j \in \{H,B\}$, financed by a tax on product $k \in \{H,B\}$ with $k \neq j$, reduces emissions if and only if
\begin{equation}\label{eqn: main}
     -\frac{1}{s_j}\frac{\partial s_j}{ \partial p_j} (e_j^D-e_j)  > -\frac{1}{s_k}\frac{\partial s_k}{ \partial p_k} (e_k^D-e_k) . 
\end{equation}

\end{proposition}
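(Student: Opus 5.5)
The plan is to treat the reallocation as a one-dimensional perturbation along the budget constraint and sign the total derivative of emissions at the origin. Let $\tau_k = \tau_k(\tau_j)$ be defined implicitly by $G(\tau_j,\tau_k)=\bar G$ (with the ICEV component fixed at zero), where a negative $\tau_k$ is a tax. By the implicit function theorem,
\begin{equation*}
\frac{d\tau_k}{d\tau_j} = -\frac{\partial G/\partial \tau_j}{\partial G/\partial \tau_k},
\end{equation*}
provided $\partial G/\partial\tau_k \neq 0$. The reallocation lowers emissions if and only if
\begin{equation*}
\frac{dE}{d\tau_j} = \frac{\partial E}{\partial \tau_j} + \frac{\partial E}{\partial \tau_k}\frac{d\tau_k}{d\tau_j} < 0 .
\end{equation*}
This is the same marginal-abatement-return comparison mentioned before the proposition: the condition is $\frac{-\partial E/\partial \tau_j}{\partial G/\partial \tau_j} > \frac{-\partial E/\partial \tau_k}{\partial G/\partial \tau_k}$, once we know the signs of the $\partial G/\partial\tau$ terms.

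Next, I evaluate the derivatives at $\bm{\tau}=(0,0,0)$. Differentiating $G = M\sum_l \tau_l s_l$ gives $\partial G/\partial \tau_j = M s_j + M\sum_l \tau_l\, \partial s_l/\partial\tau_j$. At zero subsidies the second term vanishes, so $\partial G/\partial\tau_j = M s_j > 0$, and likewise $\partial G/\partial\tau_k = M s_k>0$. This is exactly why the baseline is chosen: the fiscal cost of a marginal subsidy is just the mechanical transfer to inframarginal buyers.

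For the numerators I use equation (\ref{eqn: partial E partial tau}): $\partial E/\partial\tau_j = M\,\frac{\partial s_j}{\partial p_j}(e^D_j-e_j)$, and similarly for $k$. This relies on full pass-through, $\partial p_l/\partial \tau_j = -\mathbf{1}\{l=j\}$. Substituting, $dE/d\tau_j<0$ becomes
\begin{equation*}
M\frac{\partial s_j}{\partial p_j}(e^D_j-e_j) - \frac{s_j}{s_k}M\frac{\partial s_k}{\partial p_k}(e^D_k-e_k) < 0 .
\end{equation*}
Dividing by $M s_j>0$ and rearranging yields (\ref{eqn: main}). Since each step is an equivalence, and both directions hold because all the divisors are strictly positive, we get the ``if and only if''.

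The main subtlety is the interpretation of ``reduces emissions'' as a local (first-order) statement for a small reallocation starting from zero. I would state explicitly that the claim concerns the sign of $dE/d\tau_j$ at the origin, that the resulting $\tau_k$ is negative (a tax), and that smoothness of $s$ and $s_k>0$ justify the implicit function theorem. Beyond that the argument is routine algebra.
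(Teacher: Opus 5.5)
Your proposal is correct and follows essentially the same route as the paper's proof: you use the implicit-function-theorem slope of the budget constraint and reduce the sign of the total derivative of $E$ to a comparison of marginal abatement returns. You then evaluate at $\bm{\tau}=\mathbf{0}$, where $\partial G/\partial\tau_j = M s_j$, and substitute the emissions derivative \eqref{eqn: partial E partial tau}. Your explicit remarks that the claim is a local, first-order statement and that the induced $\tau_k$ is a tax (since $d\tau_k/d\tau_j=-s_j/s_k<0$) are a useful clarification that the paper leaves implicit.
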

\begin{proof}
See Appendix \ref{app: prop 1 derive}.
\end{proof}

Proposition \ref{p: per-unit subsidy} implies that subsidizing the HEV, while taxing the BEV to maintain a zero-net budget, may be more effective than the reverse policy. This occurs if the HEV subsidy generates sufficiently large demand responses, measured by the own-price semi-elasticity $(-\frac{\partial s_j}{\partial p_j}\frac{1}{s_j}$), and/or net diverted emissions ($e_H^D-e_H$) relative to the corresponding BEV subsidy. 

In the special case of symmetric logit demand, inequality \eqref{eqn: main} reduces to the simple condition $e_j-e_k<0$. For example, an HEV subsidy is more effective than a BEV subsidy whenever HEV emissions are lower than BEV emissions (as in the early phase of the energy transition). Under more flexible substitution patterns, an HEV subsidy may be more effective even when HEV emissions are higher than BEV emissions, provided that it generates more diversion away from ICEVs.

To further isolate the role of diversion under flexible substitution patterns, suppose that demand responses are identical across technologies, $-\frac{\partial s_j}{\partial p_j}\frac{1}{s_j} = -\frac{\partial s_k}{\partial p_k}\frac{1}{s_k}$. In this case, an HEV subsidy is more effective than a BEV subsidy if and only if $e_H - e_B<e_H^D - e_B^D$. This inequality holds when the emissions gap $e_H - e_B$ is sufficiently small or negative and/or when the HEV diverts relatively more demand from the ICEV than does the BEV.\footnote{To make this link more explicit, we derive an equivalent condition that is expressed directly in terms of the diversion ratios from the HEV and the BEV to the ICEV in Appendix \ref{app: diversion emission derive}.} In our empirical analysis, we show that these conditions are currently satisfied in the South Korean new passenger vehicle market.

\subsection{General framework}\label{sec: theory extension}

Proposition \ref{p: per-unit subsidy} imposes simplifying assumptions to highlight the key factors that determine the relative effectiveness of alternative subsidy policies. Here, we extend the framework along the four dimensions below.

First, we allow for multiple products within each fuel type, which generates scope for within-type substitution in response to subsidies. However, such substitution may contribute relatively little to emissions changes, because products of the same fuel type share similar emissions. Second, we include additional fuel-type categories, allowing for multiple ICE vehicle types such as gasoline, diesel, and LPG. Third, we allow for nontrivial subsidy pass-through under imperfect competition between multi-product price-setting firms. Fourth, we allow usage emissions to vary across consumers through heterogeneous annual mileage.

Accordingly, we extend the basic setup as follows. Products are partitioned into $G$ groups based on their fuel types, indexed by $g$. Let $\mathscr{J}_g$ denote the set of products with fuel type $g$, and let $S_g=\sum_{j\in\mathscr{J}_g} s_j$ denote the aggregate market share of fuel type $g$. Let $\tau_g$ denote the per-unit subsidy common to all products with fuel type $g$, and let $\Theta_{j,g} = - \frac{\partial p_j}{\partial \tau_g}$ denote the pass-through rate of subsidy $\tau_g$ to the price of product $j$.

To account for individual-specific emissions, $e_{ij}$, stemming from heterogeneous consumer mileage, we use the following notation. Let $D_{j\to k}^i = -\frac{\partial s_{ik} / \partial p_j}{\partial s_{ij} / \partial p_j}$ denote the individual diversion ratio, and let $e_{ij}^D = \sum_{k\neq j}D_{j\to k}^ie_{ik}$ denote the individual diverted emissions of product $j$. Finally, let $F(i)$ denote the consumer distribution, defined over observed demographics (e.g., income and mileage) as well as unobserved heterogeneity.

Product-level emissions and product-level diverted emissions are then defined as weighted averages of individual emissions and individual diverted emissions, respectively:
\begin{equation}\label{eqn: hetero emission}
    e_j = \int \omega_{ij} e_{ij} \, dF(i) 
    \quad \text{and} \quad
    e_j^D = \int \omega_{ij} e_{ij}^D \, dF(i),
\end{equation}
where the weights are given by
\begin{equation}\label{eqn: hetero weights}
    \omega_{ij} = \frac{\partial s_{ij}/\partial p_j}{\int \frac{\partial s_{ij}}{\partial p_j} \, dF(i)} = \frac{\partial s_{ij}/\partial p_j}{\partial s_j/\partial p_j}.
\end{equation}
Intuitively, consumers who are more responsive to changes in the price of product $j$, relative to the aggregate response, receive greater weight in the construction of these emission metrics; see Appendix~\ref{app: prop 2 derive} for the formal derivation. Note that the weights in \eqref{eqn: hetero weights}, which define product-level emissions and diverted emissions, closely resemble those defining product-level diversion ratios in \citet{conlon2021empirical}. This is because both are derived in the context of a price change for product $j$.\footnote{Furthermore, in the special case of homogeneous mileage, $e_{ij} = e_j$ for all $i$ and $j$, it can be verified that 
\begin{equation*}
    e_j^D = \sum_{k\neq j} \left(\int \omega_{ij}D_{j\to k}^i dF(i)\right) e_k =  \sum_{k \neq j} D_{j\to k} e_k.
\end{equation*} 
This implies that the aggregate diverted emissions of product $j$ are separable in the diversion ratios $D_{j\to k}$ and the emissions $e_k$ (even under consumer heterogeneity in other dimensions than mileage).}

This setup yields the following generalization of Proposition \ref{p: per-unit subsidy}:

\begin{proposition}\label{p: per-unit subsidy generalized}
Consider the baseline policy with no subsidies. A subsidy for fuel type $g_1 \in \{H,B\}$, financed by a tax on fuel type $g_2 \in \{H,B\}$ with $g_2 \neq g_1$, reduces emissions if and only if
\begin{equation}\label{eqn: multiproduct}
   -\frac{1}{S_{g_1}} \sum_{g=1}^G\sum_{j\in\mathscr{J}_g} \Theta_{j,g_1} \frac{\partial s_j}{\partial p_j}(e_j^D-e_j) > -\frac{1}{S_{g_2}} \sum_{g=1}^G \sum_{j\in\mathscr{J}_g} \Theta_{j,g_2} \frac{\partial s_j}{\partial p_j}(e_j^D-e_j).
\end{equation}
\end{proposition}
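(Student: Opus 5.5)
The proof follows the same two steps as Proposition \ref{p: per-unit subsidy}. First, a budget-neutral reallocation at the zero-subsidy baseline reduces emissions if and only if the marginal abatement return of $\tau_{g_1}$ exceeds that of $\tau_{g_2}$. Second, each marginal abatement return is computed in the generalized environment.

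For the first step, write total emissions as $E = M\sum_j \int e_{ij} s_{ij}\,dF(i)$ and government spending as $G=M\sum_{g}\tau_g S_g$. Let $\tau_{g_2}=\tau_{g_2}(\tau_{g_1})$ be defined implicitly by $G=\bar G=0$. Evaluated at $\bm\tau=0$, the derivative of spending with respect to $\tau_g$ is $\partial G/\partial \tau_g = M S_g$, because every term multiplying a derivative of a share is also multiplied by some $\tau=0$. Hence $d\tau_{g_2}/d\tau_{g_1}=-S_{g_1}/S_{g_2}$. The total derivative of emissions along the constraint is then
\begin{equation*}
\frac{dE}{d\tau_{g_1}}=\frac{\partial E}{\partial\tau_{g_1}}-\frac{S_{g_1}}{S_{g_2}}\frac{\partial E}{\partial\tau_{g_2}} .
\end{equation*}
This is negative if and only if $-\frac{1}{S_{g_1}}\partial E/\partial \tau_{g_1} > -\frac{1}{S_{g_2}}\partial E/\partial\tau_{g_2}$, so it suffices to show that $\partial E/\partial \tau_g = M\sum_{g'}\sum_{j\in\mathscr J_{g'}}\Theta_{j,g}\frac{\partial s_j}{\partial p_j}(e^D_j-e_j)$.

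For the second step, use the chain rule through consumer prices. Since $\partial p_j/\partial\tau_g=-\Theta_{j,g}$, we have
\begin{equation*}
\frac{\partial E}{\partial \tau_g} = -M\sum_j \Theta_{j,g}\sum_k\int e_{ik}\frac{\partial s_{ik}}{\partial p_j}\,dF(i).
\end{equation*}
All products are included here, because under Bertrand pricing the subsidy moves gross prices of every product. For fixed $j$, split the inner sum into the $k=j$ term and the $k\neq j$ terms. Then substitute $\partial s_{ik}/\partial p_j = -D^i_{j\to k}\,\partial s_{ij}/\partial p_j$, which gives
\begin{equation*}
\sum_k\int e_{ik}\frac{\partial s_{ik}}{\partial p_j}\,dF(i) = \int \frac{\partial s_{ij}}{\partial p_j}\,(e_{ij}-e^D_{ij})\,dF(i).
\end{equation*}
Writing $\partial s_{ij}/\partial p_j=\omega_{ij}\,\partial s_j/\partial p_j$ from \eqref{eqn: hetero weights} and applying the definitions in \eqref{eqn: hetero emission}, the right-hand side equals $\frac{\partial s_j}{\partial p_j}(e_j-e_j^D)$. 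Multiplying by $-M\Theta_{j,g}$ and summing over $j$, grouped by fuel type, yields the claimed expression. Substituting it into the first step and cancelling $M$ gives \eqref{eqn: multiproduct}.

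The main obstacle is bookkeeping rather than substance, and it arises at two points. The first is the budget-derivative step: at the baseline, the pass-through and share-response terms in $\partial G/\partial\tau_g$ drop out only because all subsidies are zero, and this must be stated carefully. The second is the interchange of the sum over $k$ and the integral over $i$ when forming individual diverted emissions, which requires $\partial s_{ij}/\partial p_j\neq 0$ for the individual diversion ratios to be well defined. I would also note that differentiability of equilibrium prices in $\bm\tau$ is assumed, so that the pass-through rates $\Theta_{j,g}$ exist.
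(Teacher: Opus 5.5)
Your proposal is correct and follows essentially the same route as the paper's proof. Both arguments reduce the budget-neutral reallocation to a comparison of marginal abatement returns via the implicit function theorem on the budget constraint, then compute $\partial E/\partial \tau_g$ by the chain rule through the pass-through rates $\Theta_{j,g}$, isolating the $k=j$ term, substituting individual diversion ratios, and aggregating with the weights $\omega_{ij}$. The only cosmetic difference is ordering: you set $\partial G/\partial\tau_g = M S_g$ at the zero-subsidy baseline before forming the condition, whereas the paper first derives the ratio condition for arbitrary $\bm{\tau}$ (with the extra $\tau_r\,\partial S_r/\partial \tau_g$ terms in the denominators) and only then evaluates at zero.
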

\begin{proof}

See Appendix~\ref{app: prop 2 derive}.
\end{proof}

Similar to condition~\eqref{eqn: main}, condition~\eqref{eqn: multiproduct} compares the marginal abatement returns of subsidies for two fuel types, $g_1$ and $g_2$. The marginal abatement return of a subsidy for fuel type $g$, $\frac{- \partial E / \partial \tau_g}{\partial G / \partial \tau_g}$, still depends on demand responses and net diverted emissions. However, these are now weighted by the pass-through rates within and across fuel types. One may expect high pass-through rates, $\Theta_{j,g}$, for products $j\in \mathscr{J}_{g}$ (within-fuel-type pass-through), and lower pass-through rates for products $j\notin \mathscr{J}_{g}$ (cross-fuel-type pass-through).\footnote{In the special case where own-fuel-type pass-through rates are all equal to one and cross-fuel-type pass-through rates are all equal to zero, condition~\eqref{eqn: multiproduct} simplifies to
\begin{equation*}
   -\frac{1}{S_{g_1}} \sum_{j \in \mathscr{J}_{g_1}} \frac{\partial s_j}{\partial p_j} (e_j^D - e_j) > -\frac{1}{S_{g_2}} \sum_{j \in \mathscr{J}_{g_2}} \frac{\partial s_j}{\partial p_j} (e_j^D - e_j).
\end{equation*}
In this case, the marginal abatement return of a subsidy for fuel type $g$ depends only on the demand responses and net diverted emissions within group $g$.} Furthermore, net diverted emissions ($e_j^D - e_j$) now explicitly account for heterogeneity in emissions, demand responses, and substitution patterns across consumers. If these sources of heterogeneity are systematically correlated, net diverted emissions may differ between the heterogeneous-mileage model and its homogeneous counterpart.\footnote{For example, if high-mileage consumers, who have greater scope for emission reductions (higher $e_{ij}^D-e_{ij}$), are more price-sensitive (higher $\frac{\partial s_{ij}}{\partial p_j}$) than low-mileage consumers, aggregate net-diverted emissions exceed those predicted by the homogeneous model.}

Proposition \ref{p: per-unit subsidy generalized} is based on a baseline policy of no subsidies ($\bm{\tau}=\mathbf{0}$) to simplify algebraic expressions. Extending the analysis to incorporate existing subsidies for HEVs or BEVs is straightforward and yields a generalization of condition~\eqref{eqn: multiproduct} that includes additional terms reflecting existing subsidy expenditures, as shown in Appendix~\ref{app: prop 2 derive}.

\subsection{Empirical implementation}

The empirical model implements the general framework discussed above and quantifies its implications. Two key empirical components are the measurement of emissions and the characterization of demand substitution patterns across fuel types.

On the emissions side, we draw on the environmental literature to measure lifecycle emissions as the sum of production and usage emissions. Usage emissions depend not only on vehicle characteristics but also on annual mileage, generating heterogeneity in emissions within fuel types and amplifying differences across fuel types.

On the demand side, capturing flexible substitution patterns requires accounting for heterogeneity in consumer characteristics, in particular annual mileage. High-mileage consumers are more likely to substitute away from ICEVs toward subsidized alternatives with lower operating costs (HEVs and BEVs), as they benefit more from lower fuel expenditures.

Finally, to highlight the key mechanisms, our theoretical framework has so far focused on comparing the \emph{marginal} effects of changing per-unit subsidies across fuel types. In practice, however, we compare the \emph{total} effects of alternative budget-neutral subsidy policies. Specifically, we ask which of the two fuel-specific policies (BEVs vs. HEVs) is more effective at reducing emissions. To answer this question, we now turn to our empirical analysis.\footnote{Appendix \ref{app: total condition} provides a theoretical comparison of the total effects of two subsidy policies.}

\section{Industry background}\label{sec: background}

\subsection{Fuel type segments in the South Korean car market}\label{sec: south korean car market}

Automobile ownership in South Korea exceeds one vehicle per household, reaching 1.13 in 2023. In that same year, nearly 1.5 million new passenger vehicles were sold, ranking South Korea among the largest automotive markets globally \citep{StatistaVehicleSales2024}. 

The South Korean market features three primary categories of conventional ICEVs: gasoline, diesel, and LPG. Additionally, several categories utilizing new technologies are available: BEVs, HEVs, plug-in hybrid vehicles (PHEVs), and hydrogen fuel cell vehicles (HFCVs). BEVs are powered entirely by an onboard rechargeable battery and do not contain an ICE. By contrast, HEVs combine a conventional ICE with an electric motor and a small battery pack to enhance fuel efficiency through three key mechanisms: regenerative braking, engine optimization, and a stop-start system (see Appendix \ref{sec: HEV and PHEV} for details). PHEVs can be viewed as an extension of HEVs: like HEVs, they combine an ICE with an electric drivetrain, but they also include a larger battery that can be plugged into an external power source (the grid). Finally, HFCVs generate electricity through a fuel cell stack that converts hydrogen gas into electrical energy.

Table~\ref{tab: sales 2023} reports sales volumes by fuel type for 2023. Gasoline-powered vehicles constituted the majority of newly registered passenger cars, capturing a 55\% sales share. Hybrids (including both HEVs and PHEVs) were the second most popular category, with a share exceeding 25\%. This segment was dominated by HEVs, as fewer than 10\,000 PHEVs were sold. While BEVs and diesel cars each accounted for approximately 7\% of total sales, sales of HFCVs totaled only around 4\,000 units. Additionally, the sales share of the Hyundai Motor Group, which comprises three brands (Hyundai, Kia, and Genesis), was approximately 75\%, while the remainder was distributed among other domestic manufacturers and imported brands.

\begin{table}[!t]
\small
  \centering
  \caption{New passenger vehicle sales in 2023}
    \begin{tabular}{lrrrrrrrrr}
    \hline
    \hline
          & \multicolumn{6}{c}{Fuel type}                 &  \bigstrut\\
\cline{2-7}    Manufacturer & \multicolumn{1}{c}{Gasoline} & \multicolumn{1}{c}{Diesel} & \multicolumn{1}{c}{LPG} & \multicolumn{1}{c}{Hybrid$^{+}$} & \multicolumn{1}{c}{BEV} & \multicolumn{1}{c}{HFCV} & \multicolumn{1}{c}{Total} \bigstrut\\
    \hline
    Hyundai Motor Group &   621\,548  &    83\,205  &   45\,451  &   278\,525  &        70\,359  &       4\,326  &   1\,103\,414  \bigstrut[t]\\
    Others  &   199\,162  &    21\,312  &    7\,710  &   112\,368  &        45\,180  &            -  &     385\,732  \\
    \hdashline[1pt/1pt]
    Total &   820\,710  &   104\,517  &   53\,161  &   390\,893  &       115\,756  &       4\,326  &   1\,489\,363  \bigstrut\\
    \hline
    \hline
    \multicolumn{10}{l}{\footnotesize $^{+}$ Hybrid refers to the combined sales of HEVs and PHEVs.}
    \end{tabular}%
  \label{tab: sales 2023}%
\tablenotes This table reports sales (i.e., the number of vehicles sold) by each fuel type in 2023. Source: Korea Automobile \& Mobility Association (KAMA).
\end{table}%

It is instructive to consider the evolution of sales shares by fuel type during our sample period (2012--2023); for further information, see Figure~\ref{fig: fuel share trend} in Appendix \ref{sec: Additional figures and tables}. Gasoline vehicles experienced an initial decline followed by a recovery, thereby maintaining their dominance. By contrast, diesel cars, which were favored until the mid-2010s because of their relatively high fuel economy, have seen their popularity plummet in recent years due to the Dieselgate scandal and increasing regulatory pressure from CAFE standards.\footnote{Since the initial implementation of the Corporate Average Fuel Economy (CAFE) standards in South Korea in 2012 (17~km/L for fuel economy and 140~g/km for GHG emissions), the requirements have been gradually tightened annually. By 2021, these standards had reached 24.3~km/L for fuel economy and 97~g/km for GHG emissions.} As manufacturers phased out diesel variants, the HEV sales share rose more than eightfold, from 2.7\% in 2015 to 23.1\% in 2023, whereas the BEV sales share stagnated at only 6.5\% in 2023, 0.3 percentage points lower than in the previous year. The growth in the HEV share and the stagnation of the BEV share occurred despite the removal of HEV subsidies and substantial subsidies for BEVs (discussed further below).

 \subsection{Greenhouse gas emissions of vehicles}\label{sec: greet}

The GREET model is a standard, comprehensive analytical tool used in the transportation sector to measure life-cycle GHG emissions, also referred to as cradle-to-grave emissions.\footnote{GREET stands for Greenhouse gases, Regulated Emissions, and Energy use in Technologies. The GREET model was developed by Argonne National Laboratory \citep{elgowainy2016cradle, kelly2023cradle}.} As Figure~\ref{fig: greet} illustrates, a vehicle's life-cycle GHG emissions consist of two components: vehicle-cycle emissions and fuel-cycle emissions, the latter also referred to as well-to-wheels emissions.

\begin{figure}[!t]
	\centering
	\caption{The life-cycle GHG emissions of a vehicle}
	\includegraphics[width=\textwidth]{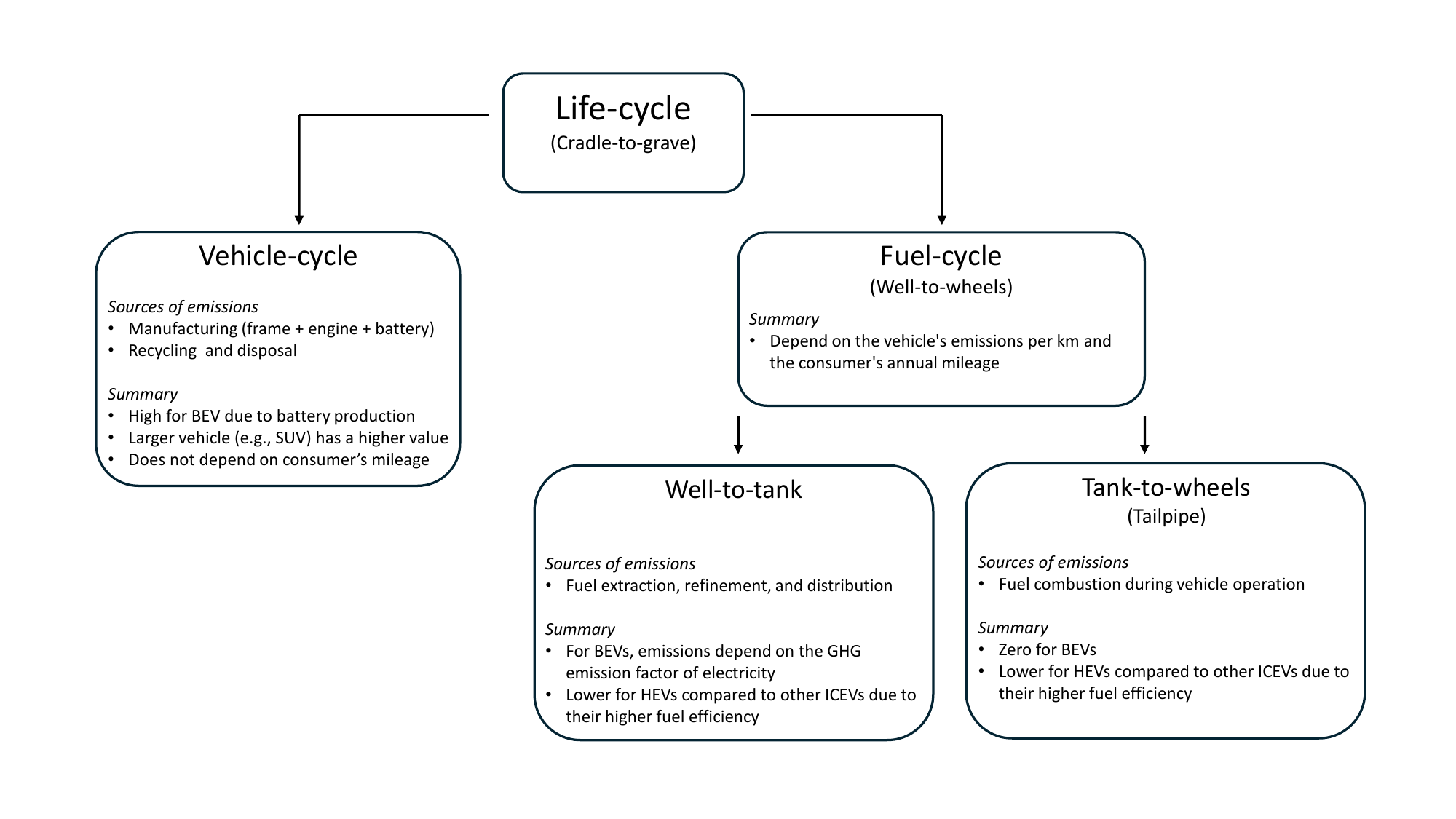}
\label{fig: greet}
\tablenotes Based on the GREET model in \citet{kelly2023cradle}, this figure decomposes the life-cycle GHG emissions of a vehicle into three parts and provides the sources of emissions for each part.
\end{figure}

The first component, vehicle-cycle GHG emissions, is generated during the production, disposal, and recycling of the vehicle and its battery. This component is independent of the vehicle owner's mileage. The second component, fuel-cycle GHG emissions, is associated with producing and consuming the fuel required to drive 1~km, multiplied by the vehicle owner's total mileage. The emissions per km can be further divided into well-to-tank emissions and tank-to-wheels (or tailpipe) emissions. The former are associated with the extraction, processing, transportation, refining, and distribution of fuels for ICEVs (including the more efficient HEVs), and with electricity generation and transmission for BEVs. The latter arise from actual fuel combustion during vehicle operation. Since BEVs generate no tailpipe emissions, their fuel-cycle emissions per km (in g~CO$_2$e/km) consist solely of well-to-tank emissions, calculated by dividing the emission factor at final consumption (in g~CO$_2$e/kWh) by the vehicle's fuel economy (in km/kWh).\footnote{The GHG emission factor at final consumption represents the GHG emissions per unit of electricity, accounting for energy losses incurred during transmission and distribution through the grid.} 

As an illustration, consider a 2020 Toyota Camry HEV whose vehicle-cycle emissions are 6.3 tonnes of CO$_2$e and fuel-cycle emissions per km are 130.14 g of CO$_2$e.\footnote{Appendix~\ref{sec: Greenhouse gas emissions of vehicles} describes in detail the vehicle-cycle and fuel-cycle emission calculations for each product-year combination in our sample.} Assuming a 15-year holding period and an annual mileage of 12\,000~km, the cumulative fuel-cycle emissions amount to 23.4~tonnes of CO$_2$e ($15 \times 12\,000 \times 130.14 \times 10^{-6}$). The total 15-year GHG emissions are therefore 29.7~tonnes of CO$_2$e ($6.3 + 23.4$).

 \subsection{BEV and HEV subsidy policy}\label{sec: subsidy background}

In accordance with the Paris Agreement, the South Korean government established its Nationally Determined Contribution (NDC) in June 2015. According to the ``2050 Carbon Neutrality Scenario'' and the upgraded NDC announced in October 2021, the government aims to reduce GHG emissions in the transportation sector by approximately 40\% by 2030. To achieve this, the government has been bolstering its support for BEVs.

More specifically, it directly installs electric vehicle chargers and subsidizes private firms to install them; in 2023, 300~billion KRW (260~million USD) was budgeted for this purpose.\footnote{The average exchange rate between 2012 and 2023, 0.86~USD per 1\,000~KRW, is used throughout the paper.} It also determines the national BEV purchase subsidies. The amount of these subsidies was initially more or less uniform across nameplates, but began to vary in 2018 based on vehicle attributes such as fuel economy (energy efficiency), maximum range, size, battery efficiency, and price. Subsidies also depend on whether automakers actively participate in installing charging stations and apply advanced, up-to-date technologies in their vehicles. For example, the 2023 formula for the national BEV purchase subsidy is given by:
\begin{equation*}
\begin{aligned}
	& \Big[ \text{attribute part } + \text{infrastructure part} + \text{innovation part} \Big]  \times \text{price multiplier},
\end{aligned}
\end{equation*}
where the price multiplier equals 1 for vehicles priced below 57~million KRW (49~thousand USD), 0.5 for those priced between 57~million and 85~million KRW (73~thousand USD), and 0 for vehicles priced above 85~million KRW. As a result, some BEVs---mostly German models---have not been eligible for purchase subsidies due to their high prices. In 2023, for example, 25 out of 36 BEV nameplates (e.g., Kia EV6 and Tesla Model~3) were eligible for subsidies, whereas the remaining 11 nameplates were not; see Table~\ref{tab: year sub nameplate} in Appendix \ref{sec: Additional figures and tables}.

Local governments provide additional purchase subsidies, which vary across provinces and years, generating cross-province variation in the total (national + local) subsidy per vehicle.\footnote{Provinces are the first-tier administrative divisions in South Korea. Figure~\ref{fig: province illustration} in Appendix \ref{sec: Additional figures and tables} illustrates how they are geographically defined.} Table~\ref{tab: year sub} summarizes the evolution of BEV subsidies. The per-vehicle subsidy has declined from almost 33~million KRW in 2012 to around 7~million in 2023; on average, the purchase subsidy covered at least 50\% of the vehicle price up until 2015, but 15\% or less since 2022.\footnote{Section \ref{sec: data} provides details on how we construct the BEV purchase subsidies at the year-province-nameplate level used in this study.} In contrast, total subsidy expenditure has increased rapidly due to the rising number of registered BEVs.

\begin{table}[!t]
\small
  \centering
  \caption{BEV purchase subsidies and price}
    \begin{tabular}{rrrrrr}
    \toprule
          & Total subsidy & Number of & Per-vehicle & Acquisition & Subsidy \\
          & value & BEVs sold & subsidy & price & share \\
    Year  & (Billion KRW) &  & (Million KRW) & (Million KRW) &     \\
    \midrule
    2012  & 0.13  &                       4  & 32.67 & 49.01 & 0.67 \\
    2013  & 3.16  &                    125  & 25.30 & 44.72 & 0.57 \\
    2014  & 11.66 &                    507  & 23.00 & 46.45 & 0.51 \\
    2015  & 45.54 &                  2\,042  & 22.30 & 45.96 & 0.50 \\
    2016  & 50.33 &                  2\,612  & 19.27 & 43.35 & 0.45 \\
    2017  & 140.53 &                  7\,149  & 19.66 & 42.38 & 0.49 \\
    2018  & 374.99 &                22\,183  & 16.90 & 46.64 & 0.38 \\
    2019  & 337.61 &                23\,163  & 14.58 & 49.72 & 0.31 \\
    2020  & 251.83 &                19\,100  & 13.19 & 55.07 & 0.26 \\
    2021  & 346.57 &                35\,684  & 9.71  & 58.83 & 0.18 \\
    2022  & 512.94 &                62\,172  & 8.25  & 59.72 & 0.15 \\
    2023  & 464.70 &                65\,851  & 7.06  & 58.06 & 0.14 \bigstrut[b] \\
    \hdashline[1pt/1pt]
    Total & 2\,540.00 &              240\,592  & 10.56 & 55.75 & 0.21 \bigstrut[t] \\
    \bottomrule
    \end{tabular}
  \label{tab: year sub}%
\tablenotes This table reports the total subsidy value, the number of BEVs registered, the sales-weighted average per-vehicle purchase subsidy, the sales-weighted average acquisition price, and the sales-weighted average subsidy share for each year between 2012 and 2023. All monetary values are in 2020 constant KRW (1\,000~KRW $\approx$ 0.86~USD). Source: official electric vehicle policy websites of the Korean government, official policy documents from the Ministry of Environment, and news articles for figures from the early 2010s.
\end{table}%

The Korean government introduced HEV purchase subsidies of 1 million KRW per vehicle in 2009 (less than 3\% of the purchase price), halved them in 2018, and discontinued them the following year (with PHEV subsidies ending in 2021). Consequently, subsidies for HEVs have been significantly lower than those for BEVs; the sales-weighted average per-vehicle subsidy for HEVs (170\,000 KRW or 147 USD) between 2012 and 2023 is negligible compared to that for BEVs (10.56 million KRW or 9\,100 USD).

In addition to direct purchase subsidies, buyers of BEVs and HEVs have also been eligible for tax credits since 2012, ranging from 1~million KRW (860~USD) to 4~million KRW (3\,440~USD), with BEVs receiving roughly two to three times the amount provided for HEVs. Tax credits for HEVs have been gradually reduced since 2020, as the Korean government plans to phase them out by 2026. In contrast, tax credits for BEVs are expected to remain relatively stable at least until the end of 2026.

\section{Data}\label{sec: data}

\subsection{Main dataset}

We combine datasets from multiple sources. The primary dataset is provided by ConsumerInsight, a leading automotive marketing and research firm in South Korea. For each of the 17 provinces between 2012 and 2023, it contains the annual number of registrations and sales revenues for each unique combination of \textit{brand}~$\times$~\textit{nameplate}~$\times$~\textit{model name}~$\times$~\textit{model year}~$\times$~\textit{fuel type}~$\times$~various car attributes (e.g., size, defined by the combination of width, height, and length; fuel economy; engine displacement; and seating capacity). One advantage of our data is that sales revenues account for manufacturer promotions at the point of sale as well as the costs of miscellaneous options selected by consumers. Therefore, the vehicle price, obtained by dividing sales revenue by the number of registrations and then deflating it using the CPI (2020~=~100), is closer to the actual transaction price than the manufacturer's suggested retail price (MSRP) often used in the literature.

We subtract the total subsidy from the price of a BEV or HEV to obtain the vehicle's net price. In doing so, we make two assumptions. First, we assume that all consumers of BEVs and HEVs applied for and received the purchase subsidies, although subsidies are reimbursed only to purchasers upon request. Second, we assume that the province- and year-specific cap on the total number of BEVs eligible for subsidies in each province and year is not binding. This assumption is reasonable, given that local governments typically relax or extend the cap during the year when the demand for subsidized vehicles exceeds the initially set cap. According to Table~\ref{tab: attributes by fuel type}, the average net price of BEVs is 45~million~KRW (approximately 40~thousand~USD), which is 22\% and 54\% higher than the average prices of HEVs and gasoline cars, respectively.

The sales data include six domestic brands (Hyundai, Kia, KG~Mobility, GM~Korea, Renault~Korea, and Genesis) and seven foreign brands (Mercedes-Benz, BMW, Volkswagen, Audi, Toyota, Lexus, and Tesla). Together, these 13~brands account for 93\% of total sales during the sample period. The data also report basic vehicle attributes such as fuel type, fuel economy, and body style, among others. We calculate fuel cost per km (1{,}000~KRW per~km) by dividing the province/year-level fuel price by fuel economy (km per~liter for ICEVs and km per~kWh for BEVs).\footnote{Fuel prices for ICEVs, disaggregated by province and year, are obtained from \href{https://www.opinet.co.kr/user/dopospdrg/dopOsPdrgAreaView.do}{OPINET} (Oil Price Information Network). The annual charging rates for BEVs (expressed in KRW per kWh) are sourced from official announcements by the Ministry of Environment \citep{ME_Press_2022}.} We supplement the dataset with additional vehicle specifications, including horsepower, curb weight, and driving range for BEVs, collected from two comprehensive South Korean automotive databases.\footnote{The two sources are \href{https://www.carisyou.com/car/}{CARISYOU} and \href{https://auto.danawa.com/}{Danawa}.} 

We define a product as a unique combination of \textit{brand}~$\times$~\textit{nameplate}~$\times$~\textit{fuel type}. For instance, Toyota (brand)–Camry (nameplate)–HEV (fuel type) represents a product. We also define a market as a unique combination of province and year. Accordingly, we aggregate the data to the product-market level and use average attributes, weighted by the number of registrations, as the representative product characteristics. According to the National Travel Survey conducted by the Ministry of Land, Infrastructure, and Transport in 2014,\footnote{Source: \url{https://www.ktdb.go.kr/www/selectBbsNttView.do?key=42&bbsNo=7&nttNo=2688}} the average replacement cycle for passenger cars in South Korea is around six years. Therefore, we assume that the market size associated with province~$r$ is one-sixth of the average annual number of registered passenger cars in that province during the sample period. The outside option includes buying used vehicles or continuing to drive currently owned vehicles.\footnote{Rather than interpreting public transportation as part of the outside option, we let its availability influence consumer mileage.}

From the data, we remove products with cumulative sales below 100 units during the sample period. We also exclude PHEVs and HFCVs, given their negligible market shares in South Korea. Finally, we omit observations from Sejong City (a special self-governing city established in July 2012) as reliable income distribution data are unavailable for this region. The final sample consists of 34\,629~observations from 366~products, among which 40~and~34 are BEVs and HEVs, respectively. 

Table~\ref{tab: attributes by fuel type} presents the sales-weighted average product attributes by fuel type. There is a clear negative correlation between acquisition prices and driving costs. BEVs are the most expensive, with an average acquisition price of 55.7 million KRW (48 thousand USD), yet they offer the lowest fuel cost at 50 KRW per km. HEVs are significantly cheaper than BEVs (by 33\% in terms of acquisition price and by 17.8\% in terms of net price), and exhibit the highest fuel economy (16.8 km/L) among ICEVs. Consequently, consumers may view HEVs as viable substitutes for BEVs. The table also shows that BEVs are the heaviest due to their batteries, whereas diesel cars are the largest since they are mostly SUVs.

\begin{table}[!t]
\small
  \centering
  \caption{Vehicle attributes by fuel type}
    \begin{tabular}{lrrrrrr}
    \toprule
          & \multicolumn{5}{c}{Fuel type}         &  \\
\cmidrule{2-6}    Attribute & \multicolumn{1}{l}{Gasoline} & \multicolumn{1}{l}{Diesel} & \multicolumn{1}{l}{LPG} & \multicolumn{1}{l}{HEV} & \multicolumn{1}{l}{BEV} & \multicolumn{1}{l}{All} \\
    \midrule
    Price (Mil. KRW) & 29.37 & 35.87 & 25.48 & 37.27 & 55.71 & 32.33 \\
    Net price (Mil. KRW) & 29.37 & 35.87 & 25.48 & 37.10 & 45.14 & 32.11 \\
    Fuel cost per km (1\,000 KRW/km) & 0.14  & 0.11  & 0.15  & 0.09  & 0.05  & 0.12 \\
    Fuel economy (km/L, km/kWh) & 12.53 & 13.27 & 9.28  & 16.84 & 5.16  &  \\
    Size ($m^3$) & 12.29 & 14.96 & 13.18 & 13.68 & 13.21 & 13.27 \\
    Power (hp) & 168.14 & 180.73 & 152.50 & 218.47 & 233.57 & 176.61 \\
    Weight (kg) & 1\,371.43 & 1\,775.42 & 1\,473.53 & 1\,628.90 & 1\,823.79 & 1\,528.04 \\
    Acceleration (hp/kg) & 0.12  & 0.10  & 0.10  & 0.13  & 0.13  & 0.11 \\
    GHG emissions  &       &       &       &       &       &  \\
    -Fuel-cycle (g CO$_2$e/km) & 199.87 & 212.88 & 204.45 & 134.85 & 89.21 & 196.60 \\
    -Vehicle-cycle (tonnes CO$_2$e) & 6.99  & 7.70  & 7.11  & 7.16  & 10.61 & 7.29 \bigstrut[b]\\
    \hdashline[1pt/1pt]
    Observations & 15\,551 & 11\,854 & 2\,096  & 3\,147  & 1\,981  & 34\,629 \bigstrut[t]\\
    \bottomrule
    \end{tabular}%
  \label{tab: attributes by fuel type}%
  \tablenotes This table reports sales-weighted average vehicle attributes by fuel type in our sample data (2012--2023, 16 provinces). Fuel economy is expressed in km/kWh for BEVs and km/L for all other fuel types. Fuel-cycle GHG emissions are measured on a per-kilometer basis, while vehicle-cycle GHG emissions are calculated for products sold during the 2020--2023 period.    
\end{table}%

\subsection{Supplementary datasets}\label{sec: supplementary data}

\paragraph{GHG emissions}

We construct data on fuel-cycle emissions per km for each product-year combination in our sample, as well as vehicle-cycle emissions for products sold between 2020 and 2023, by aggregating information from various sources. Details of these sources and a full description of the GHG emission calculations are provided in Appendix~\ref{sec: Greenhouse gas emissions of vehicles}. 

Table \ref{tab: attributes by fuel type} shows that sales-weighted average fuel-cycle emissions per km are lowest for BEVs (89 g CO$_2$e/km), as they generate no tailpipe emissions. HEVs also exhibit significantly lower fuel-cycle emissions per km (135 g CO$_2$e/km on average) than conventional ICEVs (200--213 g CO$_2$e/km) through their superior fuel economy, as explained in Section \ref{sec: greet}. In contrast, the vehicle-cycle emissions of BEVs are 37.7\%--51.7\% higher than those of HEVs and conventional ICEVs. This is largely because battery production is emissions-intensive and batteries have become heavier over time to support longer driving ranges.

Consequently, the emissions gap between HEVs and BEVs is relatively small given current technology. Assuming six years of ownership and an annual mileage of 12\,000~km (the average replacement cycle and annual mileage in South Korea, respectively), the emissions from an HEV (17~tonnes of $\text{CO}_2\text{e}$) are approximately the same as those from a BEV in our sample between 2020 and 2023. Extending the operating period to 15 years, the average passenger vehicle lifespan in South Korea, results in life-cycle emissions for BEVs of 26.7~tonnes of $\text{CO}_2\text{e}$, compared with 31.4~tonnes for HEVs, a difference of only 15\%.

\paragraph{BEV purchase subsidies}

We collect BEV purchase subsidies at the year–province–nameplate level since 2020 from the official electric vehicle policy website of the Korean government.\footnote{Source: \href{https://ev.or.kr/nportal/buySupprt/initSubsidyPaymentCheckAction.do}{Zero-Emission Vehicle Integrated Portal}} In some provinces, local subsidies vary across counties within the province, although such variation is not substantial. In these cases, we proxy the local subsidy using figures from the county where the majority of the population resides or where most BEVs are registered within the province. For years prior to 2020, we rely on official policy documents, when available, issued by the Ministry of Environment, as well as news articles. Figures~\ref{fig: subsidy dispersion} and \ref{fig: scatter localsub} in Appendix \ref{sec: Additional figures and tables} show substantial variation in total (national + local) BEV subsidies across nameplates and across provinces, respectively, for each year (2016--2023).

\paragraph{Second-choices}
Each of the two consumer surveys conducted by ConsumerInsight in 2018 and 2023 contains the automobile purchase decisions of respondents who purchased new vehicles within the two years preceding the survey. Importantly, the survey data record the second choice of each respondent, that is, the vehicle they would have purchased if the primary choice had not been available. After excluding respondents whose primary vehicles are not listed in our sales data or who did not report a second choice, we retain 4{,}035 and 4{,}145 consumers for 2018 and 2023, respectively. The two surveys oversample consumers who purchased vehicles with low market shares such as BEVs, while undersampling gasoline and diesel vehicle buyers. We use the inverse sampling weights for each survey respondent provided by ConsumerInsight to correct for sampling imbalances in the estimation process, as detailed in Appendix \ref{sec: micro moment description}.

\paragraph{Mileage information}

In South Korea, all registered passenger vehicles are required to undergo a regular inspection every two years. The inspection interval is four years for newly purchased vehicles and one year for vans, trucks, and similar vehicles. The Korea Transportation Safety Association (KTSA) keeps records of all vehicle inspections. From these publicly available data, we obtain the following information: (i) inspection year, (ii) annual mileage, (iii) fuel type, and (iv) province in which the vehicle is registered.

According to Figure~\ref{fig: mileage trend by fuel type}, HEVs tend to have higher mileage than vehicles with other fuel types, indicating that consumers who drive longer distances value the high fuel economy of HEVs. Interestingly, the average mileage of BEVs increased sharply during the sample period. Potential explanations include growth in battery capacity and the expansion of the electric charging network. The exceptionally high average mileage observed across all fuel types in 2021 is likely attributable to the increase in individual travel demand during the COVID-19 pandemic (2020 to 2021). Finally, note that people in metropolitan cities tend to drive less than residents of less densely populated provinces; see Table~\ref{tab: mileage by province and fuel type} in Appendix \ref{sec: Additional figures and tables}.

\begin{figure}[!t]
	\centering
	\caption{Mileage trends}
	\includegraphics[width=0.8\textwidth]{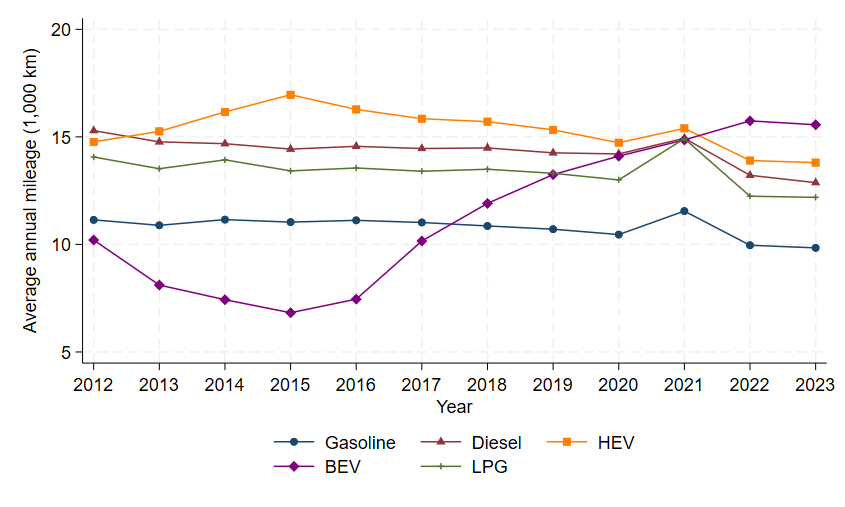}
\label{fig: mileage trend by fuel type}
\tablenotes This figure presents the average annual mileage trend for each fuel type.
\end{figure}

\paragraph{Additional datasets}

We obtain household income data from the Korean Labor and Income Panel Study (KLIPS)  to estimate the corresponding log-normal income distribution for each market (province--year). We then simulate 1\,000 individuals per market from these estimated distributions. We also obtain a charger dataset from the Korean Environment Corporation. Following \cite{springel2021network}, we define the interaction between the log of the number of chargers in a market and a BEV dummy as the infrastructure network variable. A full description of these datasets is provided in Appendix~\ref{sec: Data description}.

\section{Demand estimation}\label{sec: model and empiric}

\subsection{Model for consumer demand}

Recall that the market $m$ refers to the province and year. The indirect utility of consumer $i$ in market $m$ from purchasing product $j$ is given by
\begin{equation}\label{eqn: indirect utility}
	u_{ijm} = x_{jm}\beta_i - \alpha_i p_{jm} + \xi_{jm} + \varepsilon_{ijm},
\end{equation}
where $p_{jm}$ denotes the price net of subsidies, and $x_{jm}$ is a vector of product attributes, including cost per km, acceleration, size, infrastructure network, market entry and exit indicators, and fixed effects for nameplates, provinces, and fuel type$\times$years. The latter account for time-varying, fuel-type-specific demand shocks, such as the Dieselgate scandal or non-subsidy benefits for HEVs and BEVs (e.g., parking fee discounts and designated parking zones). Remaining unobserved factors are captured by the mean-zero structural error term $\xi_{jm}$, while $\varepsilon_{ijm}$ represents an i.i.d.\ extreme value distributed idiosyncratic taste shock.

We specify the individual-specific price coefficient and valuation of attribute $k$ as:
\begin{equation}\label{eqn: indirect utility2}
    \begin{aligned}
     & \alpha_i = \frac{\alpha}{y_i}, \\
     & \beta_{ik} = \beta_k + \pi_k \,\text{mileage}_i + \sigma_k v_{ik}, 
    \end{aligned}	
\end{equation}
where $y_i$ and $\text{mileage}_i$ denote consumer $i$'s annual income and daily mileage, respectively, and $v_{ik}$ captures remaining unobserved heterogeneity in the valuation of attribute $k$. We allow $\pi_k$ to be non-zero for the four main fuel-type dummy variables: diesel, gasoline, BEV, and HEV. Our specification therefore captures how tastes for these four fuel types compared to the outside option vary with mileage. We also allow $\sigma_k$ to be non-zero for the following variables: the four fuel-type dummy variables, the SUV indicator, and driving cost per km. 

Suppressing the market subscript $m$ for brevity, the indirect utility in \eqref{eqn: indirect utility} can then be decomposed into three terms:
\begin{equation}\label{eqn: indirect utility3}
    \begin{aligned}
     & u_{ij}     = \delta_{j} + \mu_{ij} + \varepsilon_{ij}, \\
     & \delta_{j} = x_{j}\beta + \xi_{j}, \\
     & \mu_{ij}   = -\alpha\frac{p_j}{y_i} + \sum_{k} \left( \pi_k \, \text{mileage}_{i} + \sigma_k v_{ik} \right) x^{(2)}_{jk},
    \end{aligned}	
\end{equation}
where $x_{jk}^{(2)}$ denotes the vector of attributes for which we specify heterogeneity.

We normalize the mean utility of the outside good to zero ($\delta_0=0$) and define $\mathcal{D}$ as a vector of demographic variables (income and mileage). Assuming independence among $\mathcal{D}$, $v$, and $\varepsilon$, consumer $i$'s choice probability for product $j$ ($\mathcal{s}_{ij}$) and the market share of product $j$ ($\mathcal{s}_{j}$) are
\begin{equation}\label{eqn: market share}
\begin{aligned}
\mathcal{s}_{ij} &= \frac{\exp \left(\delta_{j} + \mu_{ij} \right)} {1 + \sum_{\ell \in \mathscr{J}} \exp \left(\delta_{\ell} + \mu_{i\ell} \right)} \\
\mathcal{s}_{j} &= \int_{\mathcal{D}} \int_{v} \mathcal{s}_{ij} \, dF(v) \, dF(\mathcal{D}),
\end{aligned}
\end{equation}
where $\mathscr{J}$ denotes the set of available inside goods, $F(v)$ represents the (normal) distribution of taste shocks ($v$), and $F(\mathcal{D})$ denotes the empirical distribution function of demographic characteristics ($\mathcal{D}$). Because the empirical distributions of income and mileage come from separate sources, we further assume independence between these two demographic variables.

\subsection{Estimation and identification}

Given the vector of nonlinear utility parameters $\theta = (\alpha, \{\pi_k\}, \{\sigma_k\})$, there is a unique vector of mean utilities, $\delta(\theta)$, that equates the observed market shares with the simulated ones \citep{Berry94, Berry95}:
\begin{equation*}
{s}_{j} = \frac{1}{N}\sum_{i=1}^{N} \frac{\exp \left(\delta_{j} + \mu_{ij}(\theta)\right)} {1+\sum_{\ell \in \mathcal{J}} \exp \left(\delta_{\ell} + \mu_{i\ell}(\theta) \right)}, ~\forall j \in \mathscr{J}
\end{equation*}
where we set $N=1\,000$, i.e., 1\,000 draws from the income distribution, 1\,000 draws from the mileage distribution, and 1\,000 scrambled Halton draws in each market. By solving the equations using a fixed point procedure \citep{Berry95}, we obtain the structural error term as a function of the utility parameters: $\xi_j (\theta) = \delta_j(\theta) - x_j \hat{\beta}(\theta)$, where $\hat{\beta}(\theta)$ is recovered from a linear GMM regression of $\delta_j(\theta)$ on $x_j$.  

Our generalized method of moments (GMM) estimator is derived from two sets of sample moments. First, we construct a set of aggregate moments. Including the market subscript again, assume that the structural error term is mean independent of a vector of instrumental variables $Z_{jm}$ for product $j$ in market $m$ to obtain the following moments:
\begin{equation}
  G_1(\theta) = \frac{1}{N_p} \sum_{j,m} \xi_{jm}(\theta) Z_{jm},
\end{equation}
where $N_p$ is the number of products across markets. In addition to the vector of (non-price) product attributes, we include two cost shifters as instruments for price: (i) the import tariff rate and (ii) the raw material price interacted with vehicle weight. Due to the free trade agreement with the European Union in the early 2010s, import tariffs on German cars progressively declined from 2010 to 2016 with variation by fuel type and engine displacement. Tariffs on Japanese HEVs also declined slightly in the early 2020s. Based on market reports indicating that the production of a car requires approximately five times as much iron as aluminum, we calculate the second instrument as $(0.2 \times \text{aluminum price} + \text{iron ore price}) \times \text{weight}$. Because the transmission of a cost shock takes time, we use one-year lagged tariff rates and raw material prices. 

Second, for the identification of mileage and fuel-type interaction parameters and the standard deviations of tastes, $(\{\pi_k\}, \{\sigma_k\})$, we construct micro moments, $G_2(\theta)$ (e.g., \citealp{berry2004differentiated, Petrin02, conlon2025incorporating, grieco2024evolution}), using ConsumerInsight's 2018 and 2023 consumer surveys and KTSA's vehicle inspection data. More specifically, we compute the correlation between the product attributes of the first and second choices from the survey data and match it to the corresponding moment from the model. We also compute the average daily mileage of each fuel type from the vehicle inspection data and match it to the corresponding model-predicted moment. Appendix~\ref{sec: micro moment description} provides a full description of the construction of the micro moments using our micro datasets.

We form the objective function by stacking the sample moments. The GMM estimator is  
\begin{equation}\label{eqn: GMM}
    \hat{\theta} = \argmin_{\theta} G(\theta)' W G(\theta),
\end{equation}
where
\begin{equation}
   G(\theta) =
\begin{bmatrix}
G_1(\theta)  \\
G_2(\theta)
\end{bmatrix},
\quad
W =
\begin{bmatrix}
W_1 & 0 \\
0 & W_2
\end{bmatrix},
\end{equation}
and $W_1$ and $W_2$ are the weighting matrices corresponding to the first and second sets of moment conditions, respectively. We carry out the estimation using \texttt{PyBLP} \citep{conlon2020best}.

\subsection{Results}

Table~\ref{tab: rc demand estimation results} presents the demand parameter estimates. We first consider a simple logit demand model with homogeneous valuations for product attributes, $\beta_i = \beta$ and $\alpha_i = \alpha / \overline{\text{y}}$, where $\overline{\text{y}}$ is the average income in the market. The first column is based on an OLS regression, while the second column uses our two cost shifters as instruments for price (i.e., the import tariff rate and the raw material price interacted with vehicle weight).\footnote{The first-stage regression results, reported in Table~\ref{tab: first stage results} in Appendix \ref{sec: Additional figures and tables}, show that the instrumental variables are jointly significant and have the expected signs.} As expected, the IV estimate of the price coefficient $\alpha$ is larger than the OLS estimate (in absolute value). OLS implies inelastic product-level own-price elasticities, whereas the IV estimates imply plausible median and average elasticities of, respectively, -5.5 and -4.3. The coefficients of other attributes are mostly significant and of the expected signs.

To appreciate the role of consumer heterogeneity in the valuation of fuel types, we also estimate a nested logit demand model, as in \citet{Berry94}. We partition the products in six fuel type groups: gasoline, diesel, LPG, HEV, BEV, and the outside option. The model includes a nesting parameter $\rho$, capturing taste correlation between products of the same group. We estimate the model using the same cost shifters as price instruments, and the number of products per group as an instrument for the nesting parameter. We estimate a statistically significant nesting parameter of 0.677, indicating that consumers perceive vehicles of the same fuel type as much closer substitutes than vehicles of different fuel types. The median and average own-price elasticities are comparable with those of the IV logit model. For more details, see Table~\ref{tab: results of restricted models} in Appendix \ref{sec: Additional figures and tables}.

The next columns of Table~\ref{tab: rc demand estimation results} show the results for two random coefficients logit specifications, after incorporating micro moments into our GMM estimation. The third and fourth columns assume that mileage does not affect consumers' valuation of fuel type and set $\pi_k = 0$. The estimates of the price coefficient and the mean coefficients ($\alpha, \beta$) are consistent with the benchmark results reported in the first two columns. Moreover, the estimates of the standard deviations ($\{\sigma_k\}$) are all statistically significant, indicating that a consumer with a higher taste draw for fuel type $k$ ($v_{ik}$) places a higher value on vehicles of that fuel type relative to others. This finding is consistent with the statistically significant nesting parameter, and suggests that vehicles with the same fuel type are closer substitutes for one another.

\begin{table}[!t]
  \small
  \centering
  \caption{Estimation results of demand models}
    \begin{tabular}{lccccccrccr}
    \toprule
          & OLS logit &       & IV logit &       & \multicolumn{2}{c}{RC logit 1} &       & \multicolumn{3}{c}{RC logit 2} \bigstrut\\
\cmidrule{2-2}\cmidrule{4-4}\cmidrule{6-7}\cmidrule{9-11}    Variable & $\alpha,~\beta$ &       & $\alpha,~\beta$ &       & $\alpha,~\beta$ & \multicolumn{1}{c}{$\sigma$} &       & $\alpha,~\beta$ & \multicolumn{1}{c}{$\pi$} & \multicolumn{1}{c}{$\sigma$} \\
    \midrule
    Price / income & -0.406 &       & -6.013 &       & -12.864  &       &       & -14.569  &       &  \\
          & (0.062) &       & (0.739) &       & (2.299) &       &       & (2.323) &       &  \\
    Cost per km & -2.522 &       & -0.668 &       & -3.162  & 1.557  &       & -3.041  &       & \multicolumn{1}{c}{1.453 } \\
          & (0.090) &       & (0.227) &       & (0.388) & (0.331) &       & (0.357) &       & \multicolumn{1}{c}{(0.341)} \\
    Acceleration & -0.271 &       & 18.033 &       & 10.636  &       &       & 11.460  &       &  \\
          & (0.461) &       & (2.659) &       & (1.507) &       &       & (1.488) &       &  \\
    Size  & 0.324 &       & 0.859 &       & 1.082  &       &       & 1.147  &       &  \\
          & (0.048) &       & (0.134) &       & (0.115) &       &       & (0.118) &       &  \\
    Network & 0.032 &       & 0.248 &       & 0.362  &       &       & 0.213  &       &  \\
          & (0.069) &       & (0.069) &       & (0.178) &       &       & (0.171) &       &  \\
    Entry year & -0.623 &       & -0.457 &       & -0.367  &       &       & -0.345  &       &  \\
          & (0.032) &       & (0.041) &       & (0.051) &       &       & (0.052) &       &  \\
    Exit year & -1.707 &       & -1.720 &       & -1.975  &       &       & -2.043  &       &  \\
          & (0.034) &       & (0.041) &       & (0.059) &       &       & (0.061) &       &  \\
    BEV   &       &       &       &       &       & 4.992  &       &       & 7.423  & \multicolumn{1}{c}{5.025 } \\
          &       &       &       &       &       & (0.379) &       &       & (0.450) & \multicolumn{1}{c}{(0.376)} \\
    HEV   &       &       &       &       &       & 1.811  &       &       & 2.681 & \multicolumn{1}{c}{1.778 } \\
          &       &       &       &       &       & (0.114) &       &       & (0.108) & \multicolumn{1}{c}{(0.118)} \\
    Diesel &       &       &       &       &       & 1.663  &       &       & 2.495 & \multicolumn{1}{c}{1.722 } \\
          &       &       &       &       &       & (0.125) &       &       & (0.128) & \multicolumn{1}{c}{(0.121)} \\
    Gasoline &       &       &       &       &       & 1.830  &       &       & -3.513 & \multicolumn{1}{c}{1.568 } \\
          &       &       &       &       &       & (0.082) &       &       & (0.118) & \multicolumn{1}{c}{(0.087)} \\
    SUV   &       &       &       &       &       & 3.514  &       &       &       & \multicolumn{1}{c}{3.521 } \\
          &       &       &       &       &       & (0.105) &       &       &       & \multicolumn{1}{c}{(0.105)} \\
    \midrule
    \multicolumn{9}{l}{Own price elasticity}                              &       &  \\
    \hspace{0.1in}Median & -0.371 &       & -5.490 &       & \multicolumn{2}{c}{-5.588} &       &       & -6.125 &  \\
    \hspace{0.1in}Weighted average & -0.291 &       & -4.302 &       & \multicolumn{2}{c}{-5.377} &       &       & -5.906 &  \\
    \bottomrule
    \end{tabular}%
  \label{tab: rc demand estimation results}%
   \tablenotes This table presents the estimation results for our demand model. The utility specification includes nameplate, province, and year-interacted fuel-type fixed effects. The first two columns report the results from the logit model, where income refers to the average market income ($\bar{y}_m$). The next two columns report random-coefficients estimation results under the assumption that mileage does not affect consumer valuations of fuel types (RC logit 1). The last three columns report the results for the full model (RC logit 2), which allows for heterogeneous valuations of fuel types by mileage. In both random-coefficients specifications, income refers to an individual income draw ($y_i$). Micro-moments are incorporated into the GMM estimation for the random-coefficients specifications. Robust standard errors, clustered by market, are reported in parentheses.
\end{table}%

The last three columns present the results from the full model, which adds the interaction between mileage and each fuel type ($\{\pi_k\}$). This addition does not substantially affect the parameter estimates of the more restrictive specification in the previous two columns ($\alpha$, $\beta$, and $\{\sigma_k\}$).\footnote{Table~\ref{tab: micromoment values} in Appendix~\ref{sec: micro moment description} lists the observed and simulated values of the micro moments.} The magnitudes and significance of the interaction parameters reveal that consumers' fuel-type valuations depend on driving distance: the higher the mileage, the higher the value placed on the more energy-efficient HEVs, diesel vehicles, and especially BEVs. In Section \ref{sec: environmental evaluation}, we show that failing to account for this consumer mileage heterogeneity leads to an underestimation of the emission reduction effects of a vehicle subsidy policy.

The sales-weighted average own-price elasticities are $-5.38$ and $-5.91$ for the specifications without and with mileage heterogeneity, respectively, which are larger than the earlier reported average of $-4.3$ for the restricted logit model. These estimates are broadly comparable to those reported in previous studies, for example, $-5.32$ for the Korean market during 1996–2009 \citep{ohashi2017effects} and $-5.06$ for the U.S. market during 1980–2018 \citep{grieco2024evolution}. A breakdown of the average own-price elasticities by fuel type does not suggest noticeable differences. Cross-price elasticities for HEVs seem to be stronger with respect to gasoline vehicles than with respect to BEVs. See Table \ref{tab: elasticities by group} in Appendix \ref{sec: Additional figures and tables} for further information on own- and cross-price elasticities by fuel type.

To examine the extent of substitution across different fuel types more systematically, we calculate diversion ratios at the fuel-type level, reported in Table \ref{tab: diversion} for 2023. These measure the fraction of a product's lost sales after a price increase that is diverted to other products with the same or different fuel type. Not surprisingly, the diversion ratios are larger for products of the same fuel type. More interestingly, across fuel types, HEVs exhibit higher substitutability with gasoline and diesel vehicles than with BEVs. For example, following an HEV price increase, gasoline vehicles capture on average 28.8\% of the lost market share of HEVs, while BEVs capture only 2.9\%. Similarly, when the price of a gasoline vehicle increases, HEVs capture 9.8\% of the lost sales, while diversion to BEVs is limited to only 1.7\%. This stronger substitutability between HEVs and ICEVs than between BEVs and ICEVs is observed throughout the sample period, as shown in Figure \ref{fig: diversion by fuel type} in Appendix \ref{sec: Additional figures and tables}.

\begin{table}[!t]
  \centering
  \caption{Diversion ratios in 2023}
    \begin{tabular}{lcccccc}
    \hline
    \hline
          & BEV   & HEV   & Diesel & Gasoline & LPG   & Outside \bigstrut\\
    \hline
    BEV   & 0.574 & 0.101 & 0.038 & 0.168 & 0.009 & 0.110 \bigstrut[t]\\
    HEV   & 0.029 & 0.432 & 0.061 & 0.288 & 0.020 & 0.171 \\
    Diesel & 0.034 & 0.194 & 0.290 & 0.340 & 0.026 & 0.116 \\
    Gasoline & 0.017 & 0.098 & 0.038 & 0.639 & 0.015 & 0.193 \\
    LPG   & 0.023 & 0.178 & 0.078 & 0.442 & 0.043 & 0.236 \bigstrut[b]\\
    \hline
    \hline
    \end{tabular}%
  \label{tab: diversion}%
  \tablenotes 
  This table presents the diversion ratios from the row fuel type to the column fuel type in 2023. In a given market, the diversion ratio from product $j$ to fuel type $G$ is defined as $D_{j\to G} \equiv \sum_{k \in \mathcal{J}_{G}\setminus\{j\}} D_{j\to k} \equiv \sum_{k \in \mathcal{J}_{G}\setminus\{j\}} \frac{\partial s_k / \partial p_j}{-\partial s_j / \partial p_j}$, where $\mathcal{J}_{G}$ denotes the set of products with fuel type $G$. The sales-weighted average diversion ratio from fuel type $G_1$ to $G_2$ is then $D_{G_1\to G_2} \equiv \sum_{j \in \mathcal{J}_{G_1}} w_j D_{j\to G_2}$, where $w_j$ is product $j$'s sales share within fuel type $G_1$. Finally, we average $D_{G_1\to G_2}$ across the 16 markets in 2023 to obtain $\bar{D}_{G_1 \to G_2}.$

  \end{table}%

The full demand model suggests that mileage heterogeneity is an important determinant of consumer choice and substitution across fuel types. To further explore the role of mileage heterogeneity, we predict each fuel type's market share in 2023 for consumers in each of five average daily mileage intervals: 0--20, 20--40, 40--60, 60--80, and 80--100~km. We find that the market shares of vehicles with lower per-km fuel costs (HEVs, BEVs, and diesel cars) tend to be higher among consumers with greater mileage. For instance, the BEV market share doubles from 2.22\% to 4.44\% as daily mileage increases from the 20--40~km range to the 60--80~km range. In contrast, the market share of gasoline vehicles, which typically have higher per-km fuel costs, declines sharply in consumer mileage. See Figure~\ref{fig: mean_ci_2023} in Appendix~\ref{sec: Additional figures and tables} for further information.


\section{Comparing BEV and HEV subsidies}\label{sec: environmental evaluation}

In this section, we compare the emission-reducing benefits of the BEV and HEV subsidy policies over the final four years of the sample period (2020--2023). As detailed in Section \ref{sec: subsidy background}, HEV subsidies were discontinued in 2019; consequently, only BEVs were subsidized during this period.

\subsection{Counterfactual approach}\label{sec: counterfactual_design}

\paragraph{Supply and equilibrium}

To evaluate the impact of subsidies on emissions, we integrate our demand model with a supply-side model of multi-product Bertrand competition. In Section \ref{sec: theory}, we defined the net consumer price of product $j$ as $p_j = p_j^{G} - \tau_j$, i.e., the difference between the gross price and the subsidy. Let $\mathbf{p}$ be the vector of net consumer prices for all products in the market. Then, the variable profit of firm $f$ selling a set of products $\mathscr{J}_f$ is given by
\begin{equation*}
    \Pi_f = \sum_{j \in \mathscr{J}_f} \big(p_j^{G} - mc_j\big) \, s_j(\mathbf{p}) \cdot M,
\end{equation*}
where $mc_j$ denotes the marginal cost of product $j$, $s_j(\mathbf{p})$ is the product’s market share, specified earlier by equation \eqref{eqn: market share}, and $M$ is the market size. The first-order conditions for profit maximization are given in vector notation by
\begin{equation}\label{eqn: foc}
    \mathbf{p}^{G} = \mathbf{mc} - \left(\Omega \circ \nabla_p^s(\mathbf{p})\right)^{-1} \mathbf{s}(\mathbf{p}),
\end{equation}
where $\mathbf{p}^{G} = \mathbf{p} + \bm{\tau}$, $\Omega$ is the $J \times J$ (block-diagonal) ownership matrix with elements equal to 1 if product pairs belong to the same firm and zero otherwise, the operator $\circ$ denotes element-by-element multiplication, and $\nabla_p^s$ is the $J \times J$ matrix of partial derivatives of market shares with respect to prices.\footnote{Each firm is defined at the parent company level. That is, we have four domestic firms (Hyundai Motor Group, KG Mobility, GM Korea, and Renault Korea) and five foreign firms (Mercedes-Benz Group, BMW, Volkswagen Group, Toyota Group, and Tesla).}

We first use the system \eqref{eqn: foc} to recover the (constant) marginal costs at the observed price vector. We subsequently use it to compute equilibrium prices under counterfactual subsidy scenarios. We obtain the equilibrium price vector using a root solver (or fixed-point algorithm), which is then used to compute counterfactual market shares, sales, and the implied total emissions.

We consider two counterfactual subsidy scenarios, relative to the current status quo scenario of BEV subsidies: (i) no subsidy to either BEVs or HEVs and (ii) a reallocation of the total BEV subsidy budget toward uniform subsidies to HEVs. We focus on a reallocation of the direct BEV purchase subsidy budget, excluding additional BEV incentives through infrastructure development and tax credits. Given that existing BEV subsidies already incorporate some degree of optimization, our estimates may serve as a lower bound on the relative effectiveness of HEV subsidies.

\paragraph{Total GHG emissions}

We compute total GHG emissions under the status quo and the two counterfactual policy scenarios. We account for both fuel-cycle emissions (relating to usage) and vehicle-cycle emissions (relating to production, disposal and recycling). We first calculate consumer $i$'s emissions and then aggregate over consumers to obtain total emissions.

Consumer $i$'s expected emissions under policy scenario $\bm{\tau}$, $E_i^{\tau}$, are given by the weighted average of emissions across all products and the outside option, using $\mathbf{s}^{\tau}_{i}$, consumer $i$'s choice-probability vector under policy $\bm{\tau}$, as weights: 
\begin{equation}\label{eqn: consumer/product emissions}
E_i^{\tau} = \sum_{j \in \mathscr{J}} \left( \underbrace{ T \cdot m_i \cdot e^{FC}_j + e^{VC}_j }_{e_{ij}} \right) s^{\tau}_{ij} + e_{i0} s^{\tau}_{i0},
\end{equation}
where $m_i$ is consumer $i$'s annual driving distance (in km) and $T$ is the expected vehicle operating period.

The variable $e^{FC}_j$ denotes vehicle $j$'s fuel-cycle emissions associated with the energy use required for driving 1~km (in g CO$_2$e/km). The term $T \cdot m_i \cdot e^{FC}_j$ represents vehicle $j$'s cumulative fuel-cycle emissions for consumer $i$ over the ownership period. The variable $e^{VC}_j$ denotes $j$'s vehicle-cycle emissions generated during production, disposal, and recycling (in g CO$_2$e). The sum of these two terms constitutes product $j$’s GHG emissions for consumer $i$, denoted as $e_{ij}$.

The variable $e_{i0}$ represents emissions from the outside option, i.e., secondhand vehicle emissions. Given the lack of information on each consumer’s specific outside option, we set the fuel-cycle emissions per km of the outside option, denoted by $e^{FC}_{0}$, equal to the sales-weighted average fuel-cycle GHG emissions per km in the respective province over the preceding eight years, excluding the most recent three years (reflecting the assumption that consumers do not replace their vehicles within the first three years of ownership). Furthermore, we set the vehicle-cycle emissions of the outside option equal to zero. Assuming a remaining life span of $T^O$ years, emissions from the outside option are given by $e_{i0}=T^O\cdot m_i \cdot e^{FC}_{0}$.

We measure fuel-cycle and vehicle-cycle emissions for each product-year combination following the approach outlined in Section \ref{sec: supplementary data}. In our baseline scenario, we set the expected vehicle operating period for new and secondhand cars equal to 15 years. This is based on the average vehicle life span for passenger vehicles in South Korea and in line with standard assumptions in the literature \citep[e.g.,][]{allcott2024effects}. We also conduct a sensitivity analysis by considering lower expected operating periods.

Expected total emissions in market $m$ under policy $\bm{\tau}$ are obtained by integrating individual consumers' emissions $E^{\tau}_i$ over $\mathcal{D}$ and $v$:
\begin{equation}\label{eqn: mkt emissions}
E^{\tau}_m = M_m \int_{\mathcal{D}} \int_{v}E^{\tau}_i \, dF(v) \, dF(\mathcal{D}),
\end{equation}
where $M_m$ is the market size. We approximate the expected market-level emissions \eqref{eqn: mkt emissions} based on our $N=1\,000$ draws from the income and mileage distribution ($\mathcal{D}$), and Halton draws to approximate the normal distribution ($v$). 

We then calculate changes in emissions, $\Delta E^{\tau}_m$, following two distinct policy transitions: (i) from no subsidies to the current BEV subsidies, and (ii) from no subsidies to HEV subsidies. We also report the changes in their underlying components: fuel-cycle and vehicle-cycle emissions associated with the inside goods $j \in \mathscr{J}$, and outside emissions (i.e., fuel-cycle emissions from the outside good). See Appendix \ref{sec: Calculation of GHG emissions changes} for details.

\subsection{Budget-neutral subsidy reallocation}\label{sec: counterfactual 1}

\paragraph{Marginal abatement returns}\label{sec: Empirical test of Proposition 2}

Before quantifying the emission effects of the policy transitions in our equilibrium model, it is instructive to evaluate condition \eqref{eqn: multiproduct} in Proposition \ref{p: per-unit subsidy generalized}, under which subsidizing HEVs at the expense of taxing BEVs reduces emissions. Recall that this condition compares the marginal abatement returns across fuel types under zero government spending, which depend on demand responses and net diverted emissions, weighted by equilibrium subsidy pass-through rates and incorporating heterogeneous usage emissions.

Table~\ref{tab: prop 2} presents the marginal abatement returns for subsidies to each fuel type, evaluated at counterfactual prices under the no-subsidy scenario. We first derive the subsidy pass-through matrix under our Bertrand--Nash setup (see Appendix \ref{app: pass thru derive} for details) and compute the marginal abatement returns separately for each market. The estimates are then averaged across the 62 markets over the 2020--2023 period.\footnote{Of the 64 markets, 2 are excluded (2021 Chungnam and 2022 Gwangju) because the computation of subsidy pass-through matrices involves ill-conditioned, non-invertible matrices.} The marginal abatement return for HEV subsidies (1.671 tonnes of CO$_2$e per million KRW) is approximately 20\% higher than that for BEV subsidies (1.402 tonnes).\footnote{The inverse of the marginal abatement return is the marginal abatement cost, a measure commonly used in the literature and expressed as the cost per tonne of CO$_2$ (or CO$_2$e) abated. Under our no-subsidy baseline, the marginal abatement cost is 0.713 million KRW per tonne of CO$_2$e abated for the BEV subsidy policy and 0.598 million KRW for the HEV subsidy policy.} This finding suggests that subsidizing HEVs (financed by taxing BEVs) would reduce emissions relative to the no-subsidy baseline. Additionally, the marginal abatement returns for subsidies to the other fuel types are all negative, implying taxes would be appropriate, especially for diesel and LPG fuel types. The distributions of the marginal abatement returns across markets are consistent with these conclusions, as shown in Figure~\ref{fig: abatement_return_bubble_plot} in Appendix~\ref{sec: Additional figures and tables}.

We also decompose the abatement return into two components: one associated with products eligible for the subsidy (the own abatement return) and the other associated with products of other fuel types (the cross abatement return). Table~\ref{tab: prop 2} shows that the second term is negligible relative to the first across all fuel types, reflecting the low cross-fuel-type pass-through rates.

\begin{table}[!t]
  \centering
  \caption{Marginal abatement return by fuel type}
    \begin{tabular}{lrrrrr}
    \toprule
          & \multicolumn{1}{c}{BEV} & \multicolumn{1}{c}{HEV} & \multicolumn{1}{c}{Gasoline} & \multicolumn{1}{c}{Diesel} & \multicolumn{1}{c}{LPG} \\
    \midrule
    Marginal abatement return  & 1.402 & 1.671 & -0.355 & -1.006 & -0.939 \bigstrut[b]\\
    \hspace{0.1in}Own abatement return & 1.395 & 1.553 & -0.338 & -1.045 & -0.970 \bigstrut[t]\\
    \hspace{0.1in}Cross abatement return & 0.007 & 0.118 & -0.017 & 0.039 & 0.031 \\
    \bottomrule
    \end{tabular}%
  \label{tab: prop 2}%
  \tablenotes For each of the five fuel types, the first row presents the market-size-weighted average of the marginal abatement return (in tonnes of CO$_2$e per mil. KRW) across 62 markets during the 2020--2023 period. The bottom two rows separately report two components of the abatement return: one associated with products eligible for the subsidy (the own abatement return: $-\frac{1}{S_{g_1}} \sum_{j\in\mathscr{J}_g} \Theta_{j,g_1} \frac{\partial s_j}{\partial p_j}(e_j^D-e_j)$) and the other associated with products of other fuel types (the cross abatement return: $-\frac{1}{S_{g_1}} \sum_{g \neq g_1} \sum_{j\in\mathscr{J}_g} \Theta_{j,g_1} \frac{\partial s_j}{\partial p_j}(e_j^D-e_j)$). 
\end{table}%

\paragraph{Total abatement returns}\label{sec: Total abatement returns}

The previous results suggest that, under zero government spending, a marginal increase in HEV subsidies at the expense of taxing BEVs would reduce emissions. We now compare the total effects of BEV and HEV subsidies while accounting for the substantial existing BEV subsidies in South Korea. Specifically, we compare the current BEV subsidies with a uniform per-vehicle HEV subsidy calibrated so that total HEV subsidy expenditures approximately match the BEV subsidy budget over the four-year period 2020--2023. 

Table~\ref{tab: subsidy effects} shows the main findings. According to Panel A, the budget-neutral HEV subsidy amounts to an average of 1.9 million KRW per vehicle, substantially lower than the actual BEV subsidy of 8.6 million KRW per vehicle. This reflects the fact that HEV sales are significantly higher than BEV sales. At the same time, the HEV subsidy induces manufacturers to reduce gross prices of HEVs across the board (implying pass-through rates above 100\% for all models), whereas the BEV subsidy results in both increases and decreases in BEV prices (implying pass-through rates that vary across models between 71\% and 115\%); see Table~\ref{tab: counterfactual prices} in Appendix \ref{sec: Additional figures and tables}. 
In sum, the budget-neutral HEV subsidy is substantially below the BEV subsidy, but this is accompanied by slightly higher pass-through. Panels B and C of Table~\ref{tab: subsidy effects} show how the subsidies and implied price adjustments affect sales and emissions.

According to Panel B, the HEV subsidies would increase HEV sales by almost 178 thousand units, whereas BEV subsidies would increase BEV sales by only 78 thousand units. Hence, a budget-neutral HEV subsidy has a much larger sales effect despite the smaller magnitude of the subsidy. HEV subsidies would encourage more consumers to substitute away from ICEVs (gasoline, diesel, and LPG) than BEV subsidies do. Most notably, HEV subsidies would reduce gasoline-vehicle sales by 92 thousand, compared with only 28 thousand under BEV subsidies. Furthermore, HEV subsidies divert comparatively more sales from the outside good (consisting mainly of older ICEVs and therefore entailing the highest emissions). In sum, the HEV subsidy policy is more effective at inducing substitution away from ICEVs than the BEV subsidy policy. Consequently, the former may yield greater reductions in aggregate emissions than the latter.

Panel C of Table~\ref{tab: subsidy effects} confirms that this is indeed the case. HEV subsidies would lead to a reduction in total emissions of 2.1 million tonnes of CO$_2$e over the four-year period, approximately 47\% larger than the amount of the emission savings attributable to BEV subsidies. We separately calculate the various parts of equation \eqref{eqn: mkt emissions} to decompose the change in total emissions into three components: fuel-cycle, vehicle-cycle, and outside emissions. BEV subsidies yield larger reductions in fuel-cycle emissions than HEV subsidies (942 versus 572 thousand tonnes of CO$_2$e). However, they also imply larger vehicle-cycle emissions. Furthermore, they imply much lower savings in outside emissions (871 versus 1\,821 thousand tonnes of CO$_2$e), because they are less successful in diverting sales from the outside good.

\begin{table}[!t]
  \centering
  \caption{Comparison of BEV and HEV subsidy effects}
    \begin{tabular}{lrr}
    \toprule
          & \multicolumn{1}{l}{BEV subsidies} & \multicolumn{1}{l}{HEV subsidies} \\
    \midrule
    \multicolumn{3}{l}{\textit{Panel A: Subsidy spending}} \\
          &       &  \\
    \multicolumn{1}{l}{\quad Per-unit subsidy (million KRW)} & 8.646 & 1.943 \\
    \multicolumn{1}{l}{\quad Total subsidy expenditure (trillion KRW)} & 1.563 & 1.565 \\
          &       &  \\
    \midrule
    \multicolumn{3}{l}{\textit{Panel B: Sales effects (number of vehicles)}} \\
          &       &  \\
    \multicolumn{1}{l}{Total sales change} &       &  \bigstrut[b] \\
    \multicolumn{1}{l}{\quad BEV} & 77\,655  & -3\,206  \\
    \multicolumn{1}{l}{\quad HEV} & -15\,026  & 177\,697  \\
    \multicolumn{1}{l}{\quad Gasoline} & -28\,244  & -92\,213  \\
    \multicolumn{1}{l}{\quad Diesel} & -13\,793  & -31\,708  \\
    \multicolumn{1}{l}{\quad LPG} & -2\,114  & -8\,666  \\
    \multicolumn{1}{l}{\quad Outside option} & -18\,478  & -41\,903  \\
          &       &  \\
    \midrule
    \multicolumn{3}{l}{\textit{Panel C: Emissions effects (thousand tonnes CO$_2$e)}} \\
          &       &  \\
    \multicolumn{1}{l}{Total emissions change} & -1\,433  & -2\,104  \bigstrut[b]\\
    \multicolumn{1}{l}{\quad Fuel-cycle} & -942  & -572  \bigstrut[b]\\
    \quad \quad - ICEV & -1\,930  & -5\,208  \\
    \quad \quad - BEV & 1\,473  & -69  \\
    \quad \quad - HEV & -485  & 4\,705  \\
    \multicolumn{1}{l}{\quad Vehicle-cycle} & 381   & 289  \bigstrut[t]\\
    \multicolumn{1}{l}{\quad Outside option} & -871  & -1\,821  \\
          &       &  \\
    \bottomrule
    \end{tabular}%
  \label{tab: subsidy effects}%
\begin{flushleft}
\footnotesize Notes: The top panel compares the observed BEV subsidy levels with the counterfactual HEV subsidy design. The per-vehicle subsidy is calculated as the sales-weighted average across all nameplates. The bottom two panels report changes in sales and emissions over the 2020--2023 period following the implementation of BEV and HEV subsidies.
\end{flushleft}
\end{table}

Note that HEV subsidies lead to a smaller reduction in fuel-cycle emissions than in outside emissions (0.6 versus 1.8 million tonnes of CO$_2$e). This result, however, does not imply that the HEV subsidy policy reduces emissions primarily by inducing consumers to substitute away from the outside option rather than from ICEVs. Instead, it arises because the reduction in inside fuel-cycle emissions is a net figure that accounts for the simultaneous increase in emissions from newly adopted HEVs. To illustrate this point, Table~\ref{tab: subsidy effects} also decomposes the change in fuel-cycle emissions by fuel type: ICEVs (combining gasoline, diesel, and LPG), BEVs, and HEVs; see Appendix \ref{sec: Calculation of GHG emissions changes} for details of the decomposition. The decomposition reveals that HEV subsidies lead to a substantial reduction in fuel-cycle emissions from ICEVs (5.2 million tonnes of CO$_2$e in total), nearly three times as large as outside emissions reductions. Furthermore, BEV subsidies lead to a much smaller reduction in fuel-cycle emissions from ICEVs (only 1.9 million tonnes of CO$_2$e), reflecting lower substitution between BEVs and ICEVs. 

\paragraph{Sensitivity analysis}

To better understand the magnitude of the estimated emission effects, we compare the baseline emission effects with those obtained under a range of alternative specifications and assumptions. Table~\ref{tab: alternative specifications/assumptions} shows the results from this exercise.

First, the baseline emission effects are based on the flexible random coefficients logit model reported in Table \ref{tab: rc demand estimation results}, RC logit 2, which incorporates mileage heterogeneity and other sources of unobserved consumer heterogeneity. To illustrate the role of this flexibility, we quantify the impact of the HEV and BEV subsidies under our more restrictive specifications in Panel A of Table~\ref{tab: alternative specifications/assumptions}. RC logit 1, which assumes homogeneous mileage but accounts for unobserved heterogeneity, also predicts considerably larger emission savings under the HEV-focused subsidy, although it underestimates total emission savings of both policies, compared to the most flexible RC logit 2 specification.\footnote{Appendix \ref{sec: Mileage heterogeneity and policy effects} provides a detailed interpretation of why the assumption of homogeneous mileage underestimates emission savings under both policies.} In contrast, IV logit without consumer heterogeneity yields \textit{larger} emission savings under the BEV-focused subsidy. This is consistent with our theoretical analysis in Section \ref{sec: theory}, which implies that under symmetric logit demand a BEV subsidy is more effective whenever BEV emissions are lower than HEV emissions.


\begin{table}[!t]
  \centering
  \caption{Emission effects under alternative specifications and assumptions}
    \begin{tabular}{lrrr}
    \toprule
 &       & \multicolumn{1}{c}{BEV sub.} & \multicolumn{1}{c}{HEV sub.} \bigstrut[t] \\
    \midrule
    Baseline &       & -1\,433  & -2\,104  \\
          &       &       &  \\
    \midrule
    \multicolumn{4}{l}{\textit{Panel A: restrictive demand models}} \\
          &       &       &  \\
    IV Logit &       & -1\,203  & -1\,105  \\
    RC logit 1 &       & -1\,025  & -1\,624  \\
          &       &       &  \\
    \midrule
    \multicolumn{4}{l}{\textit{Panel B: alternative vehicle operating periods}} \\
          &       &       &  \\
    15- \& 9-year lifespan &       & -1\,084  & -1\,376  \\
    6- \& 6-year lifespan &       & -344  & -668  \\
          &       &       &  \\
    \midrule
    \multicolumn{4}{l}{\textit{Panel C: alternative ownership structures}} \\
          &       &       &  \\
    Brand &       & -1\,458  & -1\,942  \\
    Product &       & -1\,453  & -1\,877  \\
    \bottomrule
    \end{tabular}%
  \label{tab: alternative specifications/assumptions}%
  \tablenotes Panel A presents emission changes (in thousand tonnes of CO$_2$e) under restrictive demand models: IV logit and RC logit 1. Panel B presents emission changes under alternative vehicle operating periods: the ``15- \& 9-year lifespan'' row evaluates a truncated 9-year ownership horizon strictly for the outside option. The ``6- \& 6-year lifespan'' row applies a 6-year holding period to both inside and outside goods. Panel C presents emission changes under alternative ownership structures: brand-level ownership and product-level ownership.
\end{table}%

Second, the relative magnitude of the emission effects under HEV and BEV subsidies hinges on assumptions regarding the vehicles' expected operating period for new and secondhand vehicles.  Our baseline analysis assumes a 15-year operating period, in line with both the average passenger-vehicle lifespan in South Korea and the existing literature, and uses this horizon to compute changes in cumulative emissions. Since the lifespan of the outside option (secondhand vehicles) exceeds 15 years under this baseline specification, we consider two alternatives here. First, we continue to assume an expected operating period of 15 years for new vehicles, but set the expected operating period of secondhand vehicles to 9 years, which equals the expected vehicle life span of 15 years minus the average holding period of new vehicles of 6 years. Second, we conservatively assume an expected operating horizon of 6 years for all options, matching the average initial holding period in South Korea, to measure policy-induced environmental effects over the interval before consumers make their subsequent vehicle choices.

Panel B of Table~\ref{tab: alternative specifications/assumptions} compares the results from these alternative scenarios with our baseline case. In both alternative scenarios, HEV subsidies still reduce emissions more than BEV subsidies. In the first alternative scenario, with a shorter expected operating period of 9 years for secondhand vehicles, HEV subsidies reduce emissions by 1.4 million tonnes of CO$_2$e, compared with only 1.1 million tonnes for BEV subsidies. The difference is smaller because emission savings from diverting sales away from secondhand vehicles (the outside option) are especially reduced under HEV subsidies, owing to the shorter assumed 9-year horizon. In the second scenario, with a shorter expected operating period of 6 years for both new and secondhand vehicles, estimated emission savings are naturally smaller, but HEV subsidies remain more effective, even more so in percentage terms.

Third, the relative magnitude of the emission effects may also depend on the degree of subsidy pass-through, which in turn is influenced by firms' market power under multi-product Bertrand competition. As discussed in Section~\ref{sec: south korean car market}, the South Korean passenger vehicle market is highly concentrated, with Hyundai Motor Group accounting for over 70\% of total sales. To assess the role of market power, we compare the predicted emission effects under our baseline parent-company ownership assumption with those under two alternative ownership structures: brand-level ownership (treating Hyundai, Kia, and Genesis as separate firms) and product-level ownership.

We find that the emission effects of the subsidies remain very similar across these alternative ownership structures (see Panel C of Table~\ref{tab: alternative specifications/assumptions}). This suggests that competitive constraints from rival firms, such as Toyota for HEVs and Tesla for BEVs, limit Hyundai's incentives to reduce subsidy pass-through, thereby preserving diversion away from ICEVs.

In sum, across these alternative scenarios, the emission savings are consistently lower under BEV subsidies than under budget-neutral HEV subsidies. These findings confirm the comparative strength of HEVs in diverting sales away from conventional ICEVs.

\subsection{The role of the carbon intensity of electricity generation}\label{sec: counterfactual 3}

In the previous subsection, our comparison of BEV and HEV subsidies has been based on fuel-cycle emissions for BEVs evaluated under the current electricity generation mix in South Korea. The carbon intensity of electricity generation depends on the relative importance of fossil fuels and renewable energy sources. This mix evolves over time and varies across countries, partly reflecting past policy choices regarding the promotion of renewable energy. If electricity generation is highly carbon-intensive, HEV subsidies may be even more effective relative to BEV subsidies. Conversely, as electricity generation becomes less carbon-intensive, BEV subsidies may become more effective.

To assess how the energy mix affects the relative effectiveness of HEV and BEV subsidies, we compare their emission savings at different phases of the energy transition, using other countries as benchmarks. Specifically, we adjust the electricity-sector emission factor (g CO$_2$e/kWh of electricity generation) to the levels in other countries. This adjustment alters the fuel-cycle emissions, $e_j^{\text{FC}}$, of BEVs. As discussed in Section~\ref{sec: greet}, these emissions are obtained by dividing the electricity emission factor by vehicle $j$'s fuel economy (in km/kWh). Formally,
\begin{equation*}
	e_j^{\text{FC}} = \frac{\text{emission factor (g CO$_2$e/kWh)}}{\text{fuel economy}_j \text{ (km/kWh)}}.
\end{equation*}

In 2022, the emission factor in South Korea was 453 g CO$_2$e/kWh, comparable to that of the United States and Germany and lower than the world average of 526 g CO$_2$e/kWh. Most European countries have lower emission factors, whereas Japan, China, and many developing countries have higher ones. Table \ref{tab: ghg emission intensity} in Appendix \ref{sec: Data description} provides the emission-factor levels for 148 countries.

We calculate the emission savings under BEV and HEV subsidies using the emission factors of 18 other countries, assuming that these emission factors affect only the fuel-cycle emissions of BEVs, while fuel-cycle emissions for other fuel types and vehicle-cycle emissions remain constant. Figure \ref{fig: comparison energy mix} summarizes the results; the corresponding numerical results are reported in Table \ref{tab: policy counterfactual 3} of Appendix \ref{sec: Additional figures and tables}.

\begin{figure}[!t]
\centering
\caption{Policy comparison with different GHG emission intensity of electricity generation}
    \includegraphics[width=.7\textwidth]{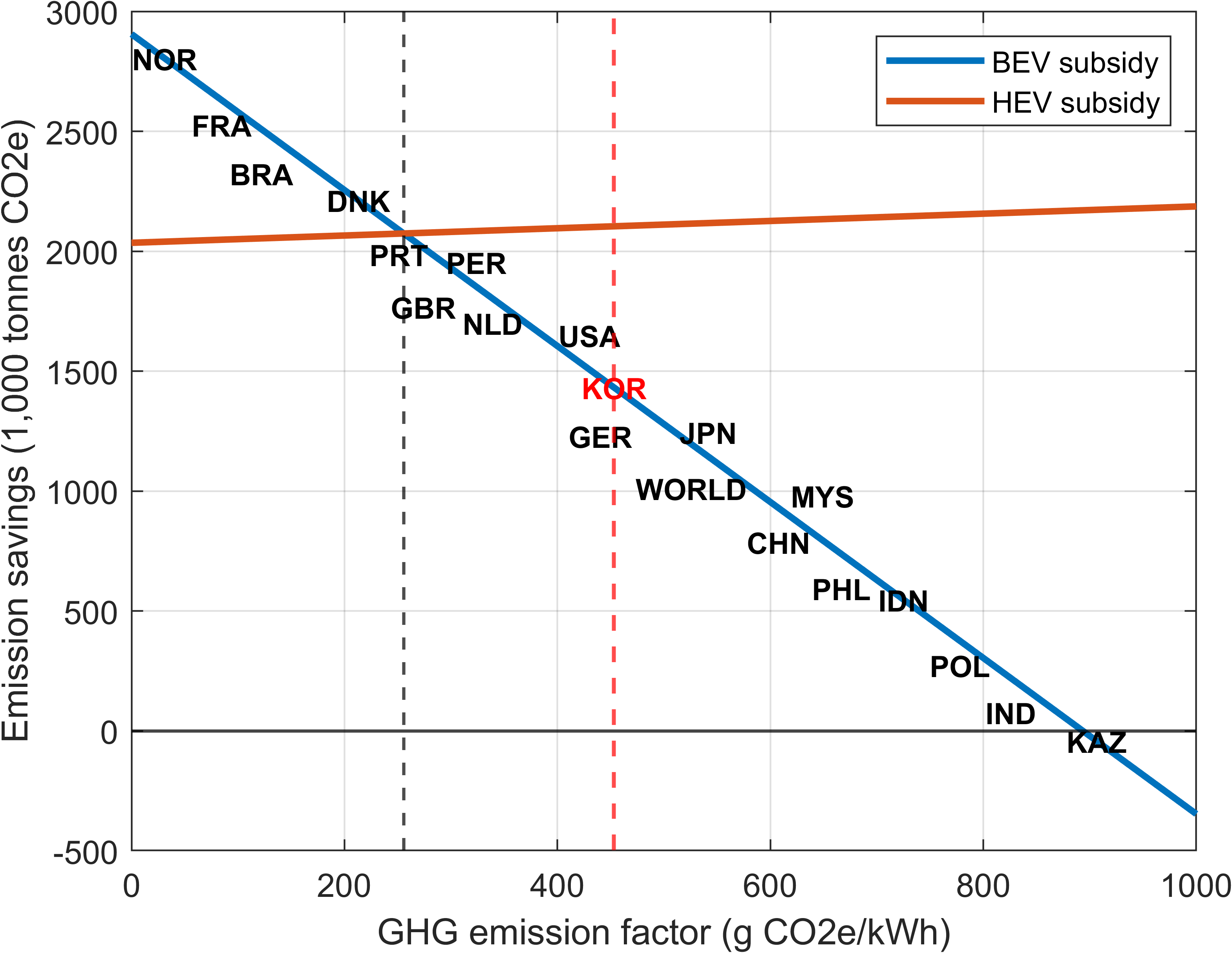}
\label{fig: comparison energy mix}
\tablenotes This figure illustrates the emission savings under BEV and HEV subsidy policies across varying GHG emission factors.
\end{figure}

Emission savings from the BEV subsidy decrease linearly with the emission factor (represented by the blue line), amounting to 2.8 million tonnes of CO$_2$e under Norway's low-carbon electricity generation mix and only 0.2 million tonnes under India's much more carbon-intensive mix. Conversely, emission savings from the HEV subsidy increase slightly with the emission factor (represented by the red line). As the grid becomes more carbon-intensive, the emissions profile of a BEV worsens relative to that of an HEV, which mechanically increases the emission savings of the HEV subsidy policy.

At South Korea's current electricity generation mix, emission savings amount to 1.4 million tonnes of CO$_2$e for the BEV subsidy and 2.1 million tonnes for the HEV subsidy. As the emission factor declines with a cleaner electricity generation mix, the emission savings of a BEV subsidy grow rapidly. The emission savings from the two policies approximately equalize at Portugal's emission factor level (251 g CO$_2$e/kWh), which is 55\% of South Korea's current level. In other words, BEV subsidies become as effective as HEV subsidies only if South Korea's electricity generation mix becomes 45\% cleaner. At even lower emission-factor levels, a BEV subsidy becomes more effective than an HEV subsidy. 

In contrast, BEV subsidies would become less effective at curbing GHG emissions if South Korea's emission factor were higher than its current level. For example, at China's emission factor level, emission savings from a BEV subsidy would be less than half of the savings from an HEV subsidy (0.9 million tonnes versus 2.1 million tonnes of CO$_2$e). If the electricity generation mix were as carbon-intensive as in India and Kazakhstan, BEV subsidies would hardly reduce emissions or could even be environmentally harmful.
 
These results emphasize that the environmental efficacy of a BEV subsidy policy depends critically on the carbon intensity of electricity generation. The greater the reliance on cleaner energy sources such as wind, solar, and hydropower, the greater the environmental benefits of BEV adoption. In fact, many countries with large automobile markets, including China, the United States, Japan, and India, may not yet have transitioned sufficiently toward low-carbon electricity to fully realize these potential benefits.

\section{Conclusion and discussion}\label{sec: conclusion}

We develop a framework for comparing the relative effectiveness of subsidies for alternative emission-reducing technologies. We show that an intermediate technology may reduce emissions more effectively than the cleanest technology if it 
induces sufficiently greater substitution away from the prevailing high-emission technology to offset its emission disadvantage relative to the cleanest technology. Our empirical analysis shows that these conditions are satisfied in the South Korean passenger vehicle market. Specifically, HEVs exhibit a sufficiently small emission gap with BEVs and induce considerably more substitution away from ICEVs. As a result, reallocating the subsidy budget from BEVs to HEVs would reduce total emissions by an additional 47\%. For BEV subsidies to outperform HEV subsidies in reducing emissions, the country's electricity generation mix would need to become 45\% cleaner.

Our findings highlight two main challenges for current BEV-focused policies in fully realizing their environmental benefits. First, technological progress that lowers BEV prices and improves vehicle attributes is essential for inducing consumers of conventional ICEVs to switch to BEVs in response to subsidies. Second, a cleaner electricity generation mix is needed to ensure that increased BEV adoption translates into larger emissions reductions.

If HEV subsidies are currently more effective at reducing emissions than BEV subsidies under South Korea's existing electricity generation mix, why has South Korea nevertheless chosen to focus policy support on BEVs? One possible explanation is that BEV subsidies perform better in terms of other welfare components. To examine this possibility, we use our model to go beyond emissions and compare the effects of the two subsidy policies on firm profits and consumer surplus, as shown in Table \ref{tab: welfare}. First, the industry as a whole benefits much more from HEV than from BEV subsidies. Moreover, Hyundai Motor Group, the dominant domestic player, benefits disproportionately more from HEV than from BEV subsidies, with profits increasing by nearly four times as much. These findings therefore suggest no short-term industrial policy rationale for BEV subsidies. 

\begin{table}[!t]
  \centering
  \caption{Changes in firm profits and consumer surplus}
    \begin{tabular}{lrr}
    \toprule
    Term  & BEV subsidies & HEV subsidies \\
    \midrule
    \\
    Producer surplus change & 292.5 & 1\,024.3 \bigstrut[b] \\
    \hspace{.1in}- Hyundai Motor Group & 291.7 & 1\,111.4 \\
    \hspace{.1in}- Other domestic firms & -12.4 & -66.0 \\
    \hspace{.1in}- German firms & -48.4 & -66.7 \\
    \hspace{.1in}- Toyota Group & -5.8  & 52.7 \\
    \hspace{.1in}- Tesla & 67.5  & -7.1   \\
    \\
    Consumer surplus change & 971.5 & 1\,253.4  \\
    \hspace{.1in}- Bottom 20\% & 0.4   & -0.7 \\
    \hspace{.1in}- 20-40\% & 23.6  & 21.2 \\
    \hspace{.1in}- 40-60\% & 109.4 & 194.5 \\
    \hspace{.1in}- 60-80\% & 230.0 & 378.2 \\
    \hspace{.1in}- Top 20\% & 608.1 & 660.2 \\
    \\
    \bottomrule
    \end{tabular}%
  \label{tab: welfare}%
\tablenotes This table reports changes (in billion KRW) in firm profits and consumer surplus (by income quantile) under vehicle subsidy policies (for either BEVs or HEVs). Other domestic firms include GM Korea, KG Mobility, and Renault Korea, while German firms include BMW, Mercedes-Benz, and Volkswagen Group.
\end{table}%

Second, consumers also benefit more from HEV than BEV subsidies, reflecting the larger demand responses and higher pass-through rates associated with HEV subsidies. Moreover, no income group benefits more from BEV than from HEV subsidies, apart from negligible differences for the two lowest income groups. Hence, BEV subsidies also cannot be rationalized as a means of protecting consumers generally or targeting particular income groups.

The absence of an immediate welfare rationale within our framework suggests that the motivation for BEV subsidies may lie elsewhere. One possibility is that political economy considerations play a role if governments place disproportionate emphasis on promoting frontier technologies that are particularly salient to voters. Politicians often invest in public goods based on the visibility of their outcomes rather than their contribution to voters' welfare (see, for example, \citet{mani2007democracy}). In line with this reasoning, BEVs' widespread reputation as the cleanest available technology may lead politicians to promote their adoption without closely examining the welfare consequences relative to other technologies.

Another possible rationale for BEV subsidies is that governments pursue long-term industrial policy objectives. Indeed, since January 2026, the South Korean government has explicitly identified ``support for the domestic BEV industry'' as a policy objective (alongside carbon reduction). Similar considerations may also apply in other countries that actively promote BEV adoption, such as China, whose electricity generation mix is more carbon-intensive than South Korea's. However, if the objective is to stimulate innovation and learning-by-doing in the domestic BEV industry, it remains unclear whether purchase subsidies are the most appropriate policy instrument, or whether alternative instruments targeted more directly at innovation and production would be preferable.

Finally, governments may use BEV subsidies as part of a broader international trade strategy in response to similar industrial policies adopted abroad.\footnote{\citet{barwick2026trade} show that a revenue-neutral combination of a moderate tariff on imported EVs and a subsidy for domestic EV production improves national welfare compared to both outright protectionism and laissez-faire.} The United States and the European Union, for example, have recently combined restrictions on Chinese EVs with domestic incentives for EV and battery production \citep{USTR2024Section301, USTR2024NoticeModification, EuropeanCommission2024CVD,IRA2022, EuropeanCommission2021IPCEI}. South Korea has recently moved in a similar direction by allocating subsidies to firms based on their contribution to the domestic BEV supply chain \citep{joongang2026ev}, effectively reducing subsidies to Chinese automakers such as BYD. Exploring the interaction between environmental policy and these broader policy objectives remains an important topic for future research.

\newpage
\singlespace
\bibliographystyle{apalike}
\bibliography{reference}

\newpage
\appendix
\renewcommand{\thesection}{\Alph{section}}
\renewcommand{\thefigure}{\thesection\arabic{figure}}
\setcounter{section}{0}
\setcounter{figure}{0}

\onehalfspacing

\section{Supplement to Section \ref{sec: theory}}\label{app: theory_supplement}

\subsection{Proof of Proposition \ref{p: per-unit subsidy}}\label{app: prop 1 derive}

Given our focus on BEV and HEV subsidies, $\bm{\tau}=(0,\tau_H,\tau_B)$, total emissions and government expenditure can be expressed as
\begin{align*}
&E(\tau_H, \tau_B) = M\sum_{j \in \{I, H, B\}} e_j s_j(\mathbf{p}(\tau_H, \tau_B)), \\
&G(\tau_H, \tau_B) = M\sum_{j \in \{H, B\}} \tau_js_j(\mathbf{p}(\tau_H, \tau_B)).
\end{align*}

Under complete subsidy pass-through and no price response for non-subsidized products, the effect of a subsidy for product $j \in \{H, B\}$ on total emissions is given by Equation \eqref{eqn: partial E partial tau} in the main text. The impact on total government expenditure is given by
\begin{equation}\label{eqn: partial G partial tau}
    \frac{\partial G}{\partial\tau_j} = M \left( s_j - \sum_{\ell \in \{H, B\}} \tau_{\ell} \frac{\partial s_{\ell}}{\partial p_j}\right),    
\end{equation}
which we assume is positive. This assumption holds provided that (i) the current subsidy $\tau_k$ ($k\neq j$) is sufficiently small, or (ii) the subsidy increase primarily diverts sales away from ICEVs (the highest emission product) rather than between HEVs and BEVs.

To calculate the impact of a budget-neutral subsidy increase for product $j\in\{H,B\}$, financed by a subsidy decrease for product $k\neq j$ to maintain a constant government budget $\bar{G}$, let the function $\tau_k(\tau_j)$ be implicitly defined by the government budget constraint $G(\tau_j, \tau_k(\tau_j)) = \bar{G}$. Applying the implicit function theorem yields
\begin{equation}\label{eqn: d tau B over d tau H}
    \frac{\partial G}{\partial \tau_j} + \frac{\partial G}{\partial \tau_k}\frac{d \tau_k}{d \tau_j} = 0,
\end{equation}
for $j\in\{H,B\}$, $k\neq j$, so that
\begin{equation}\label{eqn: partial tau B over partial tau H}
    \frac{d \tau_k}{d \tau_j} = - \frac{\partial G / \partial \tau_j}{\partial G / \partial \tau_k}.
\end{equation}

The marginal impact of a budget-neutral increase in the per-unit subsidy $\tau_j$ on total emissions is then given by
\begin{equation}\label{eqn: ift}
    \frac{d E(\tau_j, \tau_k(\tau_j))}{d \tau_j} = \frac{\partial E}{\partial \tau_j} +  \frac{\partial E}{\partial \tau_k}\frac{d \tau_k}{d \tau_j}.
\end{equation}
Substituting \eqref{eqn: partial tau B over partial tau H} and rearranging, it follows that $\frac{d E}{d \tau_j}$ is negative if and only if
\begin{equation}\label{eq: abatement return condition}
    \frac{-\partial E/\partial \tau_j}{\partial G/\partial\tau_j}>\frac{-\partial E/\partial \tau_k}{\partial G/\partial\tau_k},
\end{equation}
i.e., if and only if the marginal abatement return of a per-unit subsidy is higher for product $j$ than for product $k$.

Substituting equations \eqref{eqn: partial E partial tau} and \eqref{eqn: partial G partial tau}, we can express the condition \eqref{eq: abatement return condition} under which a budget-neutral subsidy increase for product $j$ decreases total emissions (i.e., $dE/d\tau_j < 0$) as
\begin{equation}\label{eqn: general proposition}
    \frac{- \frac{\partial s_j}{\partial p_j} (e_j^D-e_j)}{s_j - \tau_j \frac{\partial s_j}{\partial p_j} - \tau_k \frac{\partial s_k}{\partial p_j}} >  \frac{- \frac{\partial s_k}{\partial p_k} (e_k^D-e_k)}{s_k - \tau_k \frac{\partial s_k}{\partial p_k} - \tau_j \frac{\partial s_j}{\partial p_k}}.
\end{equation}
Evaluating condition \eqref{eqn: general proposition} at zero subsidies ($\tau_H = \tau_B = 0$) yields condition \eqref{eqn: main} in Proposition \ref{p: per-unit subsidy}. 

\subsection{Comparison of net diverted emissions}\label{app: diversion emission derive}

If demand responses are identical across technologies, $-\frac{\partial s_j}{\partial p_j}\frac{1}{s_j} = -\frac{\partial s_k}{\partial p_k}\frac{1}{s_k}$, then the condition for an HEV subsidy to be more effective than a BEV subsidy reduces to $e_H^D-e_H > e_B^D-e_B$. This condition can also be expressed directly in terms of diversion ratios, $D_{H\to I}$ and $D_{B\to I}$, instead of diverted emissions, $e_H^D$ and $e_B^D$:
\begin{equation}\label{eqn: condition diversion ratio}
    e_H^D - e_H > e_B^D - e_B \quad \text{if and only if} \quad \frac{2-D_{H\to I}}{2-D_{B\to I}} < \frac{e_I-e_H}{e_I-e_B}.
\end{equation}
To show this, we use the definition of $e_j^D$ and the fact that $\sum_{k\neq j}D_{j\to k}=1$ to obtain
\begin{equation*}
\begin{aligned}
    e_H^D - e_H & = D_{H\to I}e_I + D_{H\to B}e_B - e_H \\
    & = (1-D_{H\to B}) e_I + D_{H\to B}e_B - e_H \\
    & = D_{H\to B}(e_B-e_I) + e_I -e_H.
\end{aligned}
\end{equation*}
Similarly,
\begin{equation*}
    e_B^D - e_B = D_{B\to H}(e_H-e_I) + e_I - e_B.
\end{equation*}
Substituting these expressions and rearranging yield the equivalent condition above.

Intuitively, condition \eqref{eqn: condition diversion ratio} highlights that when $e_H=e_B$, an HEV subsidy is more effective whenever $D_{H\to I}>D_{B\to I}$, i.e. when HEVs divert more sales from ICEVs than BEVs do.

\subsection{Proof of Proposition \ref{p: per-unit subsidy generalized}}\label{app: prop 2 derive}

Here, we allow for multiple products within each fuel type, additional fuel types, and non-trivial subsidy pass-through under imperfect competition between multi-product price-setting firms. In this setup, let $\mathscr{J}$ and $\mathscr{J}_g$ denote the set of all inside products and the set of products with fuel type $g$, respectively. Inside products are partitioned into $G$ groups according to their fuel types, such that $\mathscr{J} = \mathscr{J}_1 \cup \dots \cup \mathscr{J}_g \cup \dots \cup \mathscr{J}_G$. Fix a subsidy policy $\bm{\tau}=(\tau_1, \dots, \tau_g, \dots, \tau_G)$, where $\tau_g$ denotes the per-unit subsidy for all products with fuel type $g$. Let $\Theta_{jg} = -\frac{\partial p_j}{\partial \tau_g}$ denote the subsidy pass-through.

We also allow usage emissions to vary across consumers through heterogeneous annual mileage. Let $e_{ik}$ denote the \emph{individual} emissions of product $k$, that is, the emissions from the product when used by consumer $i$. Let $D_{j\to k}^i = -\frac{\partial s_{ik}/\partial p_j}{\partial s_{ij}/\partial p_j}$ denote the \emph{individual} diversion ratio from product $j$ to $k$. The \emph{individual} diverted emissions, defined as 
\begin{equation*}
    e_{ij}^D = \sum_{k\neq j}D_{j\to k}^ie_{ik},
\end{equation*}
vary across consumers due to heterogeneity in both emissions and demand responses.

Following a similar logic in Section \ref{app: prop 1 derive}, the condition under which a budget-neutral subsidy increase for fuel type $g_1$, financed by a subsidy decrease for fuel type $g_2 \neq g_1$, decreases total emissions is given by
\begin{equation}\label{eqn: equiv}
    \frac{-\partial E/\partial \tau_{g_1}}{\partial G/\partial \tau_{g_1}} > \frac{-\partial E/\partial \tau_{g_2}}{\partial G/\partial \tau_{g_2}}.
\end{equation}

Consumer $i$'s expected emissions under the subsidy policy $\bm{\tau}$, $E_i(\bm{\tau})$, are given by the weighted average of emissions across all products and the outside option, using the consumer's choice probabilities as weights:
\begin{equation*}
    E_i(\bm{\tau}) = \sum_{k\in\mathscr{J}\cup\{0\}} e_{ik} s_{ik}(\mathbf{p}(\bm{\tau})).
\end{equation*}
Total market emissions are obtained by integrating these individual emissions:
\begin{equation*}
    E(\bm{\tau}) = M \int E_i(\bm{\tau}) \, dF(i) = M \int \sum_{k\in\mathscr{J}\cup\{0\}} e_{ik} s_{ik}(\mathbf{p}(\bm{\tau})) \, dF(i),
\end{equation*}
where $M$ represents the market size, and $F(i)$ denotes the consumer distribution, defined over observed demographics (income and mileage) as well as unobserved heterogeneity.

Differentiating $-E(\bm{\tau})$ with respect to $\tau_g$ gives
\begin{equation}\label{eqn: partial E partial tau generalized}
\begin{aligned}
    -\frac{\partial E}{\partial\tau_g} 
    & = -M \cdot \int \sum_{k\in\mathscr{J}\cup\{0\}}e_{ik} \frac{\partial s_{ik}}{\partial\tau_g} d F(i) \\
    & = -M \cdot \int \sum_{k\in\mathscr{J}\cup\{0\}} e_{ik} \left( \sum_{g=1}^G\sum_{j\in\mathscr{J}_g} \frac{\partial s_{ik}}{\partial p_j}(-\Theta_{jg}) \right) d F(i) \\
    & = M \cdot \int \sum_{g=1}^G\sum_{j\in\mathscr{J}_g} \Theta_{jg} \left( e_{ij}\frac{\partial s_{ij}}{\partial p_j} + \sum_{k\neq j}e_{ik}\frac{\partial s_{ik}}{\partial p_j} \right) d F(i) \\
    & = M \sum_{g=1}^G\sum_{j\in\mathscr{J}_g} \Theta_{jg} \cdot \int - \frac{\partial s_{ij}}{\partial p_j} \left( \sum_{k\neq j} D_{j\to k}^i e_{ik} - e_{ij} \right) d F(i) \\
    & = - M \sum_{g=1}^G\sum_{j\in\mathscr{J}_g}  \Theta_{jg} \frac{\partial s_j}{\partial p_j} \left( \int \omega_{ij} e_{ij}^D d F(i) - \int \omega_{ij} e_{ij} d F(i) \right) \\
    & = - M \sum_{g=1}^G\sum_{j\in\mathscr{J}_g}  \Theta_{jg} \frac{\partial s_j}{\partial p_j} ( e_{j}^D - e_{j} ),  
\end{aligned}
\end{equation}
where $\omega_{ij}$ is defined as
\begin{equation*}
    \omega_{ij} = \frac{\partial s_{ij}/\partial p_j}{\int \frac{\partial s_{ij}}{\partial p_j} \, dF(i)} = \frac{\partial s_{ij}/\partial p_j}{\partial s_j/\partial p_j}. 
\end{equation*}
We use $\omega_{ij}$ as the weights for the individual-level emissions metrics ($e_{ij}$ and $e_{ij}^D$) to define the product-level emissions $e_j$ and product-level diverted emissions $e_j^D$ as weighted averages of their individual counterparts:
\begin{equation*}
    e_j = \int \omega_{ij} e_{ij} \, dF(i) 
    \quad \text{and} \quad
    e_j^D = \int \omega_{ij} e_{ij}^D \, dF(i). 
\end{equation*}
The final equality in equation \eqref{eqn: partial E partial tau generalized} follows from these definitions.

Next, observe that
\begin{equation*}
\begin{aligned}
    G(\bm{\tau}) & = M\sum_{j\in\mathscr{J}} \tau_js_j(\mathbf{p}(\bm{\tau})) \\
    & = M\left( \sum_{j\in\mathscr{J}_g} \tau_gs_j(\mathbf{p}(\bm{\tau})) + \sum_{r\neq g, r\neq 0}\sum_{k\in\mathscr{J}_r}\tau_rs_k(\mathbf{p}(\bm{\tau})) \right) \\
    & = M \left( \tau_gS_g + \sum_{r\neq g,r\neq 0}\tau_rS_r\right).
\end{aligned}
\end{equation*}
Differentiating $G(\bm{\tau})$ with respect to $\tau_g$ gives
\begin{equation}\label{eqn: partial G partial tau generalized}
    \frac{\partial G}{\partial\tau_g} = M\left( S_g + \tau_g\frac{\partial S_g}{\partial\tau_g} + \sum_{r\neq g, r\neq 0}\tau_r\frac{\partial S_r}{\partial \tau_g} \right).
\end{equation}

Finally, substituting equations \eqref{eqn: partial E partial tau generalized} and \eqref{eqn: partial G partial tau generalized}, we can express condition \eqref{eqn: equiv} as
\begin{equation}\label{eqn: generalized proposition} 
    - \frac{\sum_{g=1}^G\sum_{j\in\mathscr{J}_g} \Theta_{j,g_1} \frac{\partial s_j}{\partial p_j}  (e_j^D-e_j)}{S_{g_1} + \tau_{g_1}\frac{\partial S_{g_1}}{\partial\tau_{g_1}} + \sum_{r\neq g_1, r\neq 0}\tau_r\frac{\partial S_r}{\partial\tau_{g_1}}}
    >  
    - \frac{\sum_{g=1}^G\sum_{j\in\mathscr{J}_g}  \Theta_{j,g_2} \frac{\partial s_j}{\partial p_j} (e_j^D-e_j)}{S_{g_2} + \tau_{g_2}\frac{\partial S_{g_2}}{\partial\tau_{g_2}} + \sum_{r\neq g_2, r\neq 0}\tau_r\frac{\partial S_r}{\partial\tau_{g_2}}}.
\end{equation}
Evaluating condition \eqref{eqn: generalized proposition} at zero subsidies (i.e., $\tau_g = 0, ~ \forall g$) yields condition \eqref{eqn: multiproduct} in Proposition \ref{p: per-unit subsidy generalized}.

\subsection{Total effects of budget-neutral reallocation}\label{app: total condition}

Proposition \ref{p: per-unit subsidy} and \ref{p: per-unit subsidy generalized} in the main text and the proofs in Appendix \ref{app: prop 1 derive} and \ref{app: prop 2 derive} have focused on the \emph{marginal} effects of changing per-unit subsidies across fuel types. We now examine the \emph{total} effects of a budget-neutral reallocation. Specifically, for a fixed government budget, we provide a condition under which one fuel-specific policy yields a greater \emph{total} emissions reduction than the other.

Let $G(\bm{\tau})$ denote total government expenditure under subsidy vector $\bm{\tau}$. For a subsidy policy targeting fuel type $g$, let $\tau_g(\bar{G})$ denote the per-unit subsidy level that yields total government expenditure $\bar{G}>0$, with the corresponding subsidy vector
\begin{equation*}
\bm{\tau}^g(\bar{G}) = (0,\dots,\tau_g(\bar{G}),\dots,0).
\end{equation*}
Let $E_g(\bar{G}) \equiv E(\bm{\tau}^g(\bar{G}))$ denote the aggregate emissions under this policy. Further, let $G_g(\tau)$ represent the total government expenditure required to sustain a per-unit subsidy level $\tau$ for fuel type $g$. 

The total emissions reduction under this policy is given by
\begin{equation}\label{eqn: total emission}
\begin{aligned}
\Delta E_g(\bar{G}) &= E(\bm{\tau}^g(0)) - E(\bm{\tau}^g(\bar{G})) \\
&= \int_0^{\tau_g(\bar{G})} \left. -\frac{\partial E(\bm{\tau})}{\partial \tau_g} \right\rvert_{\bm{\tau} = (0,\dots, \tau,\dots,0)} \, d\tau \\
&= \int_0^{\tau_g(\bar{G})} \left. -\frac{{\partial E(\bm{\tau})}/{\partial \tau_g}}{{\partial G(\bm{\tau})}/{\partial \tau_g}} \right\rvert_{\bm{\tau} = (0,\dots, \tau,\dots,0)} \frac{dG_g(\tau)}{d\tau} \, d\tau \\
&= \int_0^{\bar{G}} \left. -\frac{{\partial E(\bm{\tau})}/{\partial \tau_g}}{{\partial G(\bm{\tau})}/{\partial \tau_g}} \right\rvert_{\bm{\tau} = \bm{\tau}^g(G)} \, dG,
\end{aligned}
\end{equation}
where $\bm{\tau}^g(G) \equiv (0,\dots, \tau_g(G),\dots,0)$. The integrand represents the marginal emissions reduction per unit of government expenditure (marginal abatement return) at expenditure level $G$.

Consequently, the subsidy policy targeting fuel type $g_1$, given by $\bm{\tau}^{g_1}(\bar{G})$, reduces emissions more than the policy targeting fuel type $g_2$, given by $\bm{\tau}^{g_2}(\bar{G})$, if and only if
\begin{equation}\label{eqn: total emission comparison}
    \int_0^{\bar{G}} \underbrace{ \left. -\frac{{\partial E(\bm{\tau})}/{\partial \tau_{g_1}}}{{\partial G(\bm{\tau})}/{\partial \tau_{g_1}}} \right\rvert_{\bm{\tau} = \bm{\tau}^{g_1}(G)} }_{MAR_{g_1}(G)}\, dG 
    ~ > ~ 
    \int_0^{\bar{G}} \underbrace{ \left. -\frac{{\partial E(\bm{\tau})}/{\partial \tau_{g_2}}}{{\partial G(\bm{\tau})}/{\partial \tau_{g_2}}} \right\rvert_{\bm{\tau} = \bm{\tau}^{g_2}(G)} }_{MAR_{g_2}(G)} \, dG.
\end{equation}

Figure \ref{fig: total effect} illustrates condition \eqref{eqn: total emission comparison} in a setting where the marginal abatement return declines with total government expenditure. Observe that comparing the marginal abatement returns at the zero-subsidy baseline, as in Propositions \ref{p: per-unit subsidy} and \ref{p: per-unit subsidy generalized}, is equivalent to comparing the $y$-intercepts of $MAR_{g_1}(G)$ and $MAR_{g_2}(G)$. Similarly, comparing the marginal abatement returns at the current subsidy level would be equivalent to comparing the vertical lines at $\bar{G}$. Our empirical analysis of HEV- vs. BEV-focused subsidy policies yields results consistent with this illustration when $g_1$ and $g_2$ denote the HEV and BEV fuel types, respectively. Specifically, the first part of Section \ref{sec: counterfactual 1} shows that an HEV-focused subsidy policy yields a higher marginal abatement return at the zero-subsidy baseline than a BEV-focused policy (by the blue intercept). The second part of Section \ref{sec: counterfactual 1} shows that the HEV-focused policy also achieves a greater total emission reduction (by the blue-shaded area). 

\begin{figure}[t!]
    \centering
    \caption{Comparison of total effects}
    \includegraphics[width=0.7\linewidth]{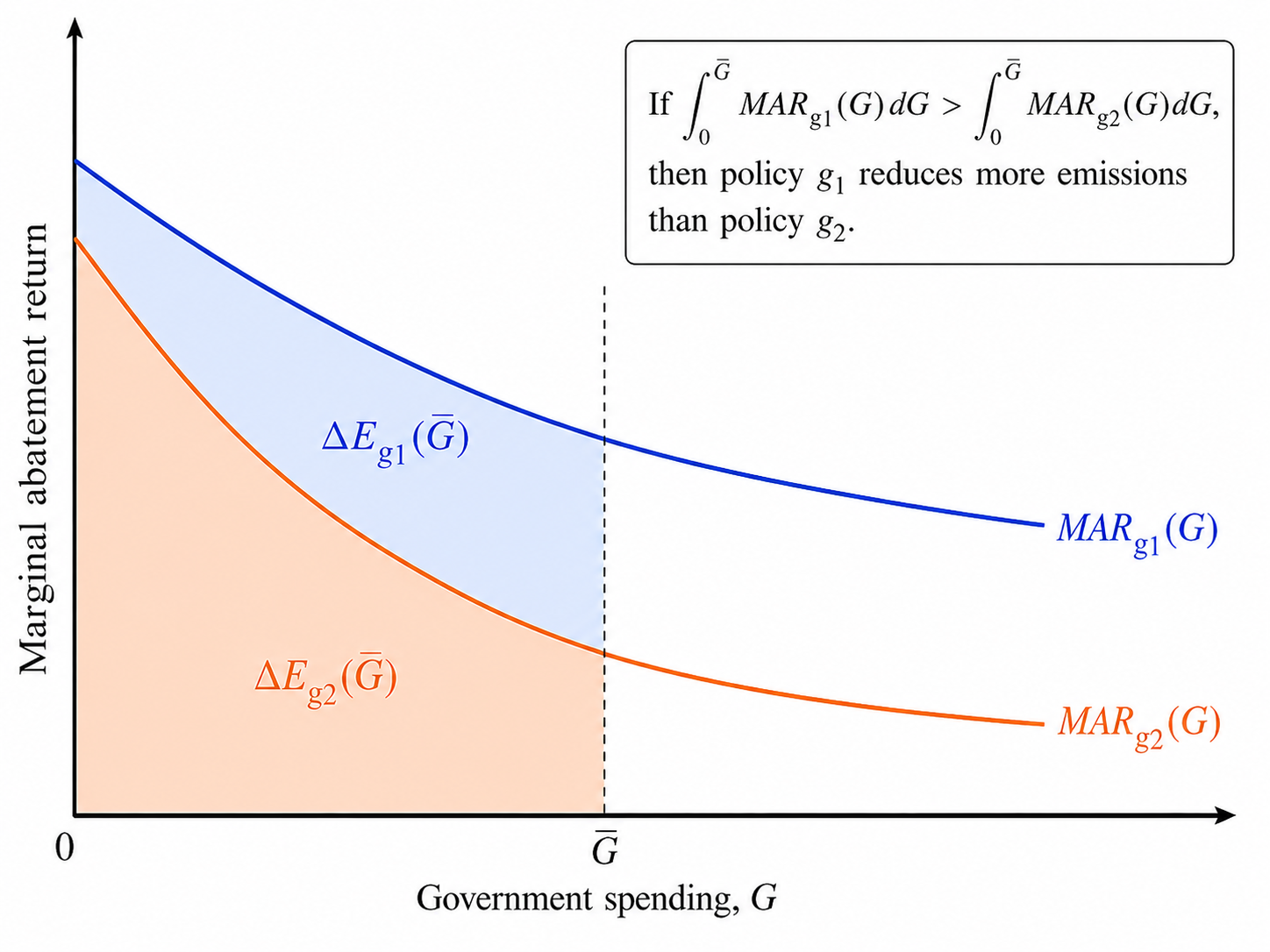}
    \label{fig: total effect}
\end{figure}

\subsection{Derivation of the subsidy pass-through matrix under Bertrand-Nash setup}\label{app: pass thru derive}

Fix a subsidy policy $\bm{\tau}=(\tau_1,\cdots, \tau_g, \cdots, \tau_G)$. Recall that $\mathbf{p}$ is a $J\times 1$ net consumer price vector whose element is given by $p_j=p_j^G- \tau_g$, where $p_j^G$ is a gross price set by firms and $\tau_g$ is the per-unit subsidy given to $j\in\mathscr{J}_g$. Let $B$ be a $J\times G$ subsidy assignment matrix whose $(j,g)$ entry is given by $B_{jg}=\bm{1}\{j\in\mathscr{J}_g\}$. Then, $\mathbf{p}=\mathbf{p}^G-B\bm{\tau}$.

From the Bertrand-Nash first-order condition \eqref{eqn: foc} in the main text, define a $J\times 1$ function $F(\mathbf{p},\bm{\tau})$ by 
\begin{equation*}
    F(\mathbf{p},\bm{\tau}) \equiv s(\mathbf{p}) + \nabla(\mathbf{p})(\mathbf{p} + B\bm{\tau} - \mathbf{mc}) = 0,
\end{equation*}
where $\nabla(\mathbf{p}) = \Omega\circ \nabla_p^s(\mathbf{p})$. Assume that $F$ is continuously differentiable in a neighborhood of $(\mathbf{p},\bm{\tau})$ and that $\frac{\partial F}{\partial\mathbf{p}'}$ is invertible. Then, by the Implicit Function Theorem, the $J\times G$ subsidy pass-through matrix $\Theta$ is given by

\begin{equation}\label{eqn: ivt}
    \Theta= 
    -\frac{\partial \mathbf{p}}{\partial \bm{\tau}} = \left( \frac{\partial F}{\partial \mathbf{p}'} \right)^{-1}\left( \frac{\partial F}{\partial\bm{\tau}} \right).
\end{equation}

First, observe that
\begin{equation}\label{eqn: dF dtau}
    \frac{\partial F}{\partial \bm{\tau}} = \nabla(\mathbf{p})B.
\end{equation}

We next examine $\frac{\partial F}{\partial \mathbf{p}'}$. The $j$-th row of $F(\mathbf{p},\bm{\tau})$ is given by
\begin{equation*}
    F_j(\mathbf{p},\bm{\tau}) = s_j(\mathbf{p}) + \sum_{l=1}^J\nabla(\mathbf{p})_{jl}\left(p_l + (B\bm{\tau})_{l}-mc_l\right).
\end{equation*}
Then, the $(j,k)$ entry of $\frac{\partial F}{\partial \mathbf{p}'}$ is given by
\begin{equation}\label{eqn: dF dp}
\begin{aligned}
    \left( \frac{\partial F}{\partial \mathbf{p}'} \right)_{jk} & = \frac{\partial F_j}{\partial p_k} \\
    & = \frac{\partial s_j}{\partial p_k} + \nabla(\mathbf{p})_{jk} + \sum_{l=1}^J\frac{\partial \nabla(\mathbf{p})_{jl}}{\partial p_k}(p_l + (B\bm{\tau})_{l} - mc_l) \\
    & = \frac{\partial s_j}{\partial p_k} + \nabla(\mathbf{p})_{jk} + \sum_{l=1}^J\Omega_{jl}\frac{\partial^2 s_l}{\partial p_k \partial p_j}(p_l + (B\bm{\tau})_{l} - mc_l).
\end{aligned}
\end{equation}

Writing equation \eqref{eqn: dF dp} in vector notation and substituting it together with \eqref{eqn: dF dtau} in \eqref{eqn: ivt} gives 
\begin{equation}\label{eqn: pass thru matrix}
    \Theta =
    \left(
    \underbrace{\big(\nabla_p^s(\mathbf{p})\big)'}_{\text{demand slope}}
    +
    \underbrace{\nabla(\mathbf{p})}_{\text{demand slope / ownership}}
    +
    \underbrace{K(\mathbf{p},\bm{\tau})}_{\text{demand curvature}} \right)^{-1}
    \left(\underbrace{\nabla(\mathbf{p})}_{\text{demand slope / ownership}}B\right), 
\end{equation}
where
\begin{equation*}
    (K(\mathbf{p},\bm{\tau}))_{jk} = \sum_l(p_l+ (B\bm{\tau})_{l} -mc_l)\Omega_{jl}\frac{\partial^2 s_l}{\partial p_k\partial p_j}.
\end{equation*}

\newpage
\section{HEVs and PHEVs}\label{sec: HEV and PHEV}

 An HEV, which utilizes both an internal combustion engine and an electric motor, is more fuel-efficient than conventional ICEVs (gasoline, diesel, and LPG). Its superior fuel economy is achieved through the following mechanisms:

\begin{enumerate}\addtolength{\itemsep}{-.2\baselineskip}
    \item \textbf{Regenerative braking:} When an HEV decelerates, its motor switches to generator mode, capturing the vehicle's kinetic energy and converting it into electricity. The HEV stores this energy, which would have been wasted in conventional ICEVs, to propel the vehicle later.
    \item \textbf{Engine optimization:} When the engine operates inefficiently (e.g., at low speeds or in stop-and-go traffic), an HEV runs exclusively on the electric motor. Conversely, when the engine generates more power than required while running in its high-efficiency range, the excess power is used to charge the battery.
    \item \textbf{Stop-start system:} An HEV automatically shuts off the engine whenever the vehicle comes to a stop and restarts it using the electric motor, thereby preventing fuel waste during idling.\footnote{Many standard ICEVs have also been equipped with stop-start systems in recent years.}
\end{enumerate}

 Higher fuel economy results in lower fuel consumption per km, which concurrently reduces both tank-to-wheel emissions (from combustion) and well-to-tank emissions (associated with fuel production and distribution). Consequently, HEVs generate lower fuel-cycle GHG emissions than conventional ICEVs.

 Similar to an HEV, a PHEV can charge its battery through the internal mechanisms described above. However, unlike HEVs, a PHEV can also be plugged into an external power source (the grid) to charge its battery, as Figure \ref{fig: HEV vs PHEV} shows.  
 
\begin{figure}[htbp]
	\centering
	\caption{HEV vs PHEV}
	\includegraphics[width=.8\textwidth]{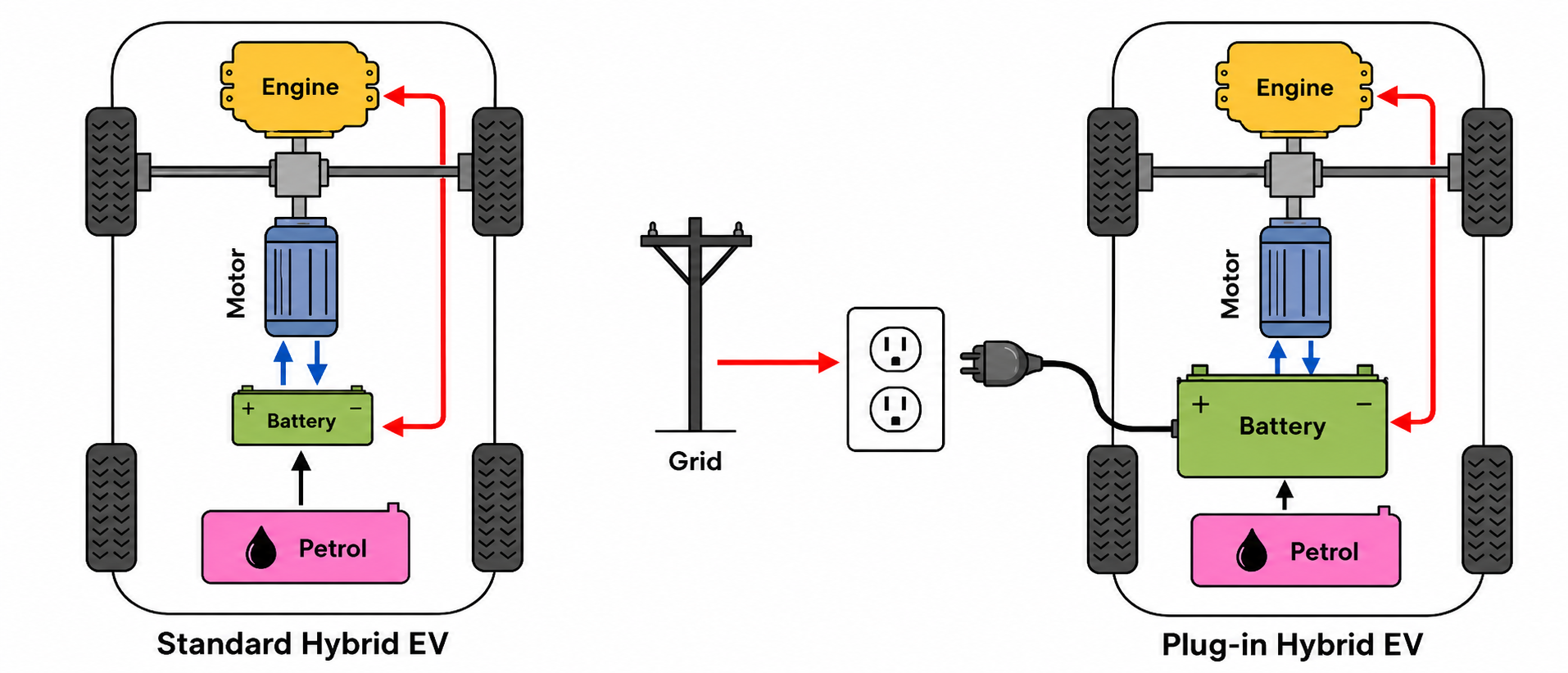}
\label{fig: HEV vs PHEV}
\end{figure}

\newpage
\section{Additional data description}\label{sec: Data description}

\renewcommand{\thefigure}{\thesection\arabic{figure}}
\renewcommand{\thetable}{\thesection\arabic{table}}
\setcounter{table}{0}
\setcounter{figure}{0}%

\subsection*{Charging network} 

We obtain charger data managed by the Korean Environment Corporation.\footnote{\url{https://www.keco.or.kr/en/}} These data contain charger-level information on the installation year and street address. We use four separate datasets, each recording chargers available at the end of each year from 2020 to 2023.\footnote{Each of these four datasets was retrieved from the Public Data Portal: \url{https://www.data.go.kr/en/index.do}} 

We calculate the number of chargers in each market (year-by-province level) as follows. First, for the years 2020 through 2023, we directly obtain the market-level charger counts from the respective datasets. Second, for the years prior to 2020, we reconstruct the number of available chargers in each market based on the installation years reported in the 2020 baseline dataset. For example, the total number of chargers in Seoul in 2015 is constructed by aggregating all chargers located in Seoul with an installation year less than or equal to 2015. 

Figure \ref{fig: charging station distribution} illustrates the evolution of the charging network over time in South Korea. Each circle represents the number of chargers at a given street-level address. The number of chargers began to increase rapidly in 2017, driven by government incentives and growing electric vehicle demand. Chargers were primarily installed around apartment complexes, public parking lots, public institutions, and commercial areas.

\subsection*{Country-year-level GHG emissions intensity of electricity generation} 

 The country-year-level GHG emissions intensity of electricity generation (in g~CO$_2$e/kWh) is obtained from \textit{Our World in Data}. We utilize the 2022 intensity levels for our counterfactual analysis in Subsection~\ref{sec: counterfactual 3}. Since these measures are not adjusted for grid losses (i.e., transmission losses are not accounted for), we merge this dataset with the country-year-level transmission loss rate data obtained from the \textit{World Bank}.\footnote{\textit{Our World in Data}: \url{https://ourworldindata.org/grapher/carbon-intensity-electricity}; \textit{World Bank} Data: \url{https://data.worldbank.org/indicator/EG.ELC.LOSS.ZS}}
 
 Accounting for transmission losses, the adjusted GHG emission factor is calculated by scaling the unadjusted intensity data:
\begin{equation*}
    \text{GHG emission factor} = \text{Unadjusted GHG emission factor} \cdot \left( \frac{100}{100 - \text{grid loss (\%)}} \right)
\end{equation*}
 Table \ref{tab: ghg emission intensity} provides the adjusted GHG emission factors for 148 countries and the world average, alongside the transmission loss rates (as a percentage) for 2022.

\newpage
\begin{figure}[htbp]
	\centering
	\caption{Charging network: 2013, 2017, 2020,
and 2023}
	\begin{subfigure}[b]{0.4\textwidth}	
    	\caption{2013}
    	\includegraphics[width=\textwidth]{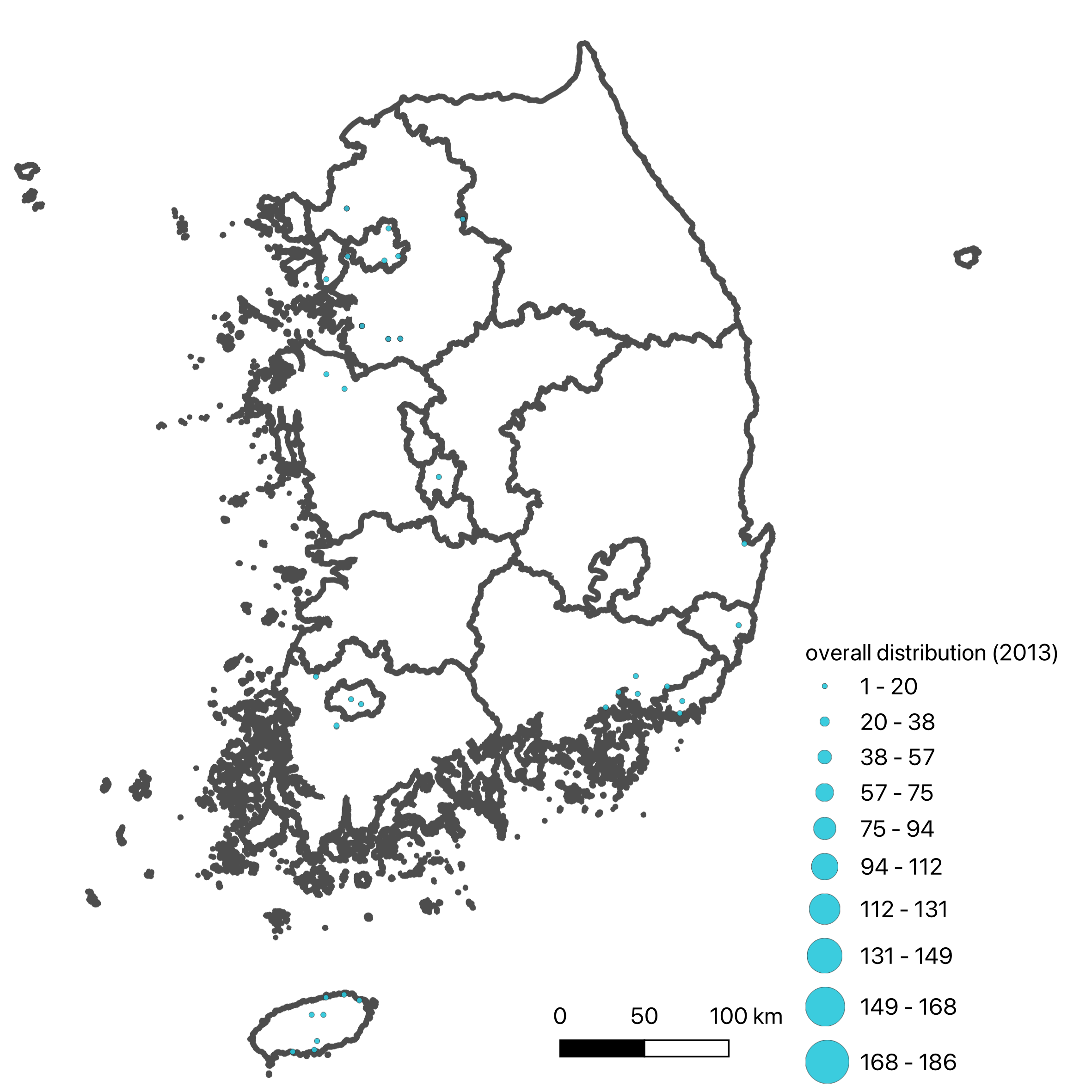}
    	\end{subfigure}
    \qquad
	\begin{subfigure}[b]{0.4\textwidth}	
    	\caption{2017}
    	\includegraphics[width=\textwidth]{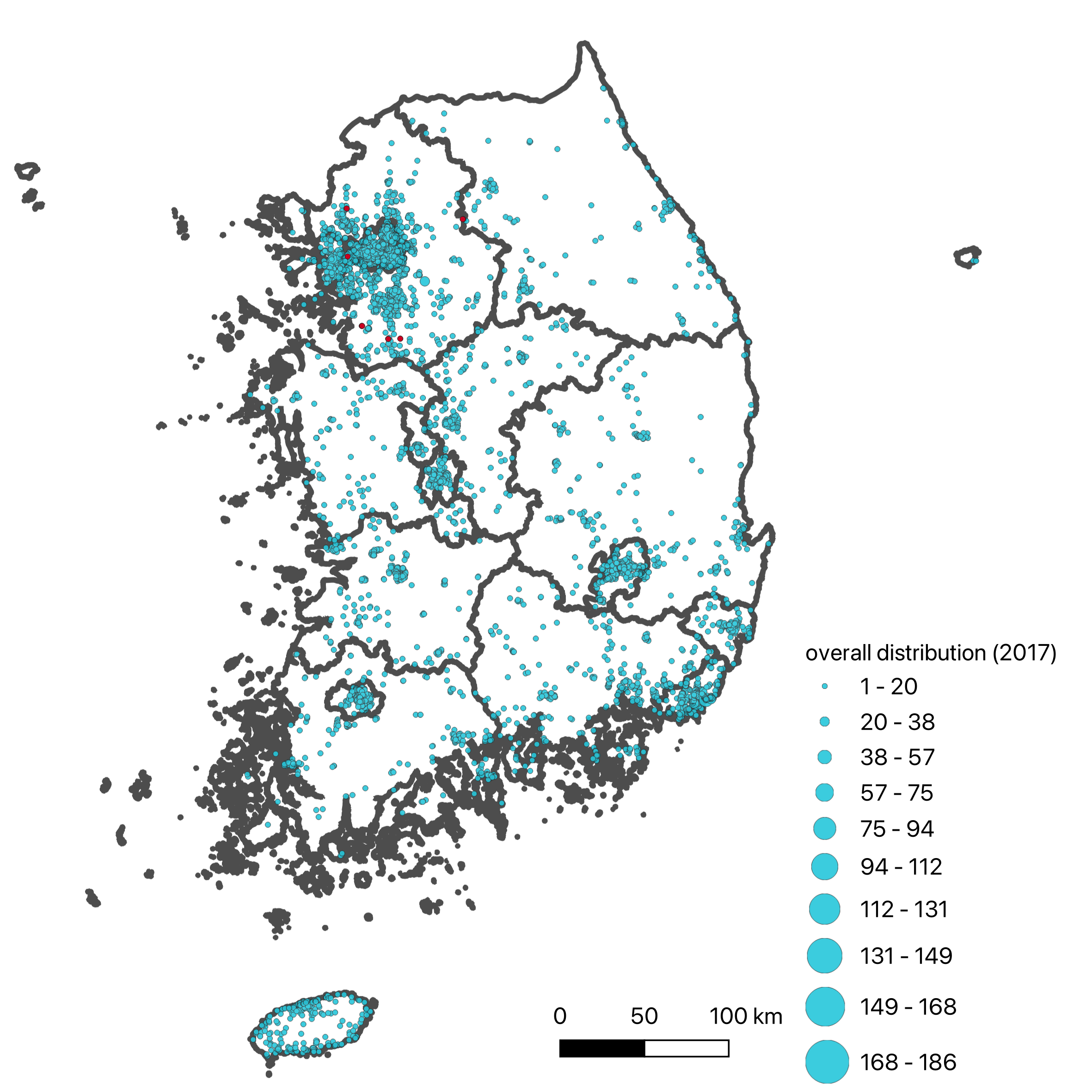}
    	\end{subfigure}
    	\begin{subfigure}[b]{0.4\textwidth}
    \vspace{0.2in}
    	\caption{2020}
    	\includegraphics[width=\textwidth]{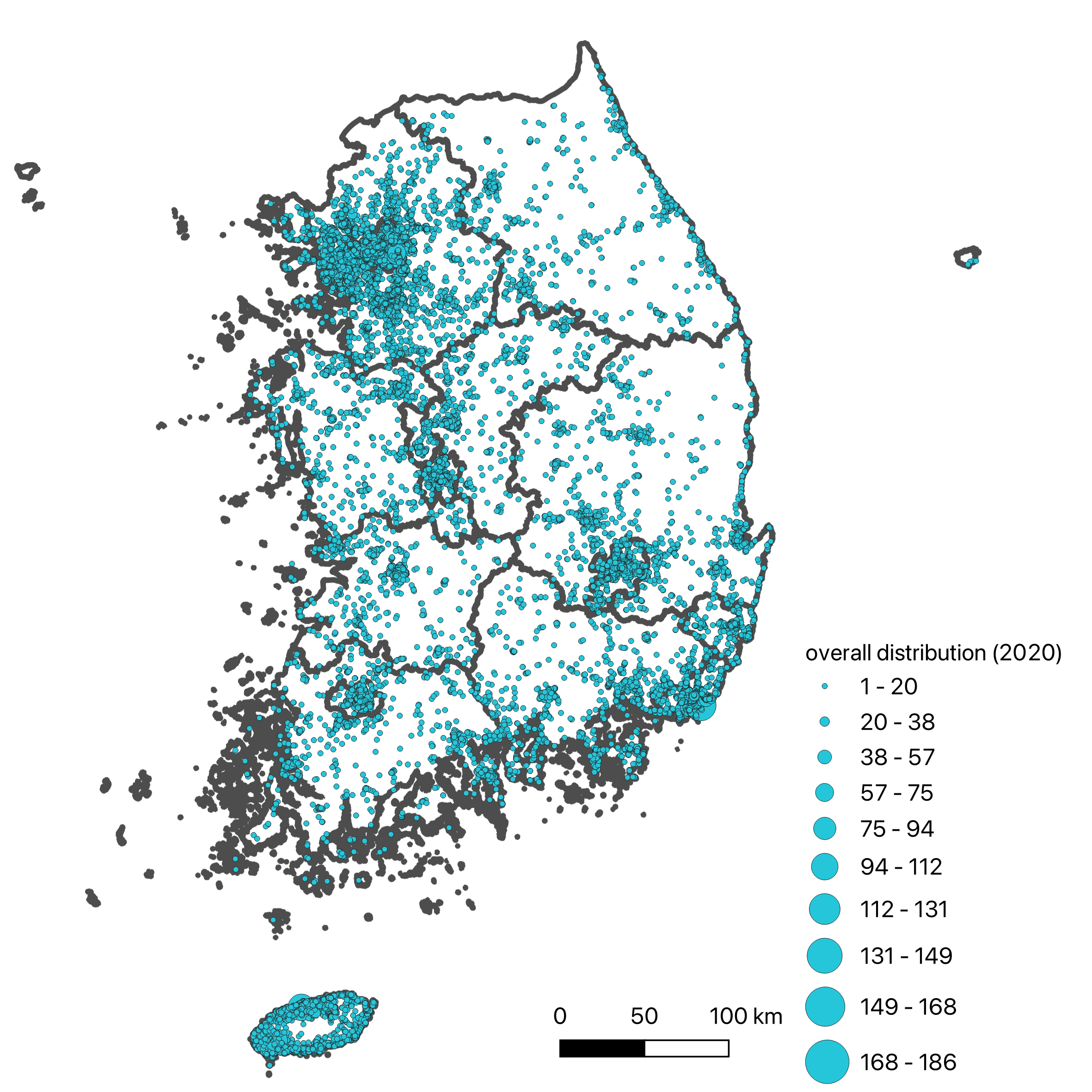}
    	\end{subfigure}
        \qquad
    	\begin{subfigure}[b]{0.4\textwidth}
    	\caption{2023}
    	\includegraphics[width=\textwidth]{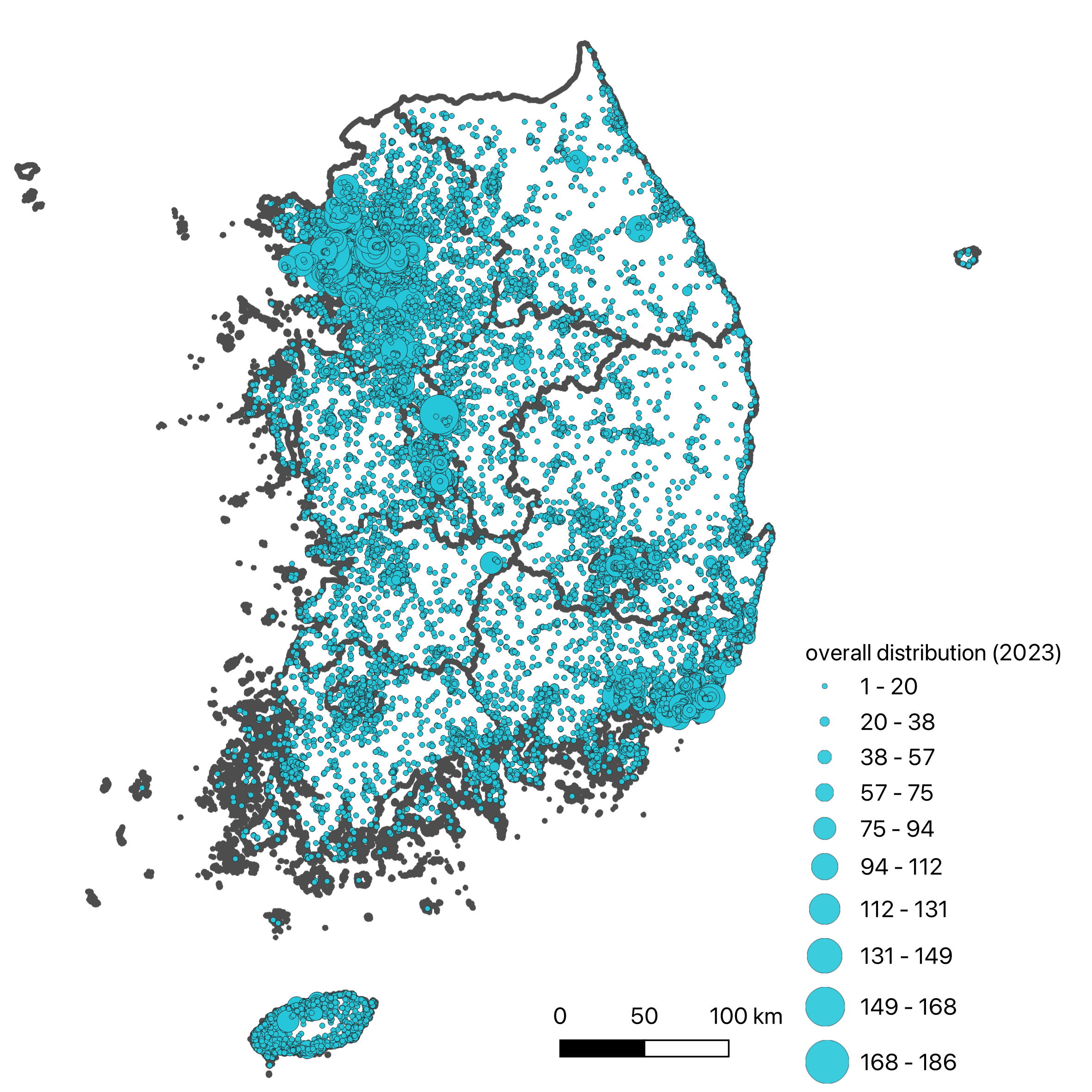}
    	\end{subfigure}
\label{fig: charging station distribution}
\tablenotes This figure illustrates the evolution of the charging network over time in South Korea. Each circle represents the number of chargers at a given street-level address.
\end{figure}
\clearpage

\newpage
\begin{table}[htbp]
\footnotesize
  \centering
  \caption{GHG emissions intensity of electricity generation and grid loss across countries in 2022}
    \begin{tabular}{lrrrlrrrrrr}
    \toprule
    \multicolumn{1}{c}{Code} & \multicolumn{1}{c}{Level} & \multicolumn{1}{c}{Loss (\%)} &       & \multicolumn{1}{c}{Code} & \multicolumn{1}{c}{Level} & \multicolumn{1}{c}{Loss (\%)} &       & \multicolumn{1}{c}{Code} & \multicolumn{1}{c}{Level} & \multicolumn{1}{c}{Loss (\%)} \\
\cmidrule{1-3}\cmidrule{5-7}\cmidrule{9-11}    WORLD & 526   & 6.87  &       & PER   & 324   & 11.10 &       & \multicolumn{1}{l}{MMR} & 613   & 8.02 \\
    CRI   & 26    & 8.39  &       & ROU   & 330   & 9.75  &       & \multicolumn{1}{l}{ISR} & 625   & 4.67 \\
    NPL   & 28    & 15.32 &       & NLD   & 339   & 4.00  &       & \multicolumn{1}{l}{HTI} & 628   & 14.92 \\
    PRY   & 28    & 11.77 &       & RWA   & 350   & 17.62 &       & \multicolumn{1}{l}{TGO} & 634   & 24.63 \\
    ALB   & 28    & 13.68 &       & KHM   & 353   & 13.65 &       & \multicolumn{1}{l}{SSD} & 636   & 3.90 \\
    ISL   & 28    & 2.68  &       & NIC   & 358   & 19.18 &       & \multicolumn{1}{l}{QAT} & 639   & 5.71 \\
    COD   & 30    & 9.04  &       & IRL   & 359   & 7.23  &       & \multicolumn{1}{l}{DOM} & 644   & 9.98 \\
    ETH   & 31    & 23.06 &       & CMR   & 360   & 21.97 &       & \multicolumn{1}{l}{MYS} & 649   & 6.90 \\
    NOR   & 31    & 4.54  &       & ZWE   & 384   & 22.48 &       & \multicolumn{1}{l}{BFA} & 649   & 12.46 \\
    CHE   & 40    & 7.44  &       & CHL   & 392   & 5.00  &       & \multicolumn{1}{l}{MUS} & 651   & 5.51 \\
    SWE   & 43    & 5.61  &       & ITA   & 406   & 6.74  &       & \multicolumn{1}{l}{PHL} & 667   & 9.59 \\
    UGA   & 71    & 19.32 &       & PRK   & 407   & 15.50 &       & \multicolumn{1}{l}{ERI} & 676   & 12.53 \\
    NAM   & 75    & 39.22 &       & LBN   & 421   & 9.76  &       & \multicolumn{1}{l}{GNQ} & 676   & 10.88 \\
    FRA   & 85    & 7.44  &       & GRC   & 423   & 10.88 &       & \multicolumn{1}{l}{KWT} & 700   & 8.97 \\
    TJK   & 106   & 18.24 &       & SUR   & 424   & 11.61 &       & \multicolumn{1}{l}{DZA} & 701   & 9.62 \\
    URY   & 121   & 8.82  &       & TZA   & 429   & 18.57 &       & \multicolumn{1}{l}{HKG} & 714   & 4.48 \\
    NZL   & 122   & 7.32  &       & USA   & 430   & 4.56  &       & \multicolumn{1}{l}{EGY} & 717   & 19.87 \\
    BRA   & 122   & 15.35 &       & HND   & 430   & 31.85 &       & \multicolumn{1}{l}{TTO} & 719   & 5.17 \\
    SLV   & 126   & 10.61 &       & GER   & 440   & 4.55  &       & \multicolumn{1}{l}{YEM} & 720   & 18.91 \\
    ZMB   & 126   & 11.46 &       & VNM   & 443   & 6.20  &       & \multicolumn{1}{l}{TUN} & 723   & 21.74 \\
    FIN   & 135   & 3.90  &       & KOR   & 453   & 3.20  &       & \multicolumn{1}{l}{IDN} & 725   & 6.81 \\
    KEN   & 142   & 22.85 &       & BLR   & 471   & 6.81  &       & \multicolumn{1}{l}{AZE} & 726   & 7.52 \\
    BEL   & 144   & 3.50  &       & RUS   & 479   & 8.56  &       & \multicolumn{1}{l}{MDA} & 726   & 10.88 \\
    LVA   & 146   & 7.18  &       & CIV   & 486   & 14.96 &       & \multicolumn{1}{l}{IRN} & 728   & 10.04 \\
    AUT   & 148   & 4.69  &       & ARG   & 494   & 20.51 &       & \multicolumn{1}{l}{BGD} & 734   & 7.79 \\
    SVK   & 149   & 4.92  &       & BGR   & 501   & 4.97  &       & \multicolumn{1}{l}{BIH} & 740   & 7.71 \\
    LUX   & 152   & 6.60  &       & MDG   & 504   & 5.28  &       & \multicolumn{1}{l}{MKD} & 747   & 16.36 \\
    MOZ   & 155   & 17.28 &       & SGP   & 506   & 0.28  &       & \multicolumn{1}{l}{JAM} & 765   & 26.61 \\
    CAN   & 163   & 4.01  &       & CZE   & 509   & 4.02  &       & \multicolumn{1}{l}{SAU} & 770   & 9.07 \\
    SWZ   & 173   & 26.43 &       & PAK   & 513   & 14.83 &       & \multicolumn{1}{l}{POL} & 778   & 5.80 \\
    KGZ   & 177   & 17.20 &       & GAB   & 529   & 18.89 &       & \multicolumn{1}{l}{SRB} & 806   & 11.77 \\
    GEO   & 181   & 7.97  &       & GHA   & 531   & 11.54 &       & \multicolumn{1}{l}{MAR} & 806   & 17.79 \\
    PAN   & 182   & 7.18  &       & BOL   & 534   & 9.03  &       & \multicolumn{1}{l}{ZAF} & 811   & 10.04 \\
    AGO   & 194   & 11.27 &       & TUR   & 535   & 8.85  &       & \multicolumn{1}{l}{CUB} & 820   & 21.95 \\
    DNK   & 213   & 5.61  &       & MEX   & 536   & 11.87 &       & \multicolumn{1}{l}{IND} & 826   & 14.91 \\
    ECU   & 213   & 15.01 &       & JPN   & 542   & 4.10  &       & \multicolumn{1}{l}{SYR} & 864   & 21.00 \\
    COL   & 219   & 7.71  &       & MLT   & 544   & 8.02  &       & \multicolumn{1}{l}{TCD} & 867   & 28.99 \\
    LTU   & 219   & 17.67 &       & LKA   & 557   & 8.46  &       & \multicolumn{1}{l}{KAZ} & 907   & 8.27 \\
    GTM   & 220   & 12.91 &       & MNE   & 564   & 14.36 &       & \multicolumn{1}{l}{MNG} & 910   & 14.08 \\
    ESP   & 238   & 8.59  &       & EST   & 565   & 13.74 &       & \multicolumn{1}{l}{BHR} & 931   & 3.02 \\
    HUN   & 248   & 7.77  &       & CYP   & 571   & 2.72  &       & \multicolumn{1}{l}{BEN} & 968   & 38.46 \\
    LAO   & 249   & 4.68  &       & NGA   & 579   & 14.08 &       & \multicolumn{1}{l}{BRN} & 981   & 8.98 \\
    PRT   & 251   & 10.24 &       & ARE   & 584   & 4.73  &       & \multicolumn{1}{l}{LBY} & 1\,040  & 20.17 \\
    ARM   & 258   & 6.58  &       & GIB   & 588   & 2.86  &       & \multicolumn{1}{l}{BWA} & 1\,119  & 24.22 \\
    VEN   & 262   & 31.14 &       & SEN   & 606   & 13.02 &       & \multicolumn{1}{l}{UZB} & 1\,175  & 4.55 \\
    SDN   & 262   & 22.08 &       & JOR   & 606   & 10.96 &       & \multicolumn{1}{l}{COG} & 1\,280  & 44.58 \\
    GBR   & 274   & 8.53  &       & THA   & 606   & 7.47  &       & \multicolumn{1}{l}{NER} & 1\,306  & 46.68 \\
    SVN   & 275   & 6.21  &       & OMN   & 607   & 9.78  &       & \multicolumn{1}{l}{TKM} & 1\,446  & 9.64 \\
    HRV   & 277   & 11.67 &       & CHN   & 607   & 3.42  &       & \multicolumn{1}{l}{IRQ} & 1\,690  & 59.25 \\
    UKR   & 285   & 9.99  &       & AUS   & 608   & 4.91  &       &       &       &  \\
    \bottomrule
    \end{tabular}%
  \label{tab: ghg emission intensity}%
\tablenotes This table provides the adjusted emissions intensity of electricity generation (g~CO$_2$e/kWh) for 148 countries and the world average, as well as the transmission loss rate (as a percentage) for 2022.
\end{table}%
\clearpage

\newpage
\section{Micro moments}\label{sec: micro moment description}
\setcounter{table}{0}
\setcounter{figure}{0}%

\subsection*{Second choice survey}

We utilize second choice responses of consumers to construct micro statistics matched to simulated model counterparts. As illustrated in \cite{berry2004differentiated} and \cite{conlon2025incorporating}, data on second choices help identify a random coefficient parameter that governs rich substitution patterns between products. For example, the correlation between the attributes of the first and second choice cars of a consumer is informative about the degree to which consumers substitute toward similar vehicles.

As described in the main text, two surveys oversample consumers who purchased low-market-share vehicles such as BEVs, HEVs, or some luxury foreign cars, and undersample those who purchased high-market-share vehicles. Consequently, the micro survey data and the aggregate car sales data are not perfectly consistent with one another. As noted in \cite{conlon2025incorporating}, this compatibility between aggregate and micro data is an issue when researchers utilize moment conditions from both data sources in estimation. For example, if the aggregate and micro data are incompatible, there may be no set of parameters that simultaneously satisfies the moment conditions derived from both aggregate and micro data.

To address this concern, we utilize the inverse sampling weights recorded by the data provider to correct underlying oversampling and undersampling issues in survey data. Based on market reports and sales information, the data provider calculates and provides an inverse sampling weight for each individual response. As expected, almost all responses with BEVs as the primary choice, and most responses with HEVs as the primary choice, have inverse sampling weights less than one.

Observed second choice micro statistics are constructed as follows. Let $j$ denote the first-choice product and $k$ the second-choice product of surveyed consumer $i$. Let $x$ be a vehicle attribute, such as a dummy for BEV, HEV, diesel, gasoline, or SUV, or driving cost measured as fuel cost per kilometer. Conditional on $j,k\neq 0$, an observed correlation of an attribute $x$ between the first and second choice cars is given by
\begin{equation}\label{eqn: second moment equation}
	\frac{\sum_{i}^{n}w_i(x_{ij} - \overline{x}_{j})(x_{ik}-\overline{x}_{k})}{\sqrt{\sum_{i}^{n}w_i(x_{ij} - \overline{x}_{j})^2}\sqrt{\sum_{i}^{n}w_i(x_{ik} - \overline{x}_{k})^2}},
\end{equation}
where $w_i$ is an inverse sampling weight for $i$ and the bar notation indicates a simple average.

The second choice survey conducted in 2023 includes responses from consumers who bought cars in 2022 and 2023, and the survey conducted in 2018 includes responses from consumers who bought cars in 2017 and 2018. For each of the vehicle attributes described above, we use equation \eqref{eqn: second moment equation} to compute the observed micro statistics separately for the 2017–2018 and 2022–2023 market samples. We then match these observed micro statistics to their model-simulated counterparts.

A model-simulated counterpart to the correlation between the first and the second choice characteristic can be computed as follows using the notation in \cite{conlon2025incorporating}. Let $\theta$ be a vector of all parameters and $v_p(\theta)$ be a micro part $p$ which is an expected value of micro value $v_{pijkt}$ for individual $i$ whose first choice is $j$ and second choice is $k$ in market $t$. Let $\mathcal{T}$ be a set of markets in which observed micro statistics are computed. In our case, $\mathcal{T}$ is either a set of markets in year 2017-2018 or a set of markets in year 2022-2023. For given $\theta$ and 1\,000 individual demographic and Halton draws in each market $t\in\mathcal{T}$ with equal weights, a micro part $v_p(\theta)$ can be computed as
\begin{equation*}
    v_p(\theta) = \frac{\sum_{t\in\mathcal{T}}\sum_{i\in I_t}\sum_{j\in\mathcal{J}_t\cup\{0\}}\sum_{k\in\mathcal{J}_t\cup\{0\}\setminus\{j\}}w_{it}\cdot s_{ijkt}(\theta)\cdot w_{d_pijkt}\cdot v_{pijkt}}{\sum_{t\in\mathcal{T}}\sum_{i\in I_t}\sum_{j\in\mathcal{J}_t\cup\{0\}}\sum_{k\in\mathcal{J}_t\cup\{0\}\setminus\{j\}}w_{it}\cdot s_{ijkt}(\theta)\cdot w_{d_pijkt}},
\end{equation*}
where $s_{ijkt}(\theta)$ is an individual $i$'s choice probability of first choice $j$ and second choice $k$ and $w_{it} = \frac{1}{1000}$. Since our survey is conditioned on the case in which both first and second choice are inside goods, we let $w_{d_pijkt} = \bm{1}\{j,k\neq 0\}$.\footnote{This survey weight is different across each combination of $i$, $j$, and $k$ since survey is not conducted in a perfectly random manner. In an ideal situation where we have an access to each survey weight across all combinations of $i$, $j$, and $k$, or to analytical survey weighting formula, we may put different weights across each combination to address underlying sampling issues in a survey. Since this situation is hardly a case, we instead use inverse sampling weights when we compute the observed micro statistics as described earlier.} Each micro value $v_{pijkt}$ generates a different micro part $v_p(\theta)$ denoted by below:
\begin{equation*}
\begin{aligned}
    & v_1(\theta) \quad \xleftarrow{} \quad v_{1ijkt} = x_{jt}\cdot x_{kt} \\
    & v_2(\theta) \quad \xleftarrow{} \quad v_{2ijkt} = x_{jt} \\
    & v_3(\theta) \quad \xleftarrow{} \quad v_{3ijkt} = x_{kt} \\
    & v_4(\theta) \quad \xleftarrow{} \quad v_{4ijkt} = x_{jt}^2 \\
    & v_5(\theta) \quad \xleftarrow{} \quad v_{5ijkt} = x_{kt}^2
\end{aligned}
\end{equation*}
Finally, a model-simulated counterpart to the correlation between the first and the second choice characteristics $x$ is given by the standard correlation formula:
\begin{equation*}
    f(v(\theta)) = \frac{v_1(\theta) - v_2(\theta)\cdot v_3(\theta)}{\sqrt{v_4(\theta) - \left(v_2(\theta)\right)^2} \sqrt{v_5(\theta) - \left(v_3(\theta)\right)^2}} 
\end{equation*}

An upper panel of Table \ref{tab: micromoment values} reports these observed and simulated values of micro statistics from our main demand estimation result with mileage heterogeneity presented in the right panel of Table \ref{tab: rc demand estimation results}.

\subsection*{Mileage data}
In addition to the second choice survey, we use KTSA's vehicle inspection data to help identify the mileage interaction parameters. Using samples for the entire period (2012-2023), we compute the average daily mileage (annual mileage divided by 365) for vehicles of each of the five fuel types: BEV, HEV, diesel, gasoline, and LPG. We then take LPG daily mileage as the reference point and compute the percentage difference between the daily mileage of each of the four remaining fuel types and that of LPG.

To compute model counterparts to observed micro statistics described above, we follow the notation used earlier. In this case, a micro part $v_p(\theta)$ is defined by
\begin{equation*}
    v_p(\theta) = \frac{\sum_{t\in\mathcal{T}}\sum_{i\in I_t}\sum_{j\in\mathcal{J}_t\cup\{0\}}w_{it}\cdot s_{ijt}(\theta)\cdot w_{d_pijt}\cdot v_{pijt}}{\sum_{t\in\mathcal{T}}\sum_{i\in I_t}\sum_{j\in\mathcal{J}_t\cup\{0\}}w_{it}\cdot s_{ijt}(\theta)\cdot w_{d_pijt}},
\end{equation*}
where $s_{ijt}(\theta)$ is an individual $i$'s choice probability of $j$ and $\mathcal{T}$ is a set of entire markets (2012-2023). Each combination of $v_{pijt}$ and a survey weight $w_{d_pijt}$ generates a different micro part $v_p(\theta)$:
\begin{equation*}
\begin{aligned}
    & v_1(\theta) \quad \xleftarrow{} \quad v_{1ijt} = \text{dailymileage}_i \text{ and } w_{d_1ijt} = \bm{1}\{ j \in \text{BEV}\} \\
    & v_2(\theta) \quad \xleftarrow{} \quad v_{2ijt} = \bm{1}\{ j \in \text{BEV}\} \text{ and } w_{d_2ijt} = \bm{1}\{ j \neq 0\} \\
    & v_3(\theta) \quad \xleftarrow{} \quad v_{3ijt} = \text{dailymileage}_i \text{ and } w_{d_3ijt} = \bm{1}\{ j \in \text{LPG}\} \\
    & v_4(\theta) \quad \xleftarrow{} \quad v_{4ijt} = \bm{1}\{ j \in \text{LPG}\} \text{ and } w_{d_4ijt} = \bm{1}\{ j \neq 0\}
\end{aligned}
\end{equation*}
Then, the model-simulated percentage difference between the expected daily mileage of BEVs and LPG is computed as
\begin{equation*}
    f(v(\theta)) = \left( \frac{v_1(\theta)\cdot v_4(\theta)}{v_2(\theta)\cdot v_3(\theta)} - 1 \right) \cdot 100
\end{equation*}
The percentage differences between the expected daily mileage of each of the remaining fuel types and that of LPG are computed analogously.

A lower panel of Table \ref{tab: micromoment values} reports these observed and simulated values of micro statistics from our main demand estimation result with mileage heterogeneity presented in the right panel of Table~\ref{tab: rc demand estimation results}.

\begin{table}[htbp]
  \centering
  \caption{Observed and simulated values of micro moments}
    \begin{tabular}{lcrrr}
    \hline
    Moment & Market & Observed & Simulated & Difference \bigstrut\\
    \hline
    \textit{Second choice} &       &       &       &  \bigstrut[t]\\
    $\mathbb{CORR}(\text{Cost per km}_{jt},\text{Cost per km}_{kt} \mid j,k\neq 0)$ & 2022-2023 & 0.551 & 0.472 & 0.079 \\
          & 2017-2018 & 0.519 & 0.364 & 0.155 \\
    $\mathbb{CORR}(\text{BEV}_{jt},\text{BEV}_{kt} \mid j,k\neq 0)$ & 2022-2023 & 0.662 & 0.650 & 0.012 \\
          & 2017-2018 & 0.707 & 0.638 & 0.068 \\
    $\mathbb{CORR}(\text{HEV}_{jt},\text{HEV}_{kt} \mid j,k\neq 0)$ & 2022-2023 & 0.380 & 0.410 & -0.030 \\
          & 2017-2018 & 0.301 & 0.247 & 0.054 \\
    $\mathbb{CORR}(\text{Diesel}_{jt},\text{Diesel}_{kt} \mid j,k\neq 0)$ & 2022-2023 & 0.382 & 0.330 & 0.051 \\
          & 2017-2018 & 0.568 & 0.519 & 0.049 \\
    $\mathbb{CORR}(\text{Gasoline}_{jt},\text{Gasoline}_{kt} \mid j,k\neq 0)$ & 2022-2023 & 0.453 & 0.474 & -0.021 \\
          & 2017-2018 & 0.558 & 0.509 & 0.049 \\
    $\mathbb{CORR}(\text{SUV}_{jt},\text{SUV}_{kt} \mid j,k\neq 0)$ & 2022-2023 & 0.540 & 0.590 & -0.050 \\
          & 2017-2018 & 0.655 & 0.618 & 0.037 \bigstrut[b]\\
    \hline
    \textit{Mileage:} &       &       &       &  \bigstrut[t]\\
    $\frac{\mathbb{E}(\text{mileage}_i \mid j\in\text{BEV}) - \mathbb{E}(\text{mileage}_i \mid j\in\text{LPG})}{\mathbb{E}(\text{mileage}_i \mid j\in\text{LPG})}\times 100$ & 2012-2023 & 13.3838 & 13.3841 & -0.0003 \\
          &       &       &       &  \\
    $\frac{\mathbb{E}(\text{mileage}_i \mid j\in\text{HEV}) - \mathbb{E}(\text{mileage}_i \mid j\in\text{LPG})}{\mathbb{E}(\text{mileage}_i \mid j\in\text{LPG})}\times 100$ & 2012-2023 & 9.3668 & 9.3668 & 0.0000 \\
          &       &       &       &  \\
    $\frac{\mathbb{E}(\text{mileage}_i \mid j\in\text{Diesel}) - \mathbb{E}(\text{mileage}_i \mid j\in\text{LPG})}{\mathbb{E}(\text{mileage}_i \mid j\in\text{LPG})}\times 100$ & 2012-2023 & 5.9953 & 5.9953 & 0.0000 \\
          &       &       &       &  \\
    $\frac{\mathbb{E}(\text{mileage}_i \mid j\in\text{Gasoline}) - \mathbb{E}(\text{mileage}_i \mid j\in\text{LPG})}{\mathbb{E}(\text{mileage}_i \mid j\in\text{LPG})}\times 100$ & 2012-2023 & -19.6870 & -19.6870 & 0.0000 \bigstrut[b]\\
    \hline
    \end{tabular}%
  \label{tab: micromoment values}%
  \tablenotes This table presents observed and simulated values of micro moments used in the estimation of the random coefficients model with mileage heterogeneity (RC logit 2). The size of the micro data is as follows: 4\,145 for second choice data for 2022-2023 market, 4\,035 for 2017-2018 market, and 69\,707\,543 for mileage data.
\end{table}%
\clearpage

\newpage

\section{Calculation of vehicle-level and market-level GHG emissions}

\subsection{Calculation of vehicle-cycle and fuel-cycle GHG emissions}\label{sec: Greenhouse gas emissions of vehicles}

\setcounter{table}{0}
\setcounter{figure}{0}%

 Evaluating the environmental effects of BEV and HEV subsidies requires calculating the vehicle-cycle and fuel-cycle GHG emissions of each product-year combination.

\subsubsection*{Vehicle-cycle GHG emissions}

 Unfortunately, data on vehicle-cycle emissions are not available for South Korea. Instead, we adopt the estimates of \cite{kelly2023cradle} for each fuel type of sedans and SUVs sold in the U.S., shown in Figure \ref{fig: vehicle manufacturing GHG emissions}. In doing so, we assume that vehicle-cycle energy use and the resulting GHG emissions in Korea are comparable to those in the U.S., which is reasonable given the similarity in fossil fuel generation shares between the two countries; according to Table \ref{tab: electricity generation mix}, the share is 59.6\% in the U.S. in 2020 and 58.2\% in Korea in 2023. 
 
 For each BEV body type (sedan and SUV), \cite{kelly2023cradle} reports three emission estimates based on driving ranges of 200, 300, and 400 miles. According to Figure \ref{fig: vehicle manufacturing GHG emissions}, producing a 400-mile-range BEV generates nearly twice the GHG emissions of producing an ICEV, while the corresponding emission gap for a 200-mile-range BEV varies between 25\% and 32\%. We utilize linear interpolation to calculate the vehicle-cycle GHG emissions for each product-year combination based on the driving range information sourced from CARISYOU, a South Korean automotive data provider.

\begin{table}[htbp]
  \centering
  \caption{Electricity generation mix}
    \begin{tabular}{lrr}
    \hline
    \hline
    Power source & U.S. & Korea \bigstrut\\
    \hline
    Natural gas & 36.8 & 26.8 \bigstrut[t]\\
    Coal  & 22.8 & 31.4 \\
    Nuclear & 20.3 & 30.7 \\
    Renewables & 19.4 & 8.4 \\
    Other sources  & 0.7 & 2.6 \bigstrut[b]\\
    \hline
    \hline
    \end{tabular}%
  \label{tab: electricity generation mix}%
\tablenotes This table presents the electricity generation mix of the U.S. in 2020 and Korea in 2023. Other sources include biomass, hydrogen, and residual oil. Sources: \cite{kelly2023cradle}; The 11th Basic Plan for Long-Term Electricity Supply and Demand (2024–2038), Ministry of Trade, Industry and Energy of Korea, 2025.    
\end{table}%

\subsubsection*{Fuel-cycle GHG emissions}

 \begin{enumerate}
 
 \item ICEVs and HEVs: We calculate the fuel-cycle GHG emissions per km (in g CO$_2$e/km) for each ICEV and HEV as follows. First, for each product, we collect the official trim-year-level tailpipe (tank-to-wheels) $\text{CO}_2$ emissions (in g $\text{CO}_2$/km) from CARISYOU and calculate the median for each year. Because official $\text{CO}_2$ emissions tend to understate actual emissions, we adjust these figures using benchmarks provided by the \citet{ICCT2024}. Specifically, given that the gap between official and actual $\text{CO}_2$ emissions averaged approximately 14\% in the European market in 2022, we multiply the median by $1.14$ and adopt this figure as the product-year-level tailpipe $\text{CO}_2$ emissions.
 
 Next, given that CO$_2$ emissions constitute 95\% or more of total GHG emissions from transportation,\footnote{The other two GHG emissions are CH$_4$ and N$_2$O. See ``Emissions of Carbon Dioxide in the Transportation Sector,'' Congressional Budget Office, December 2022; and ``Greenhouse Gas Emissions from a Typical Passenger Vehicle,'' Environmental Protection Agency, March 2018 for more details on tailpipe GHG emissions.} we convert the tailpipe CO$_2$ emissions (in g CO$_2$/km) into tailpipe GHG emissions (in g CO$_2$e/km) by applying a conversion factor of 1.05. For instance, the 2020 Toyota Camry HEV is available in two trims (LE and XLE) with CO$_2$ emissions of 92 and 95~g/km, respectively. We calculate the tank-to-wheels GHG emissions for the 2020 Toyota Camry HEV as 111.92 g CO$_2$e/km ($93.5 \times 1.14 \times 1.05$).
 
 Finally, tank-to-wheels GHG emissions constitute the majority of the fuel-cycle GHG emissions for ICEVs and HEVs (e.g., \citealp{elgowainy2016cradle, prussi2020jec, choi2020greenhouse}). Based on \cite{choi2020greenhouse},\footnote{They modified the GREET model to reflect the specific conditions and energy policies of South Korea, and found that the ratio of tank-to-wheels GHG emissions to the fuel-cycle GHG emissions ranges between 85.6\% to 86.7\% across fuel types.} we assume that tank-to-wheels GHG emissions constitute 86\% of fuel-cycle GHG emissions for an ICEV or HEV in South Korea, and therefore divide the former by 0.86 to obtain the latter. Continuing with the above example, the fuel-cycle GHG emissions per km for the 2020 Toyota Camry HEV are calculated as 130.14 g CO$_2$e/km (111.92 / 0.86).
    
 \item BEVs: The well-to-tank GHG emissions for a BEV are calculated by dividing the GHG emission factor for South Korea (453 g CO$_2$e/kWh in 2022) by the vehicle's fuel economy (in km/kWh). Since a BEV generates no tailpipe emissions, its well-to-tank GHG emissions are equivalent to its fuel-cycle GHG emissions.

 \end{enumerate}
 
In this way, we calculate the fuel-cycle GHG emissions per km for the 2\,304 product-year combinations in our dataset.\footnote{There are 61 product-year combinations in the 2012--2019 period for which we cannot retrieve the official trim-year-level tailpipe $\text{CO}_2$ emissions from CARISYOU. We exclude these combinations when calculating GHG emissions from the outside option.} 

\newpage
\begin{figure}[htbp]
	\centering
	\caption{Vehicle manufacturing cycle GHG emissions }
	\begin{subfigure}[b]{.8\textwidth}	
    	\caption{Sedan}
    	\includegraphics[width=\textwidth]{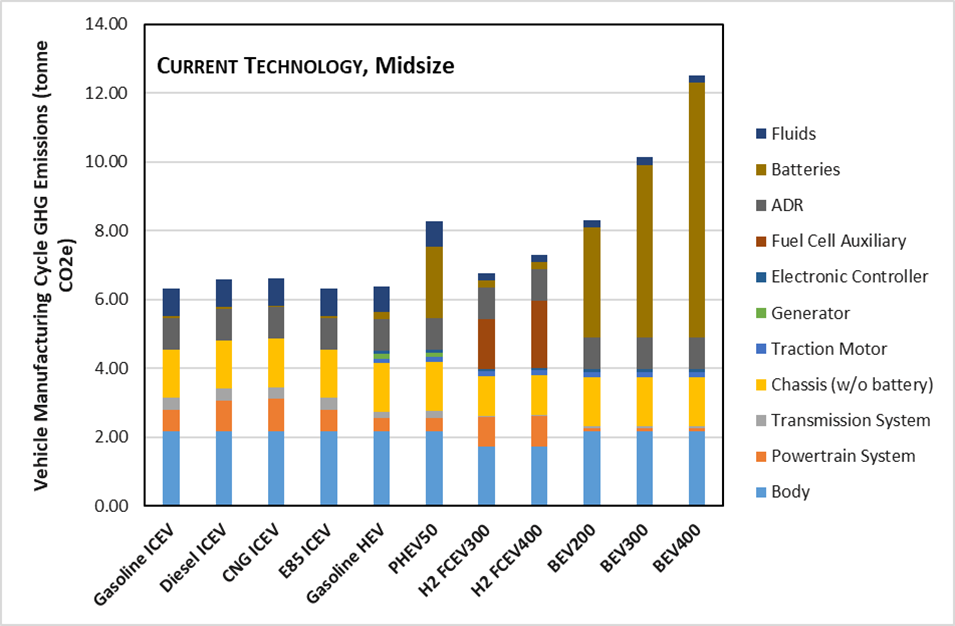}
    	\end{subfigure}
	\begin{subfigure}[b]{.8\textwidth}	
        \vspace{0.2in}
    	\caption{SUV}
    	\includegraphics[width=\textwidth]{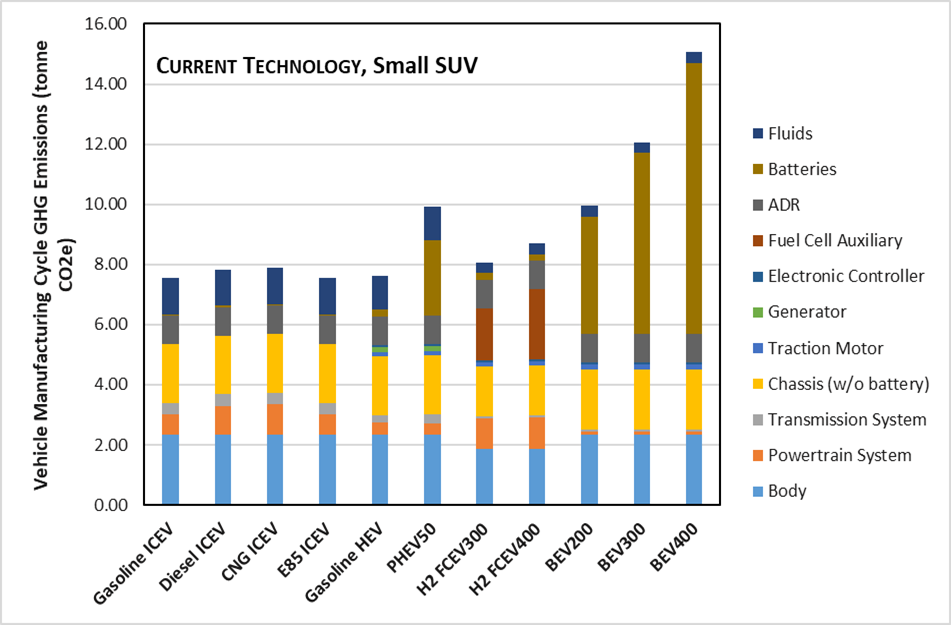}
    	\end{subfigure}
\label{fig: vehicle manufacturing GHG emissions}
\tablenotes The top panel (bottom panel) of this figure presents vehicle-cycle GHG emissions for a sedan (an SUV) in the U.S. Source: \cite{kelly2023cradle}, pp. 137--138.
\end{figure}
\clearpage


\subsection{Estimation of market-level GHG emissions changes}\label{sec: Calculation of GHG emissions changes}

 We estimate the market-level GHG emissions \eqref{eqn: mkt emissions} as
\begin{equation}\label{eqn: GHG emissions}
\widehat{E}^{\tau}_m =M_m \frac{1}{N} \sum_{i=1}^N  E_i^{\tau}.
\end{equation}

Using \eqref{eqn: consumer/product emissions}, the change in cumulative GHG emissions in market $m$ following the policy transition from $\tau$ to $\tau'$, $\Delta E^{(\tau,\tau')}_m$, is estimated as
\begin{equation}\label{eqn: GHG emissions change}
	\widehat{\Delta E}^{(\tau,\tau')}_m = \widehat{\Delta E}_{m,\text{FC}}^{(\tau,\tau')} + \widehat{\Delta E}_{m,\text{VC}}^{(\tau,\tau')} + \widehat{\Delta E}_{m,\text{outside}}^{(\tau,\tau')},
\end{equation}
 where
\begin{equation*}
\begin{aligned}
	& \widehat{\Delta E}_{m,\text{FC}}^{(\tau,\tau')} =  \frac{M_m}{N}\sum_{i=1}^N\sum_{j\in\mathcal{J}} T \cdot m_i\cdot e_j^{FC} \cdot \Delta \hat{s}_{ij} \\
	& \widehat{\Delta E}_{m,\text{VC}}^{(\tau,\tau')} = \frac{M_m}{N}\sum_{i=1}^N\sum_{j\in\mathcal{J}} e_j^{VC} \cdot \Delta \hat{s}_{ij} \\
	& \widehat{\Delta E}_{m,\text{outside}}^{(\tau,\tau')} = \frac{M_m}{N}\sum_{i=1}^N T^O \cdot m_i\cdot e_0^{FC} \cdot \Delta \hat{s}_{i0}
\end{aligned}
\end{equation*}
 where $\Delta \hat{s}_{ij} = \hat{s}^{\tau'}_{ij} - \hat{s}^{\tau}_{ij}$ denotes the change in the consumer’s choice probability after the policy transition from $\tau$ to $\tau'$. Equation \eqref{eqn: GHG emissions change} decomposes the change in emissions within the market into three terms: changes in (i) fuel-cycle emissions ($\widehat{\Delta E}_{m,\text{FC}}^{(\tau,\tau')}$), (ii) vehicle-cycle emissions ($\widehat{\Delta E}_{m,\text{VC}}^{(\tau,\tau')}$), and (iii) outside emissions, which are the emissions from the outside option ($\widehat{\Delta E}_{m,\text{outside}}^{(\tau,\tau')}$).

 We can further decompose the changes in fuel-cycle (as well as vehicle-cycle) emissions by fuel type: ICEVs (combining gasoline, diesel, and LPG), BEVs, and HEVs. Specifically, 
 \begin{equation*}
 \widehat{\Delta E}_{m,\text{FC}}^{(\tau,\tau')} = \frac{M_m}{N}\sum_{i=1}^N T \cdot m_i \left( \sum_{j\in\mathscr{J}_{\text{ICEV}}}  e_j^{FC} \cdot \Delta \hat{s}_{ij} + \sum_{j\in\mathscr{J}_{\text{BEV}}}  e_j^{FC} \cdot \Delta \hat{s}_{ij} + \sum_{j\in\mathscr{J}_{\text{HEV}}}  e_j^{FC} \cdot \Delta \hat{s}_{ij} \right).
 \end{equation*}
 This decomposition allows us to isolate the individual contribution of each fuel type to the overall change in fuel-cycle emissions.

\newpage
\setcounter{table}{0}
\setcounter{figure}{0}%

\section{Mileage heterogeneity and policy effects}\label{sec: Mileage heterogeneity and policy effects}

 In Section~\ref{sec: model and empiric}, we show that a consumer's driving intensity plays a critical role in their fuel-type choice. Furthermore, a subsidy policy induces a disproportionately strong response among high-mileage consumers, causing them to shift away from conventional ICEVs toward subsidized vehicles (either HEVs or BEVs) at a higher rate than low-mileage consumers. 
 
To illustrate this point, we simulate the 2023 shifts in within-fuel-type market shares under two distinct counterfactual scenarios: a BEV-focused subsidy and an HEV-focused subsidy. According to Figure~\ref{fig: mean_ci_2023_change}, the increase in the market share of the subsidized fuel type following subsidy implementation is larger among consumers with higher driving intensity under both policies. For instance, the rise in the HEV market share among consumers in the 40--60~km daily mileage range (2.1 percentage points) is 76\% larger than that among consumers in the 0--20~km range (1.2 percentage points).
  
The RC logit 1 specification (which imposes the homogeneous mileage assumption) fails to capture these shifts in the consumer mileage composition across fuel types following policy implementation. Consequently, this restricted specification understates the environmental effectiveness of the subsidies, as shown in Table~\ref{tab: alternative specifications/assumptions}.


\begin{figure}[htbp]
    \centering
    \caption{Changes in within-group market shares for BEVs and HEVs (2023)}
    \includegraphics[width=.6\textwidth]{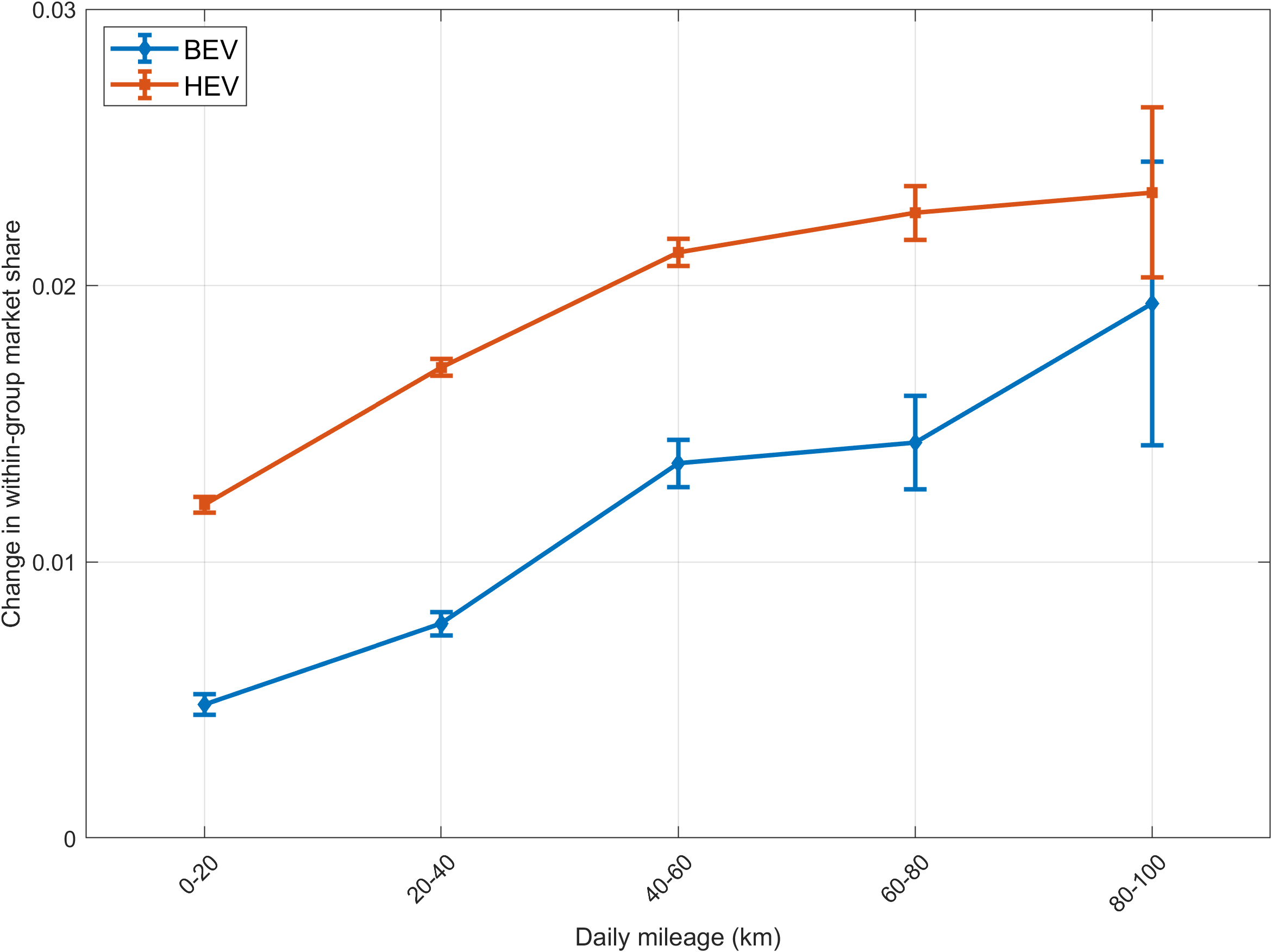}
    \label{fig: mean_ci_2023_change}
\tablenotes This figure presents the nationwide averages and two-standard-deviation intervals of the predicted changes in choice probabilities across five daily mileage groups. The panels compare two counterfactual policy transitions relative to a baseline of no subsidies: the change in BEV choice probability under BEV subsidies, and the corresponding change in HEV choice probability under HEV subsidies. For reference, the nationwide averages are equivalent to the changes in within-group market shares.
\end{figure}
\clearpage

\newpage
\setcounter{table}{0}
\setcounter{figure}{0}%

\section{Additional figures and tables}\label{sec: Additional figures and tables}

\begin{table}[!h]
  \centering
  \caption{BEV nameplates and subsidy}
    \begin{tabular}{rrrrrrr}
    \toprule
          & \multicolumn{3}{c}{\# BEV registered} &       & \multicolumn{2}{c}{\# nameplate} \\
\cmidrule{2-4}\cmidrule{6-7}    \multicolumn{1}{l}{year} & \multicolumn{1}{l}{total} & \multicolumn{1}{l}{with subsidy} & \multicolumn{1}{l}{without subsidy} &       & \multicolumn{1}{l}{with subsidy} & \multicolumn{1}{l}{without subsidy} \\
    \midrule
    2012  & 4     & 4     & 0     &       & 1     & 0 \\
    2013  & 125   & 125   & 0     &       & 3     & 0 \\
    2014  & 507   & 507   & 0     &       & 5     & 0 \\
    2015  & 2\,042  & 2\,042  & 0     &       & 5     & 0 \\
    2016  & 2\,612  & 2\,612  & 0     &       & 6     & 0 \\
    2017  & 7\,149  & 7\,149  & 0     &       & 9     & 0 \\
    2018  & 22\,183 & 22\,104 & 79    &       & 10    & 1 \\
    2019  & 23\,163 & 22\,954 & 209   &       & 11    & 1 \\        
    2020  & 19\,100 & 18\,591 & 509   &       & 12    & 2 \\
    2021  & 35\,684 & 34\,454 & 1\,230  &       & 16    & 8 \\
    2022  & 62\,172 & 60\,979 & 1\,193  &       & 22    & 7 \\
    2023  & 65\,851 & 62\,379 & 3\,472  &       & 25    & 11 \\
    \bottomrule
    \end{tabular}%
  \label{tab: year sub nameplate}%
\tablenotes This table presents the number of BEVs and nameplates, with and without subsidies, for each year between 2012 and 2023. Source: The official BEV policy website of South Korean government.
\end{table}%
\clearpage

\newpage
\begin{table}[htbp]
  \centering
  \caption{Average daily mileage by province and fuel type}
    \begin{tabular}{lrrrrrr}
    \toprule
          & \multicolumn{5}{c}{Fuel-type} \\
\cmidrule{2-6}    Province & \multicolumn{1}{c}{Gasoline} & \multicolumn{1}{c}{Diesel} & \multicolumn{1}{c}{Hybrid} & \multicolumn{1}{c}{BEV} & \multicolumn{1}{c}{LPG}  & \multicolumn{1}{c}{Total} \\
    \midrule
    Seoul & 26.8  & 35.5  & 35.8  & 37.0  & 32.5  & 29.7 \\
    Busan & 28.8  & 37.7  & 39.0  & 39.2  & 34.8  & 31.8 \\
    Daegu & 28.9  & 38.9  & 39.8  & 43.7  & 35.6  & 32.4 \\
    Incheon & 29.0  & 38.0  & 39.8  & 41.2  & 34.5  & 32.3 \\
    Gwangju & 29.4  & 39.0  & 39.5  & 41.3  & 37.3  & 33.4 \\
    Daejeon & 28.5  & 38.7  & 39.7  & 41.0  & 36.0  & 32.2 \\
    Ulsan & 28.7  & 36.6  & 38.7  & 40.1  & 35.2  & 31.5 \\
    Sejong & 32.6  & 43.5  & 45.4  & 48.5  & 41.4  & 36.9 \\
    Gyeonggi & 30.2  & 39.5  & 41.1  & 42.8  & 36.9  & 33.7 \\
    Gangwon & 29.8  & 40.3  & 43.0  & 42.6  & 37.8  & 34.0 \\
    Chungbuk & 30.5  & 40.3  & 42.5  & 43.8  & 38.4  & 34.3 \\
    Chungnam & 31.5  & 41.0  & 43.2  & 43.7  & 39.8  & 35.3 \\
    Jeonbuk & 30.6  & 41.0  & 42.4  & 44.9  & 38.8  & 34.9 \\
    Jeonnam & 31.9  & 41.5  & 43.4  & 43.8  & 40.0  & 36.1 \\
    Gyeongbuk & 30.7  & 40.4  & 43.3  & 44.3  & 39.0  & 34.4 \\
    Gyeongnam & 30.9  & 40.3  & 42.6  & 43.0  & 38.3  & 34.3 \\
    Jeju  & 27.6  & 35.3  & 37.1  & 40.5  & 33.4  & 30.8 \\
    \bottomrule
    \end{tabular}%
  \label{tab: mileage by province and fuel type}%
\tablenotes This table presents the average daily mileage (km) by province and fuel type.
\end{table}%
\clearpage

\newpage
\begin{table}[htbp]
  \centering
  \caption{Estimation results of logit and nested logit demand models}
\begin{tabular}{lccc}
    \hline
    \hline
    Variable & OLS Logit & IV Logit & Nested Logit \bigstrut\\
    \hline
    Price / Avg. Income & -0.406 & -6.013 & -2.076 \bigstrut[t]\\
          & (0.062) & (0.739) & (0.411) \\
    Cost per km & -2.522 & -0.668 & -0.145 \\
          & (0.090) & (0.227) & (0.088) \\
    Acceleration & -0.271 & 18.033 & 6.229 \\
          & (0.461) & (2.659) & (1.290) \\
    Size & 0.324 & 0.859 & 0.296 \\
          & (0.048) & (0.134) & (0.063) \\
    Network & 0.032 & 0.248 & 0.202 \\
          & (0.069) & (0.069) & (0.073) \\
    Entry year & -0.623 & -0.457 & -0.148 \\
          & (0.032) & (0.041) & (0.032) \\
    Exit year & -1.707 & -1.720 & -0.565 \\
          & (0.034) & (0.041) & (0.107) \\
    $\ln s_{j|g}$ &       &       & 0.677 \\
          &       &       & (0.063) \bigstrut[b]\\
    \hline
    Own elasticity &       &       &  \bigstrut[t]\\
    - Median & -0.371 & -5.490 & -5.755 \\
    - Weighted average & -0.291 & -4.302 & -4.370 \bigstrut[b]\\
    \hline
    \hline
    \end{tabular}%
  \label{tab: results of restricted models}%
   \tablenotes This table presents OLS  and IV  estimates of the linear demand model given by $\ln(s_{jm}/s_{0m}) = x_{jm}\beta - \alpha \frac{p_{jm}}{\bar{y}_{m}} + \xi_{jm}$, as well as estimates of the nested logit demand model given by $\ln (s_{jm}/s_{0m}) = x_{jm}\beta - \alpha \frac{p_{jm}}{\bar{y}_{m}} + \rho\ln s_{j\mid g} + \xi_{jm} $. The utility specifications also include fixed effects for nameplate, province, and fuel type interacted with year. Two cost shifters---import tariff rate and raw material price interacted with vehicle weight---are used as excluded instruments for price in the IV logit estimation, while the within-group product count is used as an additional instrument in the nested logit estimation. Robust standard errors, clustered by market, are reported in parentheses.   
\end{table}
\clearpage

\newpage
\begin{table}[htbp]
  \centering
  \caption{First-stage regression results}
    \begin{tabular}{lcccc}
    \hline
    \hline
          & IV Logit &       & \multicolumn{2}{c}{Nested Logit}  \bigstrut\\
\cline{4-5}    Variable &       &       & $p_{j}/\bar{y}$
 & $\ln s_{j|g}$ \bigstrut\\
    \hline
    Raw material prices $\times$ Weight & 0.135 &       & 0.130 & 1.364 \bigstrut[t]\\
          & (0.127) &       & (0.128) & (0.252) \\
    Tariff rates & 0.037 &       & 0.037 & -0.224 \\
          & (0.005) &       & (0.005) & (0.013) \\
    Group product count &       &       & -0.004 & -0.006 \\
          &       &       & (0.001) & (0.002) \\
    Cost per km & 0.308 &       & 0.313 & -2.811 \\
          & (0.026) &       & (0.026) & (0.094) \\
    Acceleration & 3.365 &       & 3.364 & -1.763 \\
          & (0.131) &       & (0.131) & (0.394) \\
    Size  & 0.115 &       & 0.114 & 0.144 \\
          & (0.011) &       & (0.011) & (0.049) \\
    Network & 0.019 &       & 0.011 & -0.109 \\
          & (0.011) &       & (0.011) & (0.046) \\
    Entry year & 0.033 &       & 0.034 & -0.654 \\
          & (0.004) &       & (0.004) & (0.033) \\
    Exit year & -0.020 &       & -0.020 & -1.602 \\
          & (0.004) &       & (0.004) & (0.031) \bigstrut[b]\\
    \hline
    R-squared & 0.955 &       & 0.955 & 0.781 \bigstrut[t]\\
    SW F statistic & 26.29 &       & 16.40 & 22.01 \\
    Observations & 34\,629 &       & 34\,629 & 34\,629 \bigstrut[b]\\
    \hline
    \hline
    \end{tabular}%
  \label{tab: first stage results}%
  \tablenotes This table presents the first-stage regression results from the standard logit and nested logit demand models. The model also includes fixed effects for nameplate, province, and fuel type interacted with year. Robust standard errors, clustered by market, are reported in parentheses. 
\end{table}
\clearpage

\newpage
\begin{table}[htbp]
  \centering
  \caption{Own- and cross-price elasticities by fuel type}
    \begin{tabular}{lcccccc}
    \hline
    \hline
          &       &       &       &       &       &  \bigstrut[t]\\
    \multicolumn{7}{c}{\underline{\textit{Panel A: Own-price elasticities}}} \bigstrut[b] \\
    Fuel Type & Mean  & Min   & P10   & P50   & P90   & Max \bigstrut\\
    \hline
    BEV    & -6.1823 & -16.9671 & -9.0477 & -6.9034 & -4.4898 & -0.4315 \bigstrut[t]\\
    HEV   & -5.9411 & -10.3288 & -7.3308 & -5.9954 & -4.8752 & -3.2270 \\
    Diesel & -6.3986 & -10.1832 & -7.7114 & -6.2005 & -4.9920 & -1.9965 \\
    Gasoline & -5.6100 & -16.1493 & -7.4639 & -6.0916 & -4.8829 & -2.0436 \\
    LPG   & -5.9390 & -9.9424 & -7.2039 & -5.7529 & -4.5009 & -3.1466 \\
          &       &       &       &       &  \\
    \multicolumn{7}{c}{\underline{\textit{Panel B: Cross-price elasticities}}} \bigstrut[b] \\
          & BEV    & HEV   & Diesel & Gasoline & LPG   &  \bigstrut\\
\cline{1-6}    BEV    & 0.0374 & 0.0030 & 0.0012 & 0.0012 & 0.0018 &  \bigstrut[t]\\
    HEV   & 0.0012 & 0.0273 & 0.0034 & 0.0026 & 0.0063 &  \\
    Diesel & 0.0008 & 0.0041 & 0.0086 & 0.0019 & 0.0040 &  \\
    Gasoline & 0.0006 & 0.0030 & 0.0021 & 0.0055 & 0.0040 &  \\
    LPG   & 0.0010 & 0.0064 & 0.0032 & 0.0032 & 0.0243 &  \\
          &       &       &       &       &  \bigstrut[b]\\
    \hline
    \hline
    \end{tabular}%
  \label{tab: elasticities by group}%
  \tablenotes The top panel reports summary statistics (sales-weighted average, minimum, 10th percentile, median, 90th percentile, and maximum) for the own-price elasticities for products within each fuel type. The bottom panel reports the unweighted average cross-price elasticity of demand for products with the fuel type indicated in the row with respect to the prices of products with the fuel type indicated in the column.
\end{table}%
\clearpage

\newpage
\begin{table}[htbp]
\small
  \centering
  \caption{Equilibrium prices of BEVs and HEVs in 2023}
    \begin{tabular}{lrrrrlrrr}
    \hline
    \hline
    Brand and & \multicolumn{1}{c}{No} & \multicolumn{1}{c}{BEV} & \multicolumn{1}{c}{Pass-} &       & Brand and & \multicolumn{1}{c}{No} & \multicolumn{1}{c}{HEV} & \multicolumn{1}{c}{Pass-} \bigstrut[t]\\
    Nameplate & \multicolumn{1}{c}{sub.} & \multicolumn{1}{c}{sub.} & \multicolumn{1}{c}{through} &       & Nameplate & \multicolumn{1}{c}{sub.} & \multicolumn{1}{c}{sub.} & \multicolumn{1}{c}{through}     \bigstrut[b]\\
    \hline
          &       &       &       &       &       &       &       &  \bigstrut[t]\\
    \multicolumn{4}{l}{\textit{Battery Electric Vehicle}} &       & \multicolumn{4}{l}{\textit{Hybrid electric vehicle}} \bigstrut[b] \\
    Audi Q4 e-tron & 60.12 & 60.32 & 0.94  &       & Hyundai Grandeur & 43.46 & 43.28 & 1.11 \\
    Audi e-tron & 82.03 & 82.35 & 0.90  &       & Hyundai Santa Fe & 38.43 & 38.26 & 1.10 \\
    BMW i4 & 65.53 & 65.34 & 1.04  &       & Hyundai Sonata & 32.42 & 31.98 & 1.27 \\
    BMW i5$^{+}$ & 81.42 & 82.27 &   -    &       & Hyundai Elantra & 25.99 & 25.37 & 1.38 \\
    BMW i7$^{+}$ & 159.62 & 160.02 &  -     &       & Hyundai Kona & 30.02 & 29.83 & 1.12 \\
    BMW iX$^{+}$ & 110.59 & 111.30 &   -    &       & Hyundai Tucson & 31.42 & 31.18 & 1.15 \\
    BMW iX1 & 58.31 & 58.24 & 1.02  &       & Kia K5 & 31.40 & 30.92 & 1.30 \\
    BMW iX3 & 63.92 & 63.87 & 1.01  &       & Kia K8 & 39.87 & 39.60 & 1.17 \\
    GM Korea Bolt EUV & 40.62 & 40.55 & 1.01  &       & Kia Niro & 29.61 & 29.42 & 1.12 \\
    GM Korea Bolt EV & 38.57 & 37.85 & 1.07  &       & Kia Sportage & 34.15 & 33.94 & 1.13 \\
    Genesis G80 & 80.04 & 81.46 & 0.71  &       & Kia Sorento & 39.35 & 39.17 & 1.11 \\
    Genesis GV60 & 62.48 & 63.61 & 0.76  &       & Kia Carnival & 45.62 & 45.48 & 1.09 \\
    Genesis GV70 & 69.84 & 70.99 & 0.75  &       & Lexus ES & 60.46 & 60.27 & 1.12 \\
    Hyundai Ioniq 5 & 53.25 & 52.77 & 1.04  &       & Lexus LS & 137.83 & 137.72 & 1.07 \\
    Hyundai Ioniq 5 N & 69.29 & 70.21 & 0.77  &       & Lexus NX & 62.65 & 62.46 & 1.12 \\
    Hyundai Ioniq 6 & 54.60 & 55.02 & 0.96  &       & Lexus RX & 88.86 & 88.74 & 1.07 \\
    Hyundai Kona & 44.67 & 44.25 & 1.04  &       & Lexus UX & 45.81 & 45.58 & 1.15 \\
    KG Mobility Torres & 45.76 & 45.05 & 1.07  &       & Renault Korea XM3 & 31.34 & 31.11 & 1.14 \\
    Kia EV6 & 55.09 & 54.48 & 1.06  &       & Toyota RAV4 & 42.39 & 42.16 & 1.14 \\
    Kia EV9 & 64.20 & 65.14 & 0.80  &       & Toyota Sienna & 61.28 & 61.10 & 1.11 \\
    Kia Niro & 45.96 & 45.36 & 1.05  &       & Toyota Alphard & 88.52 & 88.41 & 1.07 \\
    Kia Ray & 27.08 & 27.36 & 0.96  &       & Toyota Camry & 38.02 & 37.77 & 1.16 \\
    Mercedes EQA & 57.30 & 57.22 & 1.02  &       & Toyota Crown & 51.74 & 51.54 & 1.12 \\
    Mercedes EQB & 64.19 & 64.19 & 1.00  &       & Toyota Prius & 37.92 & 37.65 & 1.16 \\
    Mercedes EQE$^{+}$ & 84.90 & 85.53 &  -     &       & Toyota Highlander & 62.04 & 61.86 & 1.11 \\
    Mercedes EQE SUV$^{+}$ & 98.97 & 99.89 &   -    &       &       &       &       &  \\
    Mercedes EQS$^{+}$ & 134.77 & 135.25 &  -     &       &       &       &       &  \\
    Mercedes EQS SUV$^{+}$ & 137.22 & 137.62 &   -    &       &       &       &       &  \\
    Tesla Model 3 & 58.89 & 58.64 & 1.07  &       &       &       &       &  \\
    Tesla Model S$^{+}$ & 119.59 & 120.39 &   -    &       &       &       &       &  \\
    Tesla Model X$^{+}$ & 129.10 & 129.42 &   -    &       &       &       &       &  \\
    Tesla Model Y & 54.64 & 54.07 & 1.15  &       &       &       &       &  \\
    Volkswagen ID.4 & 54.31 & 53.51 & 1.09  &       &       &       &       &  \\
          &       &       &       &       &       &       &       &  \bigstrut[b]\\
    \hline
    \hline      
    \end{tabular}%
  \label{tab: counterfactual prices}%
  \tablenotes This table compares the equilibrium prices (in million KRW) of 33 BEVs under the no‐subsidy and current BEV subsidy scenarios, as well as those of 25 HEVs under the no-subsidy and HEV subsidy scenarios, for the year 2023. $^{+}$ denotes models that are not subsidy-eligible. 
  \end{table}
\clearpage

\newpage
\begin{table}[htbp]
  \centering
  \caption{Emission savings of BEV and HEV subsidy policies across different emission factors}
    \begin{tabular}{llrrr}
    \toprule
    Country & Code  & GHG factor & BEV subsidy & HEV subsidy \bigstrut[b]\\
    \midrule
    Norway & NOR   & 31    & 2\,805  & 2\,040  \bigstrut[t]\\
    France & FRA   & 85    & 2\,629  & 2\,048  \\
    Brazil & BRA   & 122   & 2\,509  & 2\,054  \\
    Denmark & DNK   & 213   & 2\,213  & 2\,068  \\    
    Portugal & PRT   & 251   & 2\,089  & 2\,073  \\
    United Kingdom & GBR   & 274   & 2\,015  & 2\,077  \\
    Peru  & PER   & 324   & 1\,852  & 2\,084  \\
    Netherlands & NLD   & 339   & 1\,803  & 2\,087  \\
    United States & USA   & 430   & 1\,507  & 2\,101  \\
    Germany & GER   & 440   & 1\,475  & 2\,102  \\
    \textbf{South Korea} & \textbf{KOR} & \textbf{453} & \textbf{1\,433} & \textbf{2\,104} \\
    World Average & WORLD & 526   & 1\,195  & 2\,115  \\
    Japan & JPN   & 542   & 1\,143  & 2\,118  \\
    China & CHN   & 607   & 932   & 2\,127  \\
    Malaysia & MYS   & 649   & 795   & 2\,134  \\
    Philippines & PHL   & 667   & 737   & 2\,137  \\
    Indonesia & IDN   & 725   & 548   & 2\,145  \\
    Poland & POL   & 778   & 376   & 2\,153  \\
    India & IND   & 826   & 220   & 2\,161  \\
    Kazakhstan & KAZ   & 907   & -43   & 2\,173  \bigstrut[b]\\
    \bottomrule
    \end{tabular}%
  \label{tab: policy counterfactual 3}%
\tablenotes This table compares the emission savings (in thousand tonnes of CO$_2$e) under BEV and HEV subsidy policies across 20 distinct GHG emission factor levels, representing 19 countries and the world average. These nations are selected to ensure a representative geographic distribution that spans the entire spectrum of global GHG emission factors.
\end{table}%
\clearpage

\newpage
\begin{figure}[htbp]
    \centering
	\caption{Trends in BEV and HEV sales shares by country}
	\begin{subfigure}[b]{0.3\textwidth}	
    	\includegraphics[width=\textwidth]{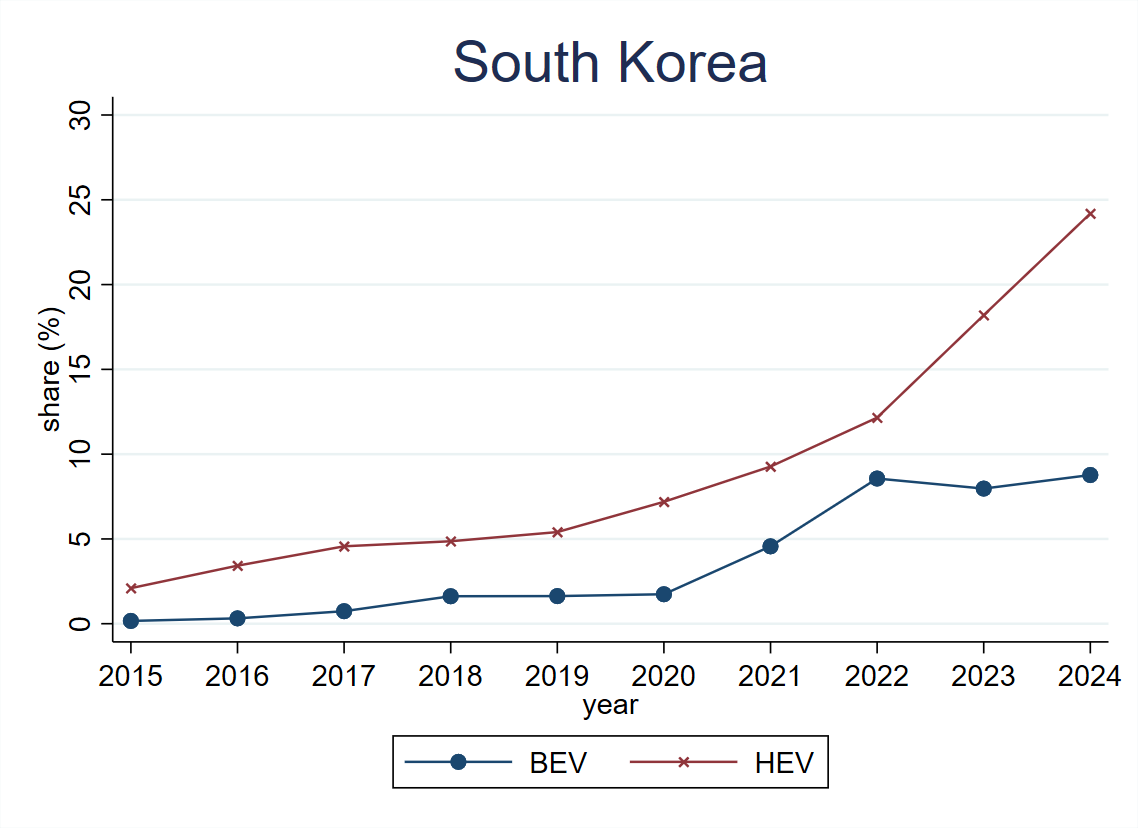}
    	\end{subfigure}
	\begin{subfigure}[b]{0.3\textwidth}	
    	\includegraphics[width=\textwidth]{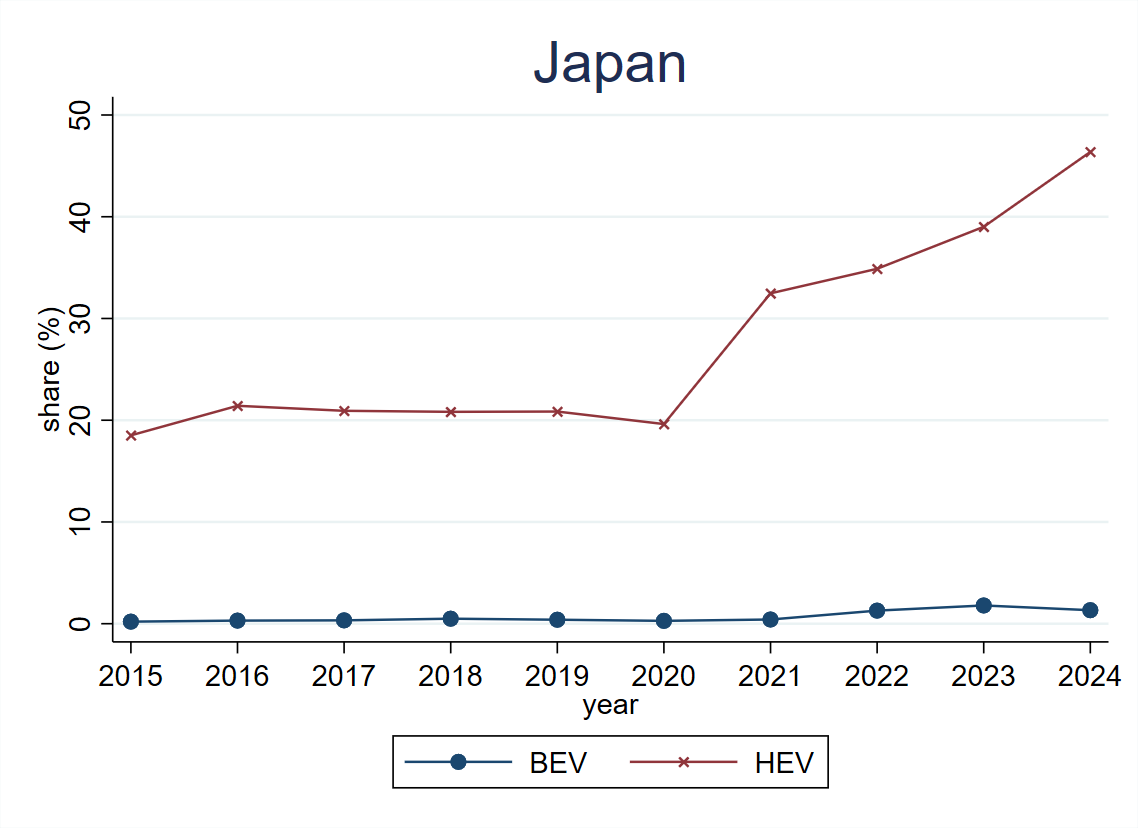}
    	\end{subfigure}
    \begin{subfigure}[b]{0.3\textwidth}	
    	\includegraphics[width=\textwidth]{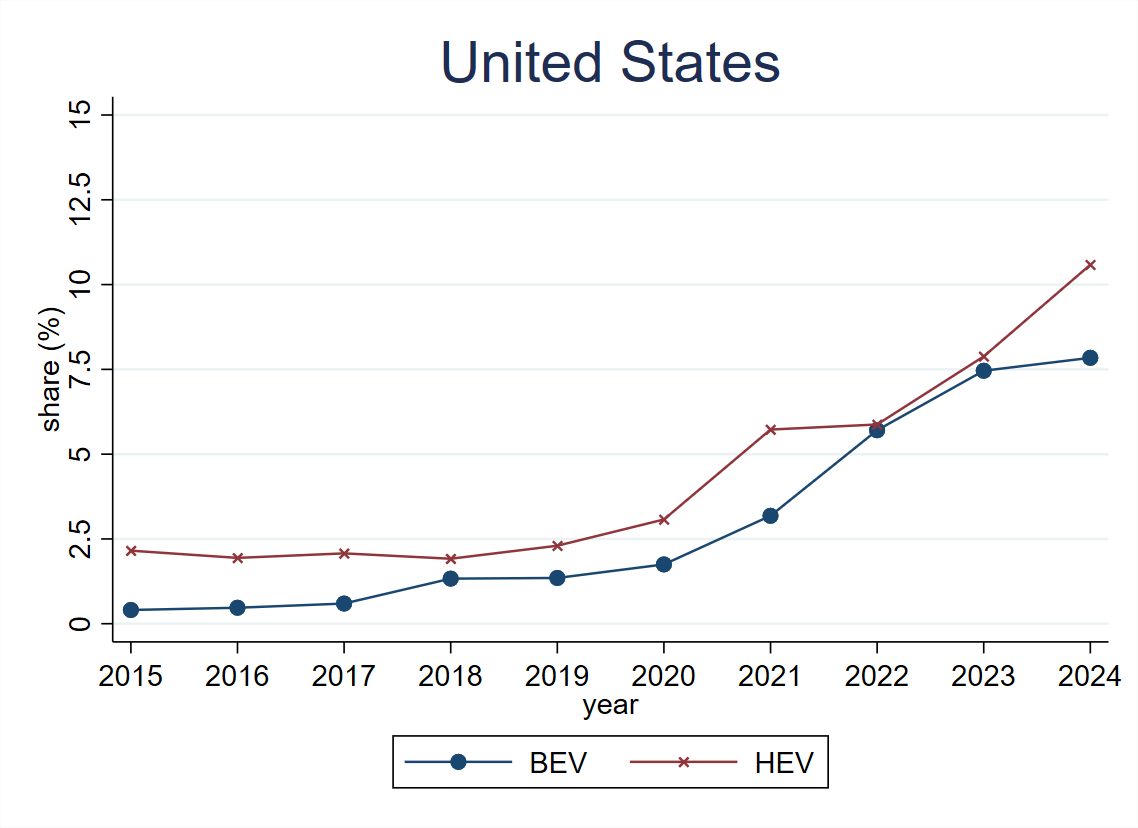}
    	\end{subfigure}

    \begin{subfigure}[b]{0.3\textwidth}	
    	\includegraphics[width=\textwidth]{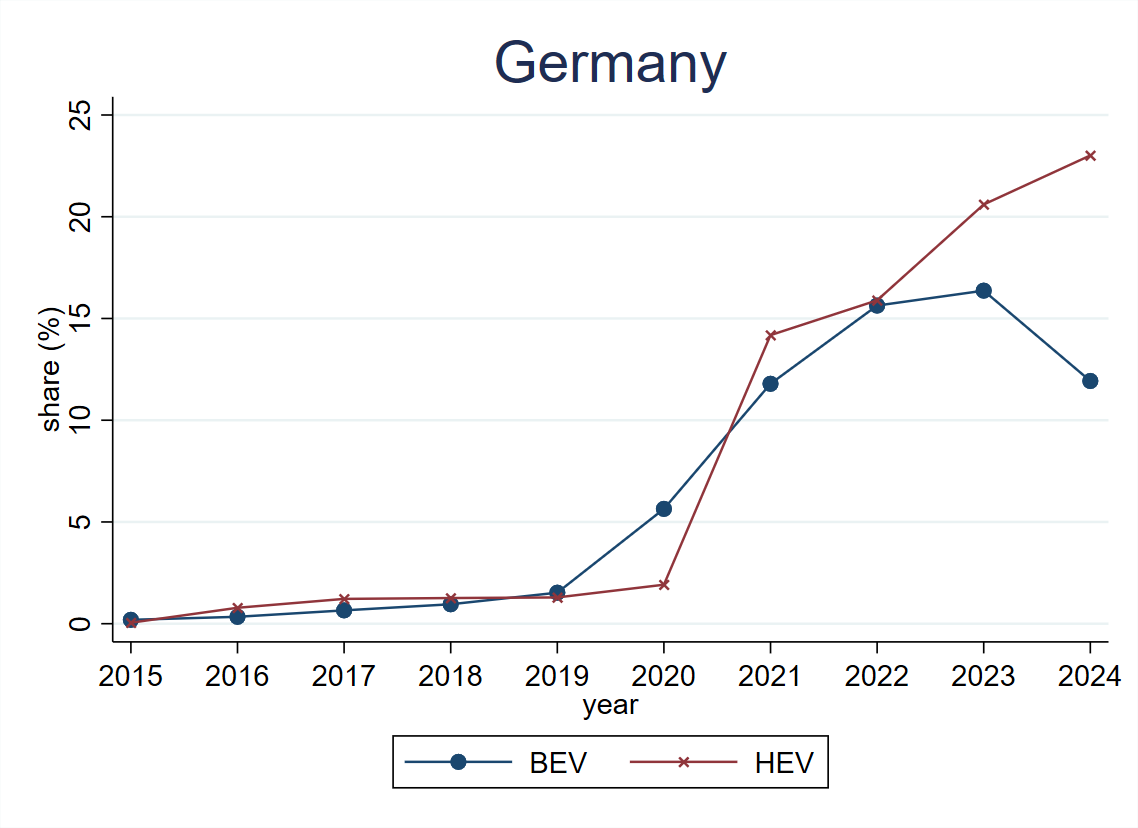}
    	\end{subfigure}
	\begin{subfigure}[b]{0.3\textwidth}	
    	\includegraphics[width=\textwidth]{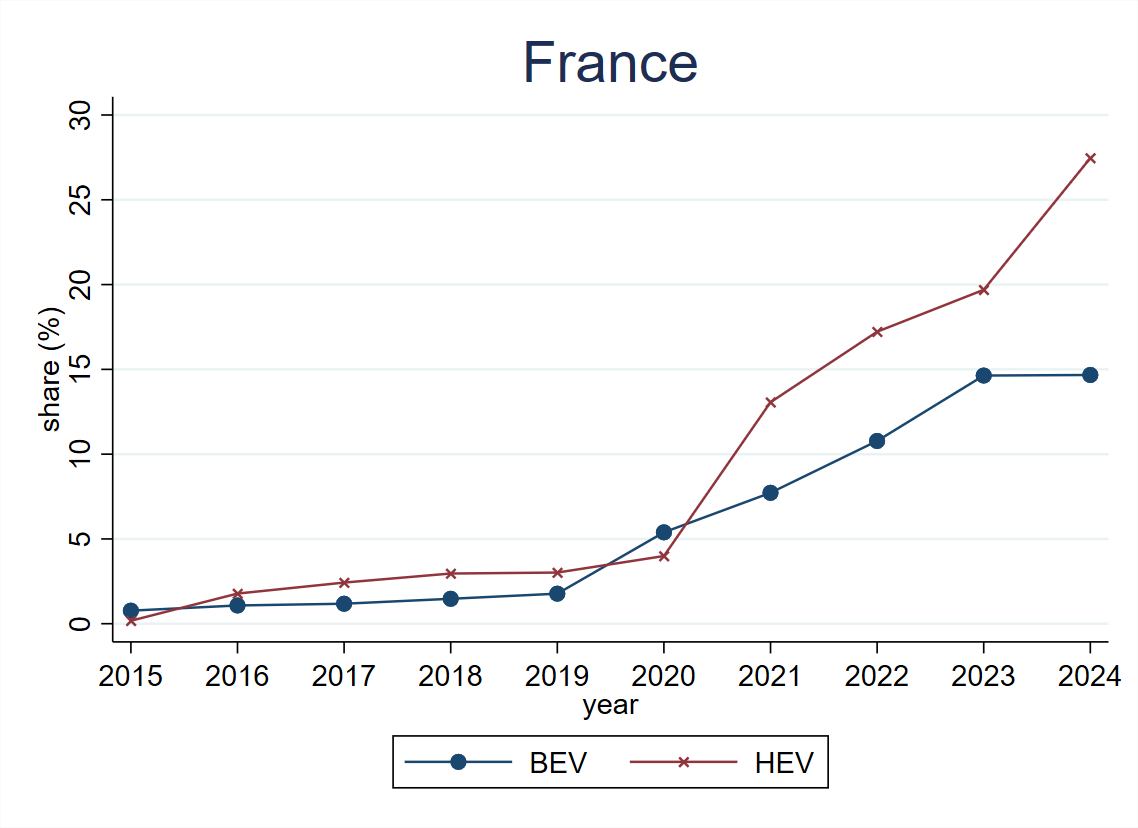}
    	\end{subfigure}
    \begin{subfigure}[b]{0.3\textwidth}	
    	\includegraphics[width=\textwidth]{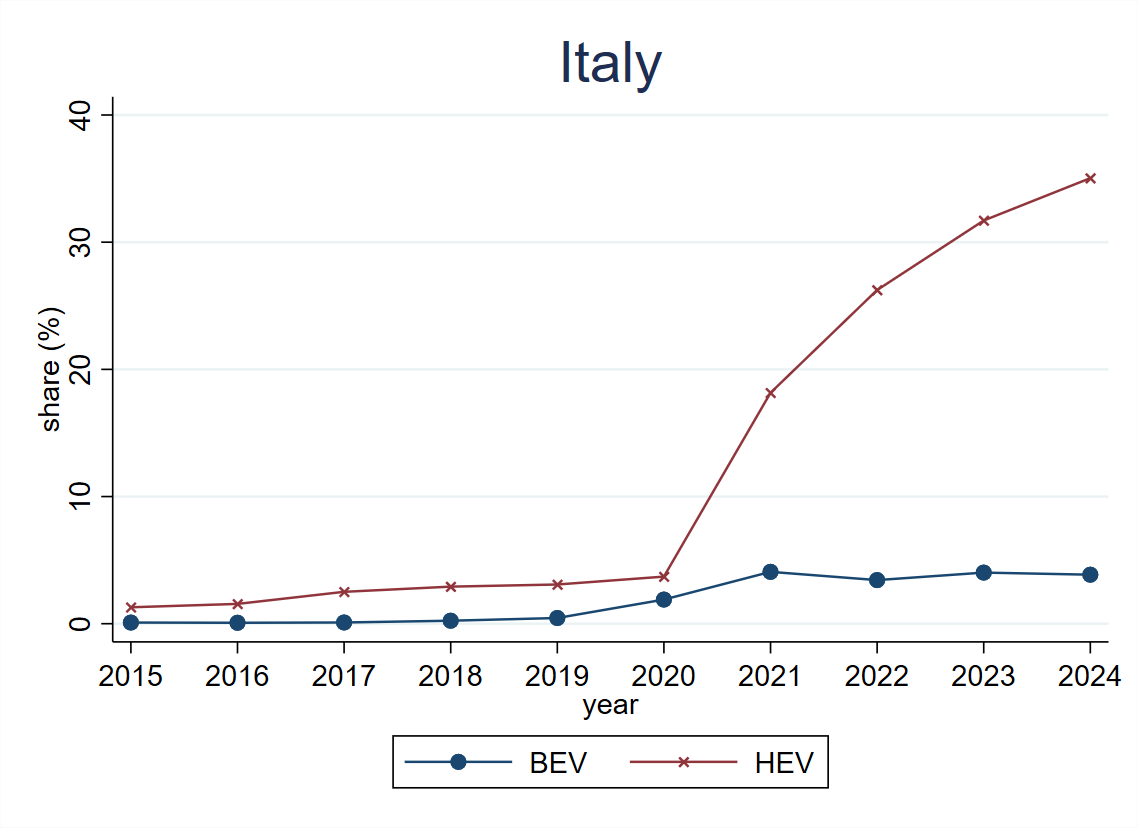}
    	\end{subfigure}

    \begin{subfigure}[b]{0.3\textwidth}	
    	\includegraphics[width=\textwidth]{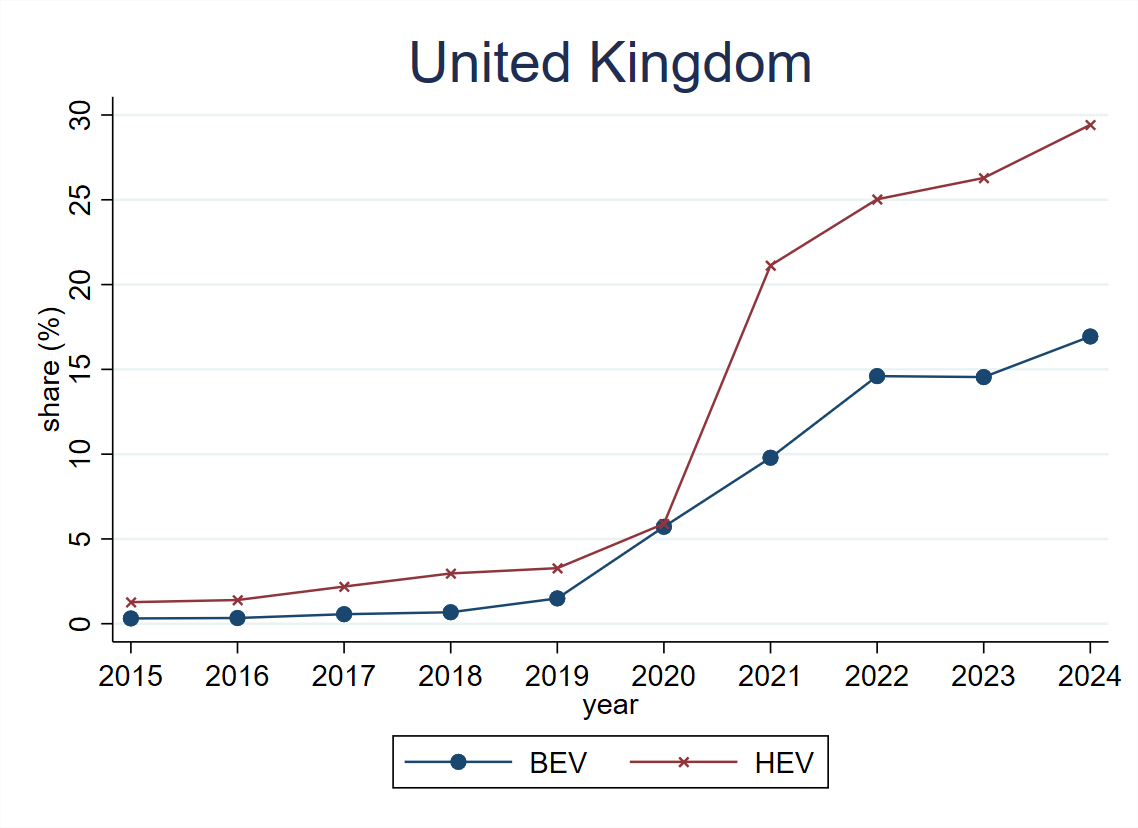}
    	\end{subfigure}
 	\begin{subfigure}[b]{0.3\textwidth}	
    	\includegraphics[width=\textwidth]{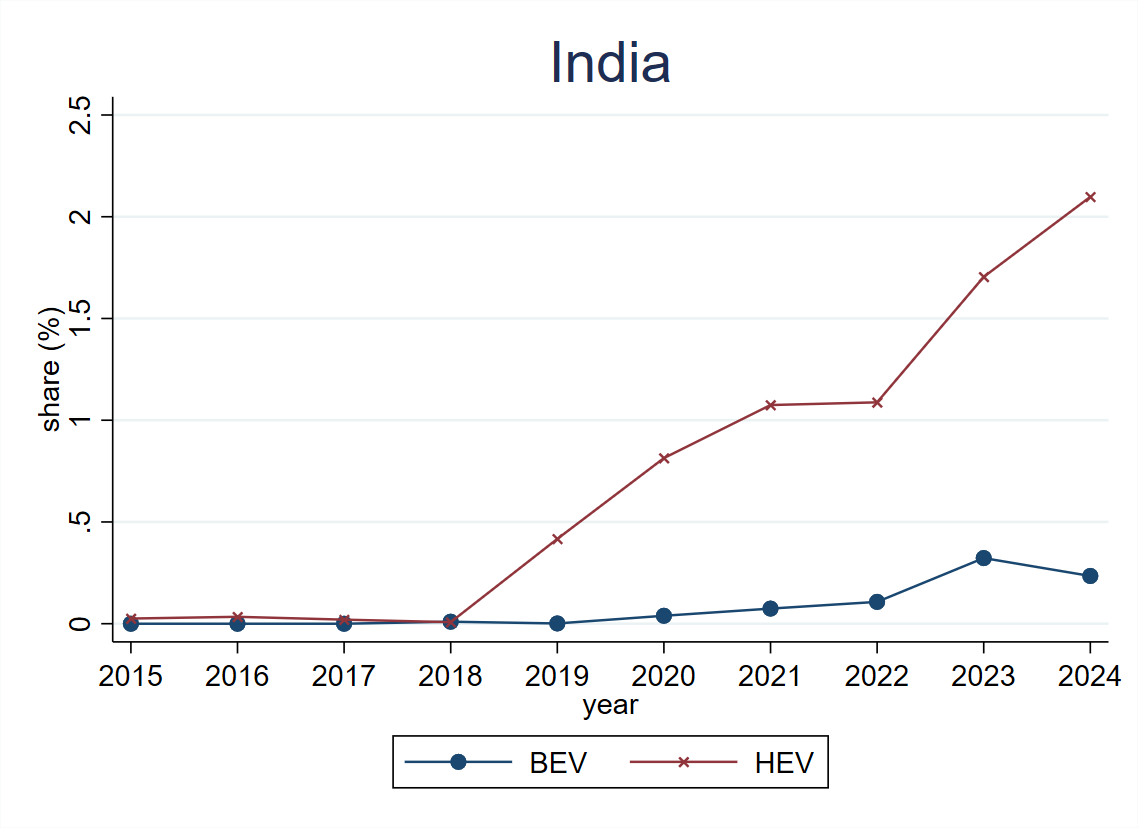}
    	\end{subfigure}
    \begin{subfigure}[b]{0.3\textwidth}	
    	\includegraphics[width=\textwidth]{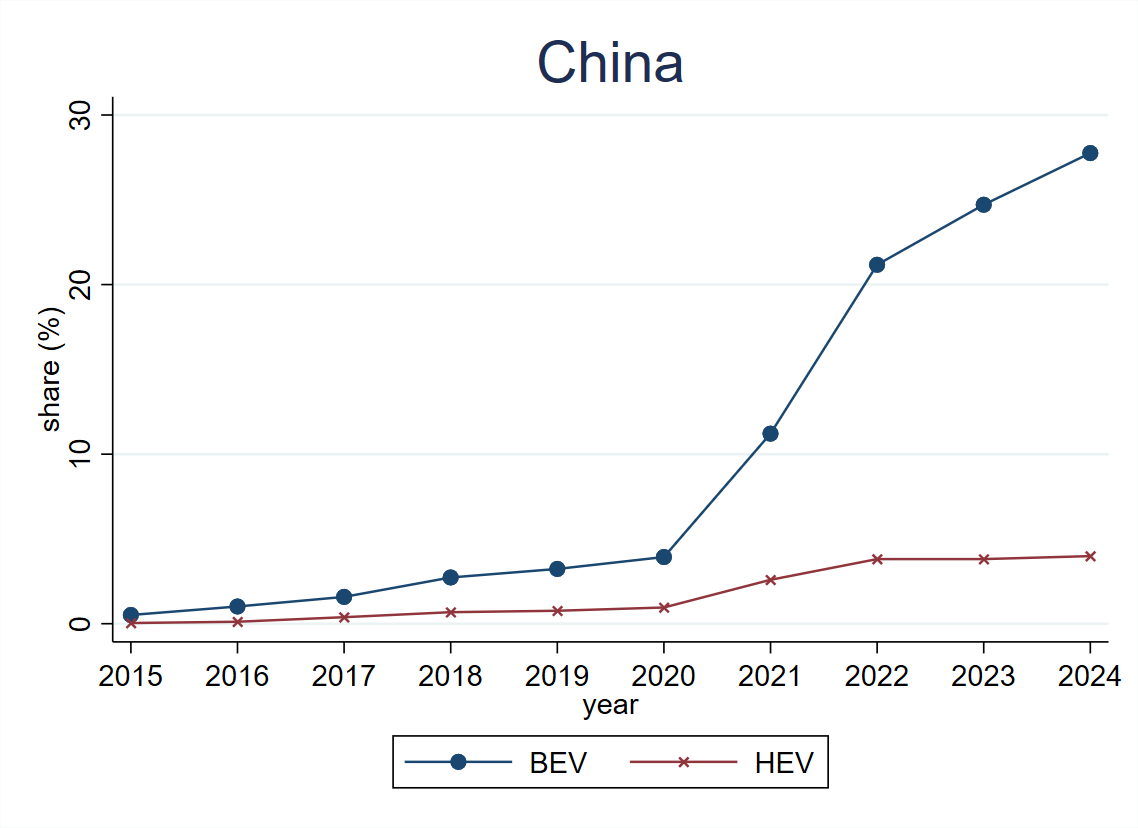}
    	\end{subfigure}
\label{fig: BEV and HEV share}
 \tablenotes This figure shows trends in BEV and HEV shares of new light-vehicle sales by country from 2015 to 2024. PHEV sales are not included in either the BEV or HEV categories. \textit{Source: Korean Automobile \& Mobility Association (KAMA)}.
\end{figure}
\clearpage

\newpage
\begin{figure}[htbp]
	\centering
	\caption{17 provinces in South Korea}
	\includegraphics[width=\textwidth]{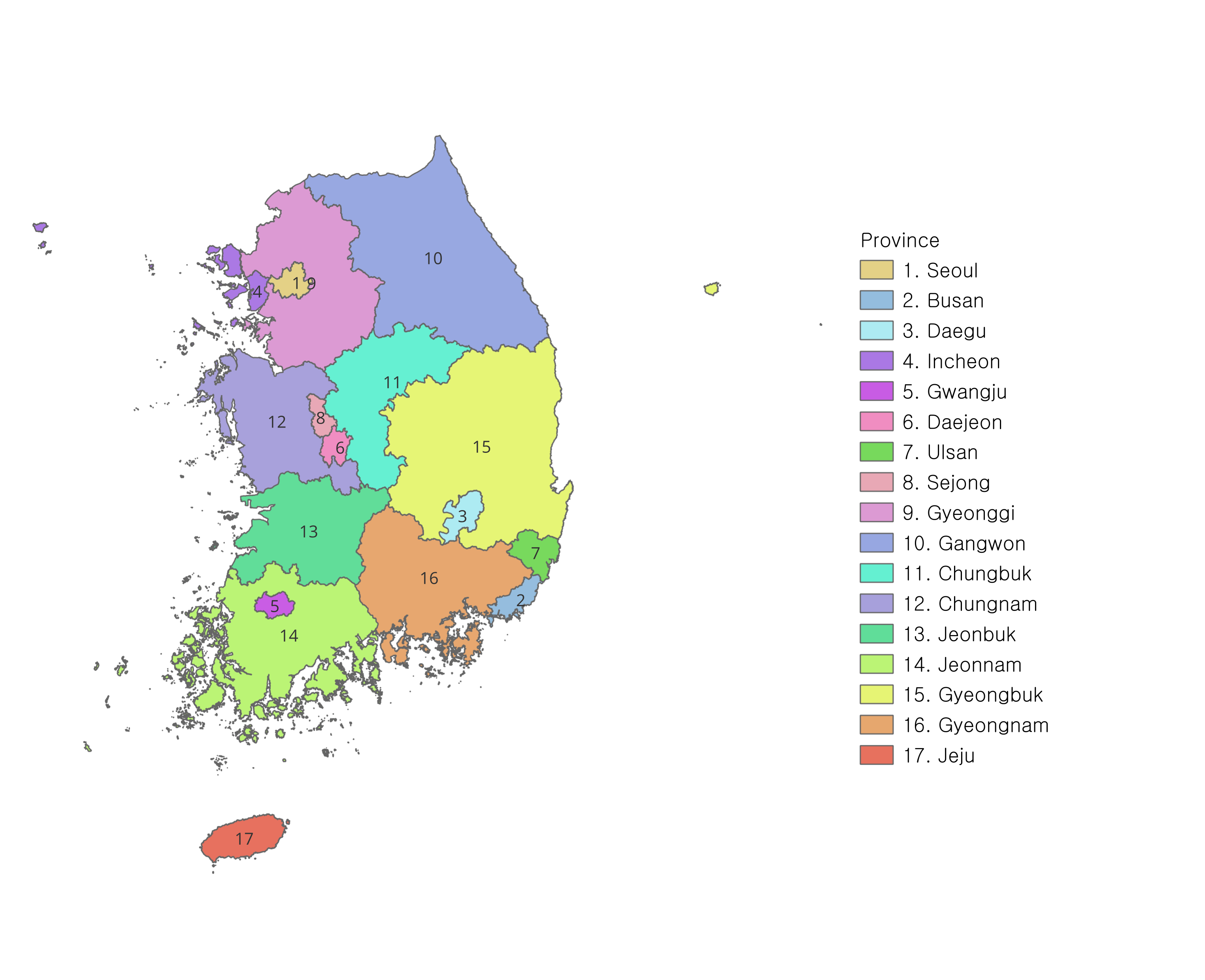}
\label{fig: province illustration}
\tablenotes This figure illustrates how provinces (first-tier administrative divisions) are geographically defined in South Korea.
\end{figure}
\clearpage

\newpage 
\begin{figure}[htbp]
	\centering
	\caption{Sales share trend of each fuel type}
	\includegraphics[width=\textwidth]{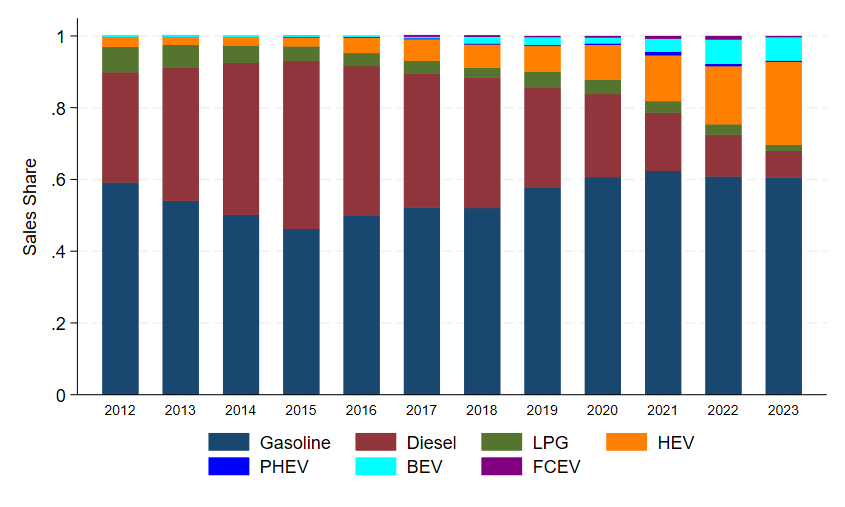}
\label{fig: fuel share trend}
\tablenotes This figure presents sales share of each fuel type during the sample period, 2012--2023.
\end{figure}
\clearpage

\newpage
\begin{figure}[htbp]
	\centering
	\caption{EV subsidy dispersion by year (2016 - 2023)}
	\begin{subfigure}[b]{0.4\textwidth}	
    	\caption{2016}
    	\includegraphics[width=\textwidth]{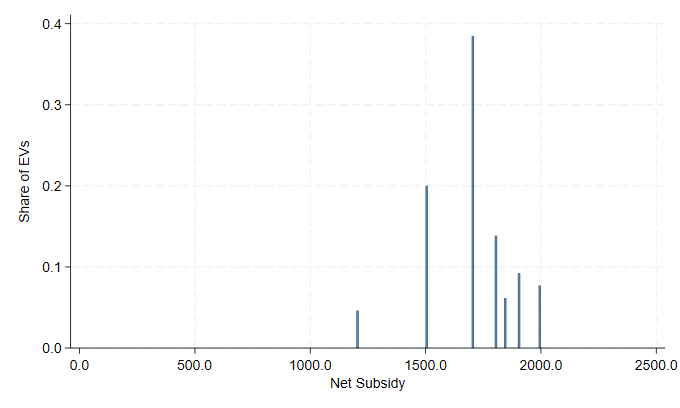}
    	\end{subfigure}
    \qquad
	\begin{subfigure}[b]{0.4\textwidth}	
    	\caption{2017}
    	\includegraphics[width=\textwidth]{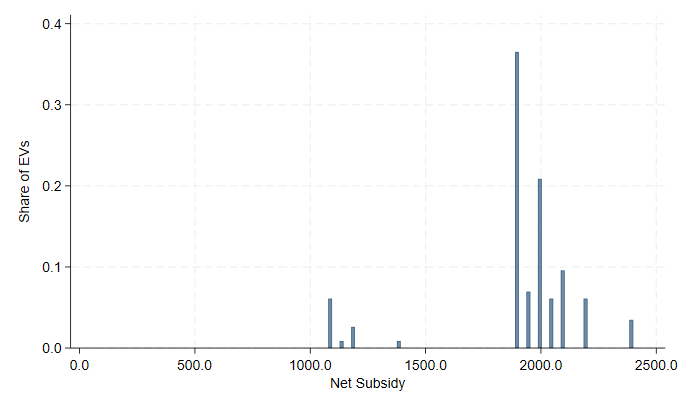}
    	\end{subfigure}
    	\begin{subfigure}[b]{0.4\textwidth}
    \vspace{0.2in}
    	\caption{2018}
    	\includegraphics[width=\textwidth]{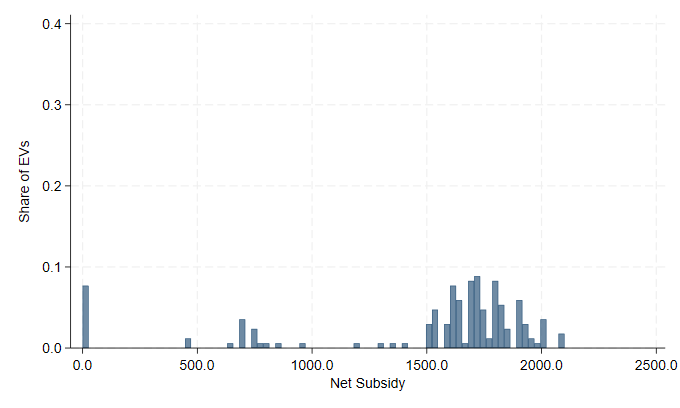}
    	\end{subfigure}
        \qquad
    	\begin{subfigure}[b]{0.4\textwidth}
    	\caption{2019}
    	\includegraphics[width=\textwidth]{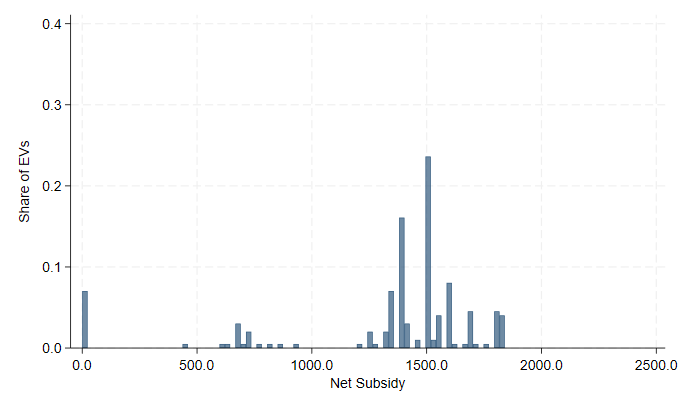}
    	\end{subfigure}
    	\begin{subfigure}[b]{0.4\textwidth}
    \vspace{0.2in}
    	\caption{2020}
    	\includegraphics[width=\textwidth]{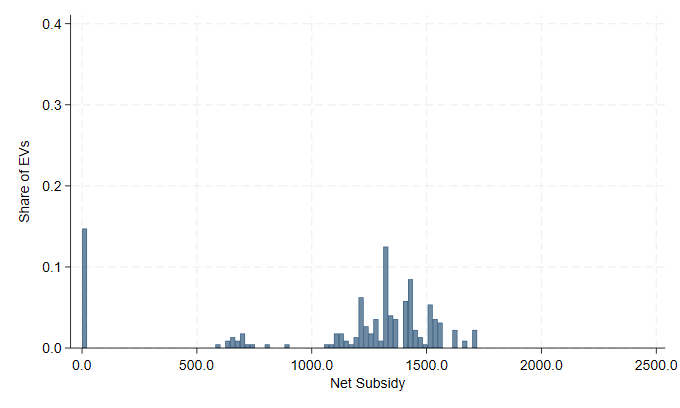}
    	\end{subfigure}
        \qquad
    	\begin{subfigure}[b]{0.4\textwidth}
    	\caption{2021}
    	\includegraphics[width=\textwidth]{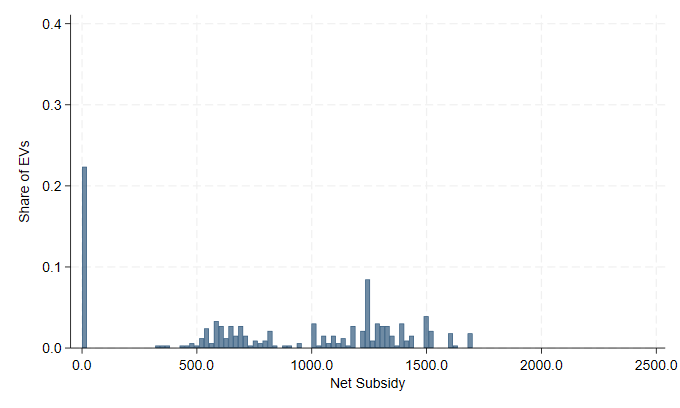}
    	\end{subfigure}
    	\begin{subfigure}[b]{0.4\textwidth}
    \vspace{0.2in}
    	\caption{2022}
    	\includegraphics[width=\textwidth]{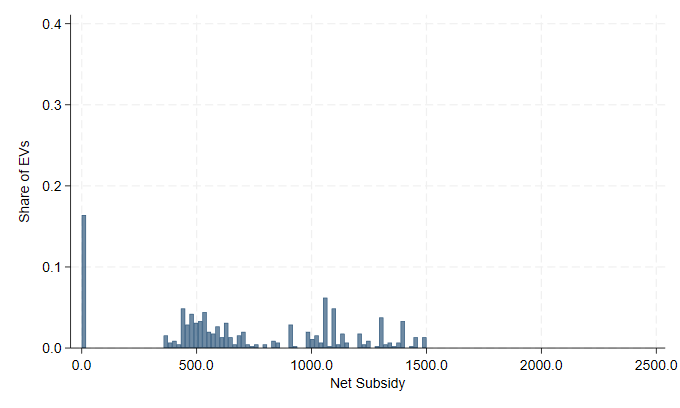}
    	\end{subfigure}
        \qquad
    	\begin{subfigure}[b]{0.4\textwidth}
    	\caption{2023}
    	\includegraphics[width=\textwidth]{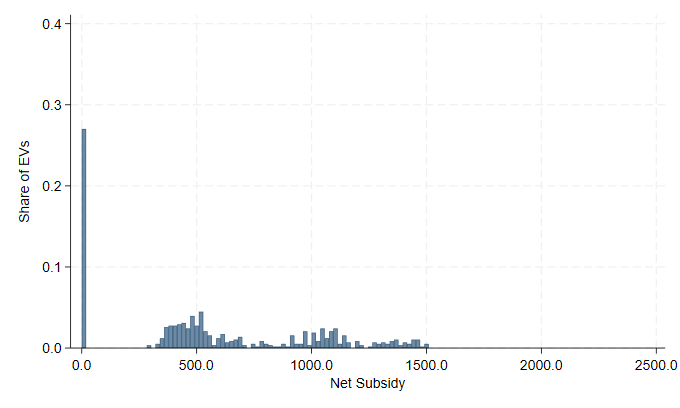}
    	\end{subfigure}
\label{fig: subsidy dispersion}
\tablenotes This figure presents a histogram of the nameplate/province-level total (national + local) BEV subsidy for each year (2016-2023).
\end{figure}
\clearpage

\newpage
\begin{figure}[htbp]
	\centering
	\caption{Nameplate-level local subsidies of selected provinces}
    \includegraphics[width=0.8\textwidth]{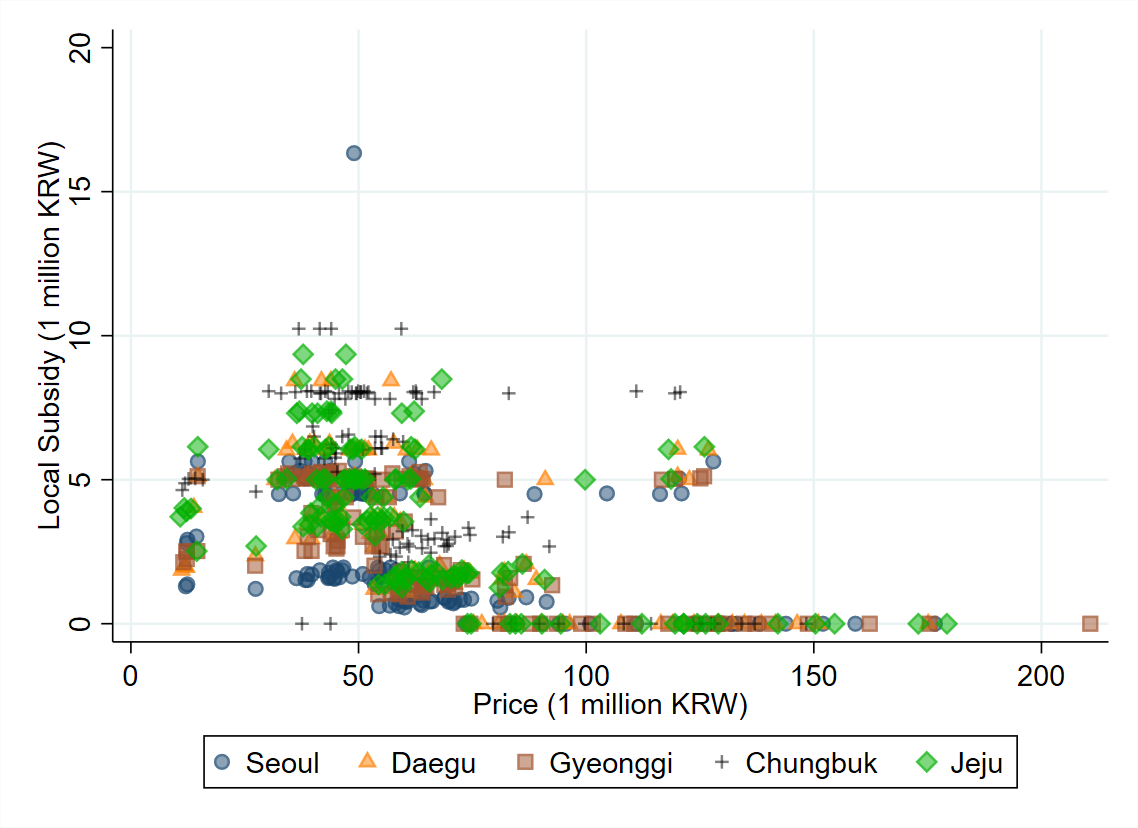}
\label{fig: scatter localsub}
\tablenotes This figure presents a scatter plot of the nameplate-level local BEV subsidy for the following selected provinces (Seoul, Daegu, Gyeonggi, Chungbuk, and Jeju) over the entire sample period (2012-2023). Each point represents a province-nameplate-level observation.
\end{figure}
\clearpage

\newpage
\begin{figure}[htbp]
	\centering
	\caption{Diversion by fuel types}
	\begin{subfigure}[b]{0.48\textwidth}	
    	\caption{From BEV}
    	\includegraphics[width=\textwidth]{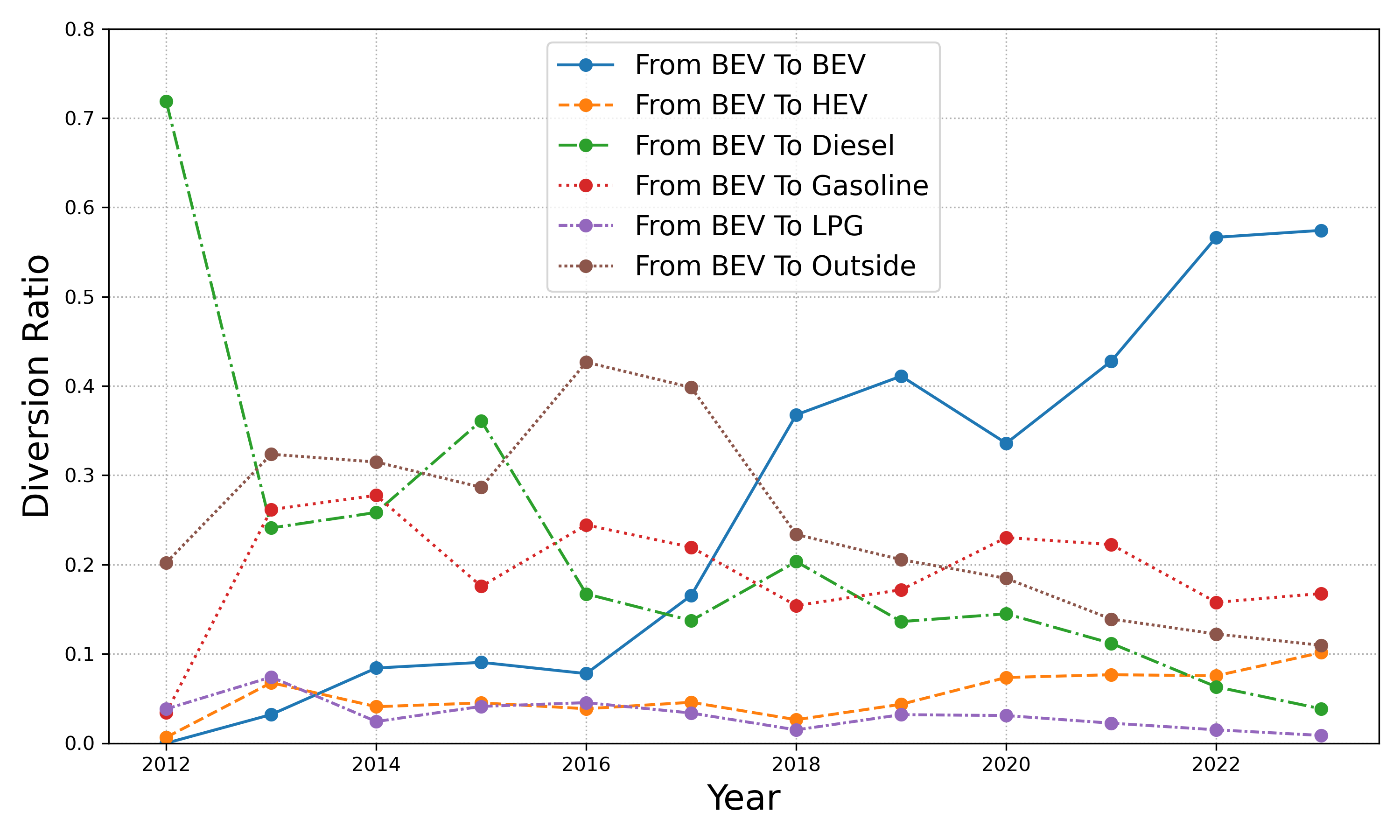}
    	\end{subfigure}
	\begin{subfigure}[b]{0.48\textwidth}	
    	\caption{From HEV}
    	\includegraphics[width=\textwidth]{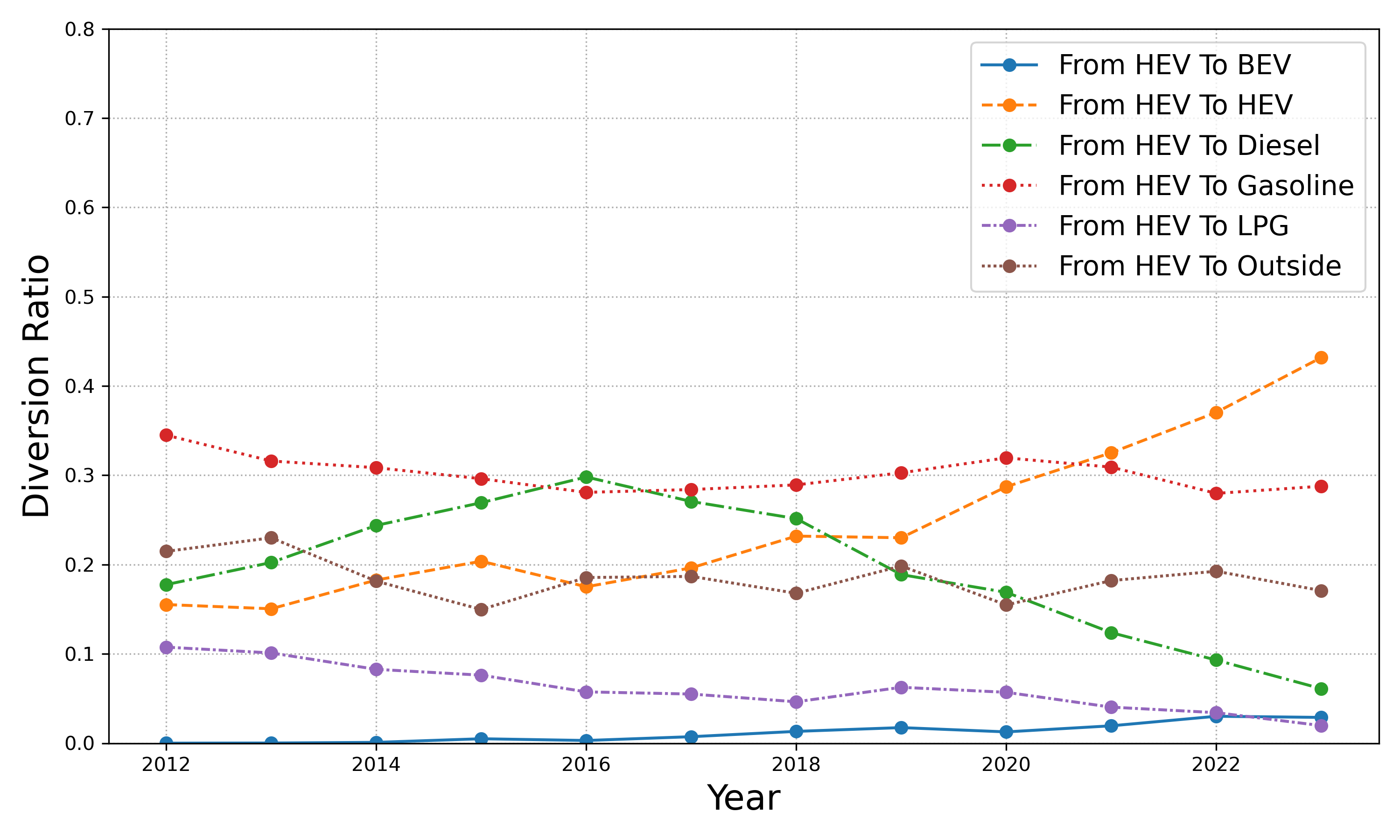}
    	\end{subfigure}
    	\begin{subfigure}[b]{0.48\textwidth}
    \vspace{0.2in}
    	\caption{From Diesel}
    	\includegraphics[width=\textwidth]{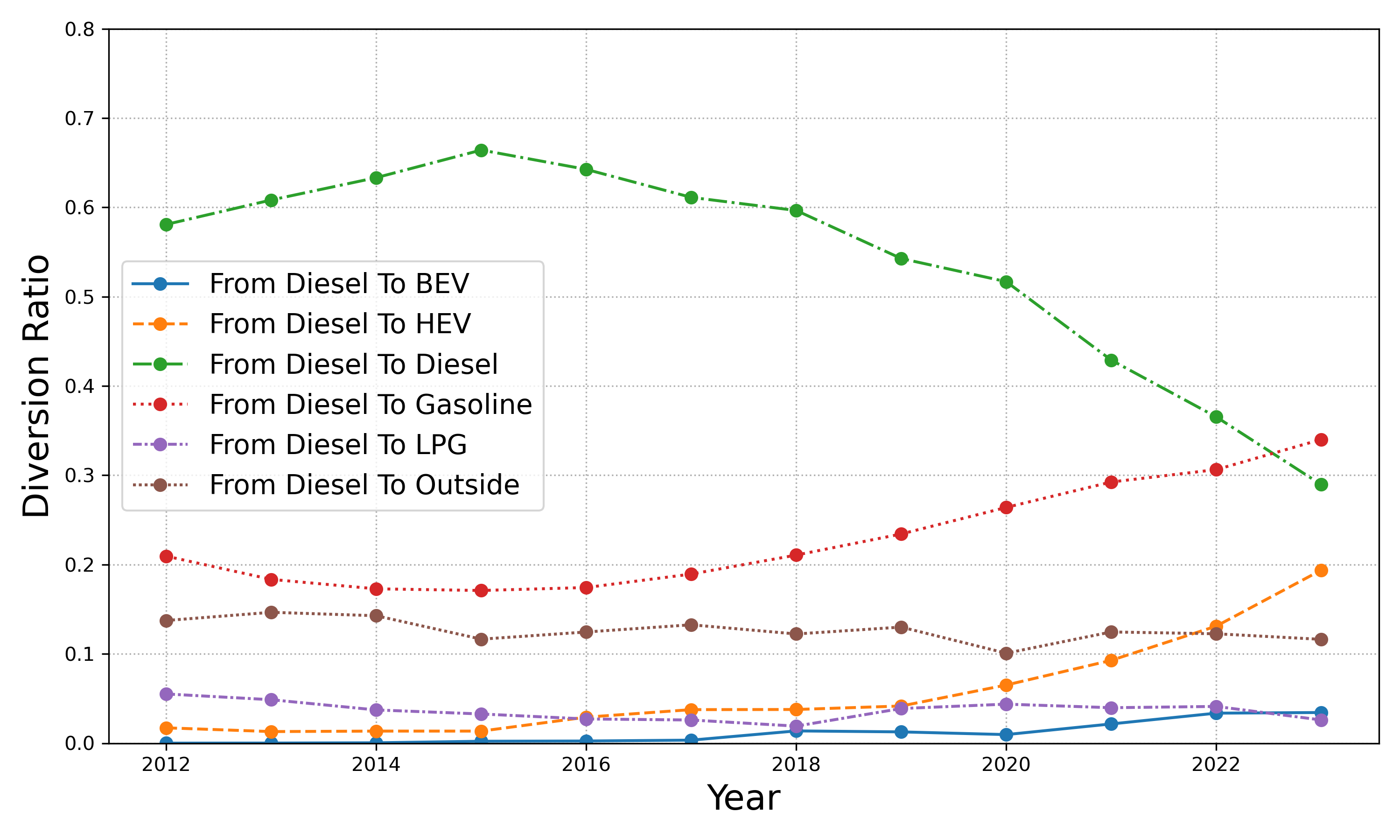}
    	\end{subfigure}
    	\begin{subfigure}[b]{0.48\textwidth}
    	\caption{From Gasoline}
    	\includegraphics[width=\textwidth]{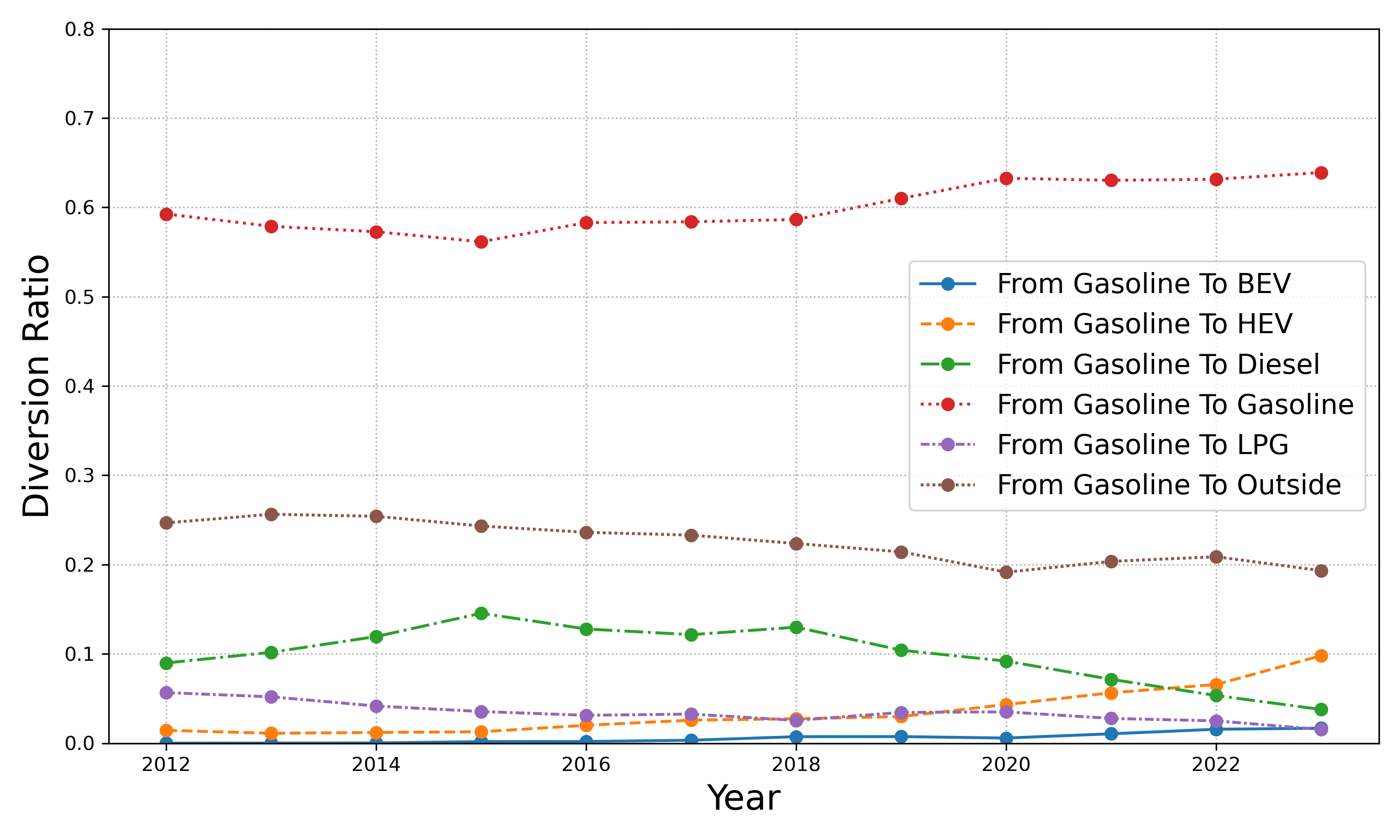}
    	\end{subfigure}
    	\begin{subfigure}[b]{0.48\textwidth}
    \vspace{0.2in}
    	\caption{From LPG}
    	\includegraphics[width=\textwidth]{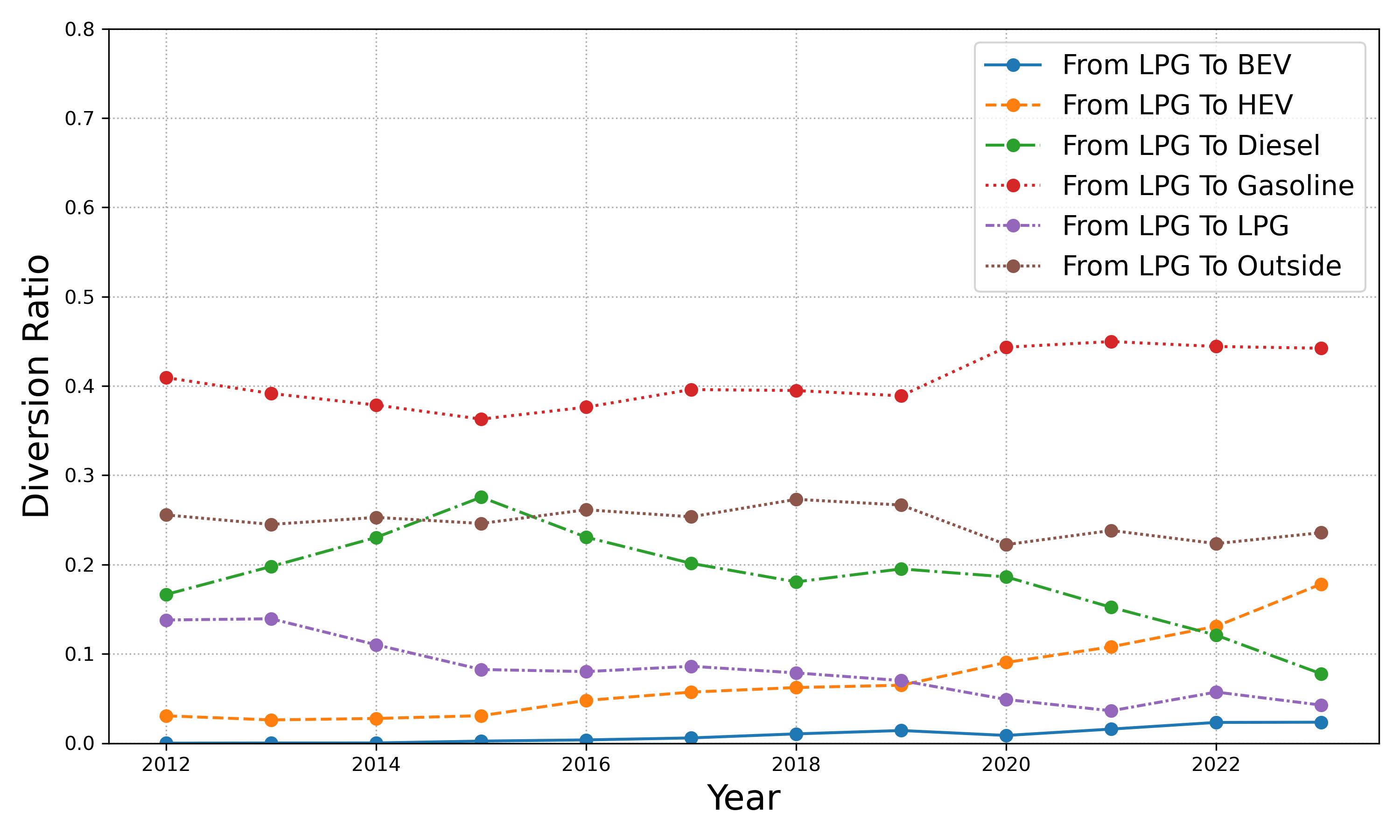}
    	\end{subfigure}
\label{fig: diversion by fuel type}
\tablenotes This figure presents the annual diversion ratios across fuel types between 2018 and 2023, which are computed as follows. First, we calculate the diversion ratio from product $j$ with fuel type $F_1$ to fuel type $F_2$ (or the outside option) in a market: $D_{j,F_2} \equiv \sum_{k \in \mathcal{J}_{F_2}} D_{j,k} \equiv \sum_{k \in \mathcal{J}_{F_2}} \frac{\partial s_k / \partial p_j}{- \partial s_j / \partial p_j},$ where $\mathcal{J}_{F_2}$ denotes the set of products with fuel type $F_2$. Then, we compute the sales-weighted average diversion ratio across fuel types in the market: $D_{F_1,F_2} \equiv \sum_{j \in \mathcal{J}_{F_1}} w_j D_{j,F_2}$, where $w_j$ is product $j$'s share within fuel type $F_1$. Finally, we average the diversion ratios across the 16 markets in each year: $\bar{D}_{F_1,F_2}.$
\end{figure}
\clearpage

\newpage
    \begin{figure}[!t]
    	\centering
    	\caption{Within-group market shares for each fuel type (2023)}
    	\includegraphics[width=.8\textwidth]{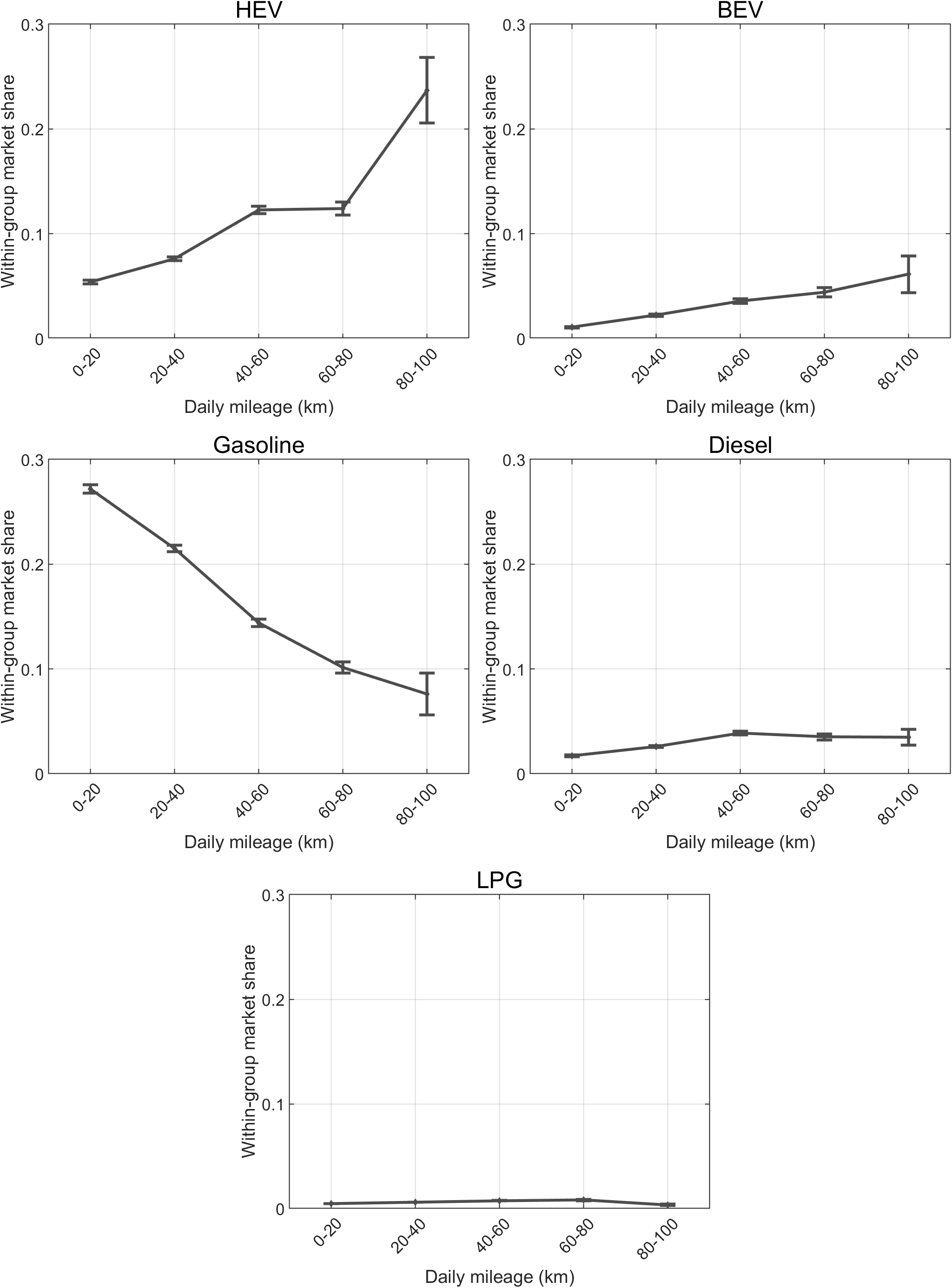}
    \label{fig: mean_ci_2023}
    \tablenotes Each panel in this figure presents nationwide averages (equivalent to within-group market shares) and two-standard-deviation intervals for the predicted choice probabilities of a specific fuel type, calculated separately for each of the five daily mileage groups. To compute these statistics, we first sort the 16\,000 simulated consumers in 2023 (16 provinces $\times$ 1\,000 individuals) into five groups based on their daily mileage: 0--20, 20--40, 40--60, 60--80, and 80--100~km. Next, for each individual $i$, we predict the choice probability for fuel type $g$ as $\hat{s}_{ig} = \sum_{j \in \mathscr{J}_g} \hat{s}_{ij}$. We then calculate the nationwide averages and two-standard-deviation intervals of these probabilities for each mileage group. To ensure that the population share of each regional market is proportionally reflected, we conduct weighted resampling using market sizes as weights.
    \end{figure}
\clearpage

\newpage
\begin{figure}[htbp]
    \centering
    \caption{Market-level distribution of the marginal abatement return by fuel type}
    \includegraphics[width=0.8\textwidth]{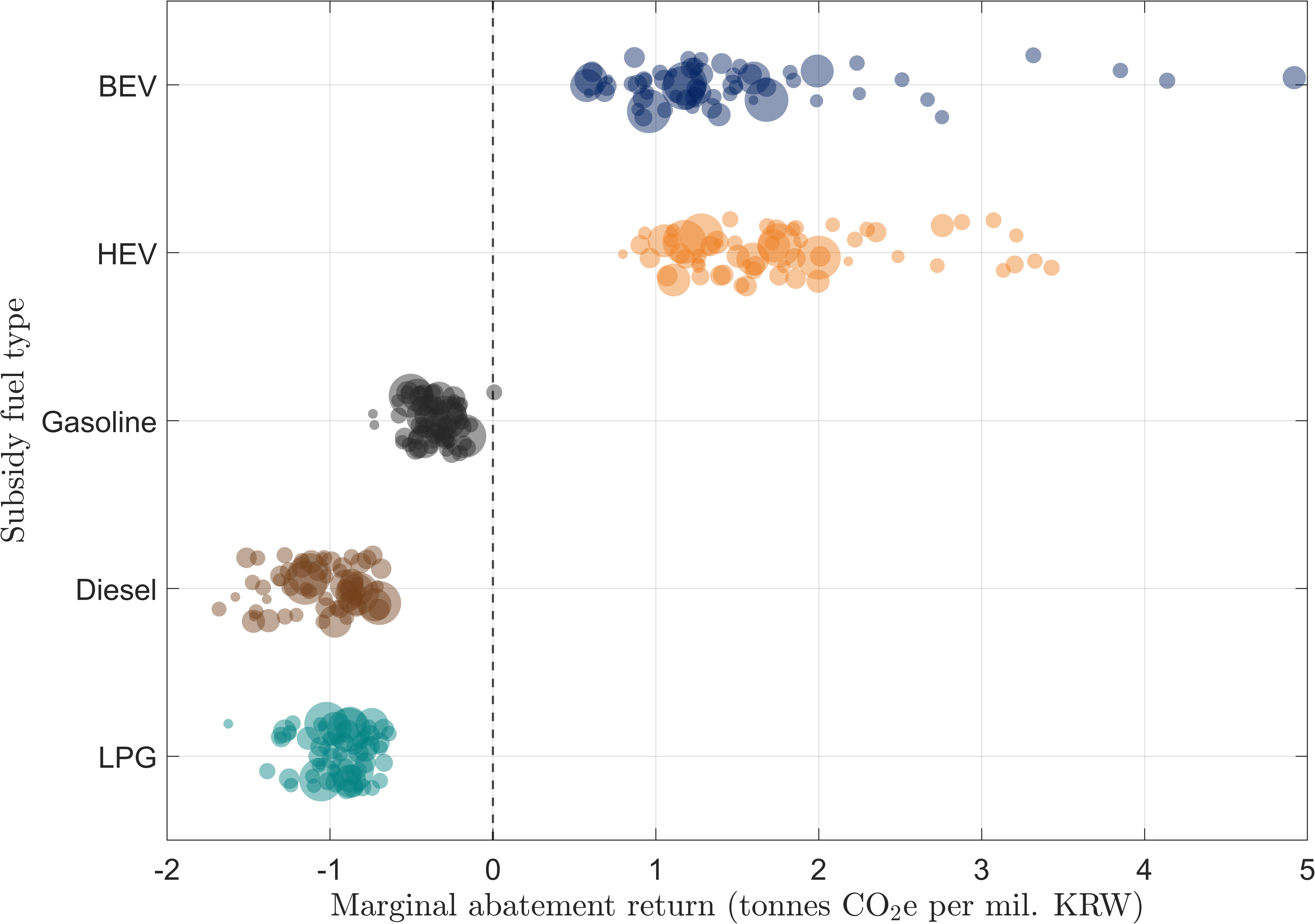}
    \label{fig: abatement_return_bubble_plot}
\tablenotes The figure plots the distributions of market-level marginal abatement returns by fuel type with the unrestricted subsidy pass-through. Each bubble represents the abatement return for a specific fuel-type in a given market, with bubble sizes scaled by the fuel-type’s total sales in each market.
\end{figure}
\clearpage

\end{document}